\def\imod#1{\allowbreak\mkern10mu({\operator@font mod}\,\,#1)}
    \declaretheorem{theorem}
    \declaretheorem{corollary}
    \declaretheorem{lemma}
    \declaretheoremstyle[qed=$\square$]{definitionwithend}
    \declaretheorem[style=definitionwithend]{definition}
    \declaretheorem[style=definitionwithend]{assumption}
    \declaretheorem[style=definitionwithend]{example}
    \declaretheorem[style=definitionwithend]{remark} \PassOptionsToPackage{numbers, compress}{natbib}
\definecolor{gold}{rgb}{0.85,0.65,0}
\newcommand{\be}{\begin{eqnarray}}
\newcommand{\ee}[1]{\label{#1}\end{eqnarray}}
\newcommand{\ese}{\end{eqnarray*}}
\newcommand{\bse}{\begin{eqnarray*}}
\def\beq{\begin{equation}}
\def\eeq{\end{equation}}
\def\fnote#1{\footnote}
\newcommand{\ceil}[1]{\ensuremath{\left\lceil #1 \right\rceil}}
\newcommand{\abs}[1]{\ensuremath{\left\lvert #1 \right\rvert}}
\newcommand{\by}{\times}
\newcommand{\norm}[1]{\ensuremath{\left\lVert #1 \right\rVert}}
\newcommand{\ip}[1]{\ensuremath{\left\langle #1 \right\rangle}}
\newcommand{\grad}{\ensuremath{\nabla}}
\def\R{{\mathbb{R}}}
\def\cR{{\cal R}}
\def\cS{{\cal S}}
\DeclareMathOperator{\Opt}{Opt}
\DeclareMathOperator*{\argmin}{arg\,min}
\DeclareMathOperator{\Diag}{Diag}
\DeclareMathOperator{\tr}{tr}
\DeclareMathOperator{\inter}{int}
\DeclareMathOperator{\conv}{conv}
\DeclareMathOperator{\clconv}{\overline{conv}}
\def\log{\mathop{{\rm log}}}
    \newcommand{\set}[1]{\left\{#1\right\}}
 \newcommand{\Sequal}{\mathfrak{S}}
\begin{document}
\title{The Generalized Trust Region Subproblem: solution complexity and convex hull results}
\author[1]{Alex L.\ Wang}
\author[1]{Fatma K{\i}l{\i}n\c{c}-Karzan}
\affil[1]{Carnegie Mellon University, Pittsburgh, PA, 15213, USA.}
\date{\today}

\maketitle

\begin{abstract}
We consider the Generalized Trust Region Subproblem~(GTRS) of minimizing a nonconvex quadratic objective over a nonconvex quadratic constraint.
A lifting of this problem recasts the GTRS as minimizing a linear objective subject to two nonconvex quadratic constraints.
Our first main contribution is structural: we give an explicit description of the convex hull of this nonconvex set in terms of the generalized eigenvalues of an associated matrix pencil.
This result may be of interest in building relaxations for nonconvex quadratic programs.
Moreover, this result allows us to reformulate the GTRS as the minimization of two convex quadratic functions in the original space.
Our next set of contributions is algorithmic: we present an algorithm for solving the GTRS up to an $\epsilon$ additive error based on this reformulation.
We carefully handle numerical issues that arise from inexact generalized eigenvalue and eigenvector computations and establish explicit running time guarantees for these algorithms. Notably, our algorithms run in \emph{linear (in the size of the input) time}. Furthermore, our algorithm for computing an $\epsilon$-optimal solution has a slightly-improved running time dependence on $\epsilon$ over the state-of-the-art algorithm. Our analysis shows that the dominant cost in solving the GTRS lies in solving a generalized eigenvalue problem---establishing a natural connection between these problems. Finally, generalizations of our convex hull results allow us to apply our algorithms and their theoretical guarantees directly to equality-, interval-, and hollow-constrained variants of the GTRS. This gives the first linear-time algorithm in the literature for these variants of the GTRS.
 \end{abstract}

\section{Introduction}\label{sec:Intro}
In this paper, we study the \emph{Generalized Trust-Region Subproblem}~(GTRS),
which is defined as 
\begin{align}\label{eq:GTRS}
\Opt\coloneqq \inf_{x\in\R^n}\set{q_0(x) :~ q_1(x) \leq 0},
\end{align}
where $q_0:\R^n\to\R$ and $q_1:\R^n\to\R$ are general quadratic functions of the form $q_i(x) = x^\top A_i x + 2b_i^\top x + c_i$. Here, $A_i\in \R^{n\by n}$ are symmetric matrices, $b_i\in\R^n$ and $c_i\in\R$. We are interested, in particular, in the case where $q_0$ and $q_1$ are both nonconvex, i.e., $A_i$ has at least one negative eigenvalue for both $i=0,1$.

Problem~\eqref{eq:GTRS}, introduced and studied by \citet{more1993generalizations,SternWolkowicz1995}, generalizes the classical \emph{Trust-Region Subproblem}~(TRS)~\cite{Conn.et.al.2000} in which one is asked to optimize a nonconvex quadratic objective over a Euclidean ball. The TRS is an essential ingredient of trust-region methods that are commonly used to solve continuous nonconvex optimization problems~\cite{Conn.et.al.2000,NocedalWright2000numerical,PongWolkowicz2014} and also arises in applications such as robust optimization~\cite{BenTalDenHertog2014,Ho-NguyenKK2016RO}.
On the other hand, the GTRS has applications in nonconvex quadratic integer programs, signal processing, and compressed sensing; see \cite{BuchheimDSPP2013,adachi2019eigenvalue,jiang2020linear} and references therein for more applications.

Although the TRS, as stated, is nonlinear and nonconvex, it is well-known that its semidefinite programming~(SDP) relaxation is exact. Consequently, the TRS and a number of its variants can be solved in polynomial time via SDP-based techniques~\cite{Rendl_Wolkowicz_97,FortinWolkowicz2004} or using specialized nonlinear algorithms~\cite{Gould_LRT_99,More_Sorensen_83}. In fact, 
custom iterative methods with linear (in the size of the input) running times have been shown in a few works. \citet{HazanKoren2016} proposed an algorithm to solve the TRS (as well as the GTRS when $A_1$ is positive definite) based on repeated approximate eigenvector computations.  This algorithm runs in time
\begin{align}
\label{eq:hazan_koren_runtime}
\tilde O\left(\frac{N\sqrt{\kappa_{\text{HK}}}}{\sqrt{\epsilon}} \log\left(\frac{n}{p}\right)\log\left(\frac{\kappa_{\text{HK}}}{\epsilon}\right)\right),
\end{align}
where $N$ is the number of nonzero entries in the matrices $A_0$ and $A_1$, $\epsilon$ is the additive error, $n$ is the dimension of the problem, $p$ is the failure probability, and $\kappa_{\text{HK}}$ is a condition number.
This was the first algorithm in the literature 
shown to achieve a linear time complexity.
Here, and in the remainder of the paper, the term ``linear'' is used to describe running times that scale at most linearly with $N$ but may depend arbitrarily on its other parameters.
Afterwards, \citet{Ho-NguyenKK2017} presented another linear-time algorithm for the TRS with a slightly better overall complexity, eliminating the $\log(\kappa_{\text{HK}}/{\epsilon})$ term. Their approach reformulates the TRS as minimizing a \textit{convex} quadratic objective over the Euclidean ball, and solving the resulting smooth convex optimization problem via Nesterov's accelerated gradient descent method. In contrast to  \cite{HazanKoren2016}, this convex reformulation approach requires only a single minimum eigenvalue computation. \citet{WangXia2016} also suggested using Nesterov's algorithm in the case of the interval-constrained TRS.  

The GTRS shares a number of nice properties of the TRS. For example, by the S-lemma, it is well-known that the GTRS also admits an exact SDP reformulation under the Slater condition~\cite{FradkovYakubovich1979,PolikTerlaky2007}.
Thus, while quadratically-constrained quadratic programming is NP-hard in general, there are polynomial-time SDP-based algorithms for solving the GTRS.
Nevertheless, the relatively large computational complexity of SDP-based algorithms prevents them from being applied as a black box to solve large-scale instances of the GTRS. A variety of custom approaches have been developed to solve the GTRS; for earlier work on this domain see \cite{more1993generalizations,SternWolkowicz1995,feng2012duality} and references therein.

One line of work has developed algorithms for solving the GTRS when the matrices $A_0$ and $A_1$ are simultaneously diagonalizable~(SD) (see \citet{jiang2016simultaneous} and references therein for background on the SD condition). 
Under the SD condition, along with certain restrictions on the quadratics $q_0$ and $q_1$, \citet{BenTalTeboulle1996} provide a reformulation of the interval-constrained GTRS as a convex minimization problem with linear constraints. 
More recently, \citet{BenTalDenHertog2014} show that there is a second order cone programming~(SOCP) reformulation of the GTRS in a lifted space under the SD condition.
Subsequent work by \citet{locatelli2015some} extends \citet{BenTalDenHertog2014} by illustrating some additional settings in which the SOCP reformulation is tight. 
Under the SD condition, \citet{fallahi2018minimizing} exploit the separable structure of the problem and, using Lagrangian duality, they suggest a solution procedure based on solving a univariate convex minimization problem. 
\citet{salahi2018efficient} derive an algorithm for solving the interval-constrained GTRS by exploiting the structure of the dual problem under the SD condition.
By applying a simultaneous block diagonalization approach, \citet{jiang2018socp} generalize \citet{BenTalDenHertog2014} and provide an SOCP reformulation for the GTRS in a lifted space when the problem has a finite optimal value. Their methods apply even when $q_0$ and $q_1$ do \textit{not} satisfy the SD condition. They further derive a closed-form solution when the SD condition fails and examine the case of interval-\ or equality-constrained GTRS.
In this line of work, it is often assumed implicitly that $A_0$ and $A_1$ are already diagonal or that a simultaneously-diagonalizing basis can be computed.
The only method that we know of for computing such a basis relies on exact matrix eigen-decomposition.
Thus, although experiments have been presented~\cite{salahi2018efficient,jiang2018socp} suggesting that such algorithms (where exact procedures are replaced by numerical ones) may perform well, theoretical guarantees have yet to be established.
Furthermore, the large cost of matrix eigen-decomposition prevents the application of these algorithms to large-scale instances of the GTRS. 

A second line of work has explored the connections between the GTRS and generalized eigenvalues of the matrix pencil $A_0 + \gamma A_1$. 
These works all assume a \emph{regularity condition} about the matrix pencil: there exists a $\gamma\geq 0$ such that $A_0 +\gamma A_1$ is either positive definite or positive semidefinite.\footnote{In fact, this assumption can be made without loss of generality; see Remark~\ref{rem:no_psd_trivial}.}
\citet{PongWolkowicz2014} study the optimality structure of the GTRS and propose a generalized-eigenvalue-based algorithm which exploits this structure. Unfortunately, an explicit running time is not presented in \cite{PongWolkowicz2014}.
\citet{adachi2019eigenvalue} present another generalized-eigenvalue-based algorithm motivated by similar observations. The dominant costs present in this algorithm come from computing a pair of generalized eigenvalues and solving a linear system. Ignoring issues of exact computations, the runtime of this algorithm is $O(n^3)$.
\citet{jiang2019novel} show how to reformulate the GTRS as a convex quadratic program in terms of generalized eigenvalues. They establish that a saddle-point-based first-order algorithm can be used to solve the reformulation within an $\epsilon$ additive error in $O(1/\epsilon)$ time.
In this line of work, it is often assumed that the generalized eigenvalues are given or can be computed exactly. In particular, theoretical guarantees have not yet been given regarding how these algorithms perform when only approximate generalized eigenvalue computations are available.
This is of interest as, in practice, we cannot hope to numerically compute generalized eigenvalues exactly; see also the discussion in Section 4.1 in \cite{jiang2020linear}. We would like to remark that numerical experiments in these papers~\cite{jiang2019novel,adachi2019eigenvalue,PongWolkowicz2014} have suggested that algorithms motivated by these ideas may perform well even using only approximate generalized eigenvalue computations.

The very recent work of \citet{jiang2020linear} presents an algorithm for solving the GTRS up to an $\epsilon$ additive error in the objective with high probability under the regularity condition. This algorithm relies on machinery developed by \cite{HazanKoren2016} for solving the TRS and differs from previous algorithms in that it does not assume the ability to compute a simultaneously-diagonalizing basis or generalized eigenvalues. The running time of this algorithm is
\begin{align}
\label{eq:JL_runtime}
\tilde O\left(\frac{N\phi^3}{\sqrt{\epsilon\,\xi_{\text{JL}}^5}}\log\left(\frac{n}{p}\right)\log\left(\frac{\phi}{\epsilon\,\xi_{\text{JL}}}\right)^2\right),
\end{align}
where $N$ is the number of nonzero entries in $A_0$ and $A_1$, $\epsilon$ is the additive error, $n$ is the dimension, $p$ is the failure probability, and $(\phi, \xi_{\text{JL}})$ are a pair of parameters measuring the regularity of the GTRS.
In particular, this algorithm is able to take advantage of sparsity in the description of the quadratic functions.
To our knowledge, this is the first provably linear-time algorithm for the GTRS to be presented in the literature.

In this paper, we derive a new algorithm for the GTRS based on a convex quadratic reformulation in the original space. This algorithm can also be applied to variants of the GTRS with interval, equality, or hollow constraints.
The basic idea in our approach relies on the fact that we can provide exact (closed) convex hull characterizations of the epigraph of the GTRS.
We summarize our results below and provide an outline of the paper.

\begin{enumerate}[(i)]
	\item We rewrite the GTRS with a linear objective
	\begin{align}\label{eq:GTRSepi}
	\Opt = \inf_{(x,t)}\set{t:\, (x,t)\in \cS},
	\end{align}
	where the set $\cS$ is defined as
	\begin{align}\label{eq:cSdef}
	\cS\coloneqq \set{(x,t)\in\R^{n+1}:\, \begin{array}
		{l}
		q_0(x)\leq t\\
		q_1(x)\leq 0
	\end{array}}.
	\end{align}
	As the objective in \eqref{eq:GTRSepi} is linear, we can take either the convex hull or closed convex hull of the feasible domain. Then,
	\begin{align*}
	\Opt = \inf_{x,t}\set{t :~ (x,t)\in \conv(\cS)}=\inf_{x,t}\set{t :~ (x,t)\in \overline{\conv}(\cS)}.
	\end{align*}
	In Section~\ref{sec:ConvHull}, we give an explicit description of the set $\conv(\cS)$ (respectively, $\clconv(\cS)$).
	Specifically, we show that when the respective assumptions are satisfied, $\conv(\cS)$ and $\clconv(\cS)$ can both be described in terms of two convex quadratic functions determined by the generalized eigenvalue structure of the matrix pencil $A_0 + \gamma A_1$. 
	We note that these convex hull results may be of independent interest in building relaxations and/or algorithms for nonconvex quadratic programs with or without integer variables.
	As an immediate consequence of these (closed) convex hull results, we can reformulate the GTRS as the minimization of the maximum of two convex quadratics.
	This convex reformulation was previously discovered by \citet{jiang2019novel} by considering the Lagrangian dual and proving a zero duality gap. Our approach shows that the reformulation is tight for a very intuitive reason --- the convex hull of the epigraph is exactly characterized by the convex quadratics used in the reformulation.
	\item
	The proofs in Section~\ref{sec:ConvHull} actually imply stronger convex hull results: under the same assumptions, the (closed) convex hull of $\cS$ is generated by points in $\cS$ where the constraint $q_1(x)\leq 0$ is tight.
	This observation immediately leads to interesting consequences, which we detail in Section~\ref{sec:nonintersecting_constraints}.
	Specifically, we extend our (closed) convex hull results to handle epigraph sets that arise when additional nonintersecting constraints are imposed on the GTRS. This will allow us to extend our algorithms to variants of the GTRS present in the literature~\cite{BenTalTeboulle1996,BenTalDenHertog2014,jiang2018socp,jiang2019novel,PongWolkowicz2014,salahi2018efficient,	Ho-NguyenKK2017,
	yang2018quadratic,SternWolkowicz1995,more1993generalizations}.
	Specifically, this generalization allows us to handle interval-, equality-, and hollow-constrained GTRS.

	\item In Section~\ref{sec:alg}, we give a careful analysis of the numerical issues that come up for an algorithm based on the above ideas. At a high level, we show that by approximating the generalized eigenvalues sufficiently well, the perturbed convex reformulation is within a small additive error of the true convex reformulation.
	Then, by leveraging the concavity of the function $\lambda_{\min}(A_0 + \gamma A_1)$, in the variable $\gamma$, we show how to approximate the necessary generalized eigenvalues efficiently.
	We believe this subroutine and the theoretical guarantees we present for it may also be of independent interest in other contexts. 
	Next, we utilize an algorithm proposed by \citet[Section 2.3.3]{nesterov2018lectures} for solving general minimax problems with smooth components to solve our convex reformulation with a convergence rate of $\tilde O(1/\sqrt{\epsilon})$.
	This contrasts the approach taken by \citet{jiang2019novel} that analyzes a saddle-point-based first-order algorithm and results in a convergence rate of $O(1/\epsilon)$.
	In order to apply the algorithm proposed by Nesterov, we establish that the gradient mapping step can be computed efficiently in our context.
	Finally, relying on our convex hull characterization, we show how to recover an approximate solution of the GTRS using only approximate eigenvectors.

	We present two algorithms (Algorithms~\ref{alg:GTRS_outer}~and~\ref{alg:rounding}). The former finds an $\epsilon$-optimal value and the latter finds an $\epsilon$-optimal feasible solution. In other words, the former returns a scalar in $[\Opt, \Opt+\epsilon]$ and the latter returns a vector $x$ in the feasible region with $q_0(x)\in[\Opt,\Opt+\epsilon]$. Their running times are
	\begin{align}
	\label{eq:us_runtime}
	\tilde O\left(\frac{N\kappa^{3/2}\sqrt{\zeta}}{\sqrt{\epsilon}}\log\left(\frac{n}{p}\right)\log\left(\frac{\kappa}{\epsilon}\right)\right)
	,\quad
	\tilde O\left(\frac{N\kappa^{2}\sqrt{\zeta}}{\sqrt{\epsilon}}\log\left(\frac{n}{p}\right)\log\left(\frac{\kappa}{\epsilon}\right)\right),
	\end{align}
	respectively. Here, $\xi$, $\zeta$, and $\kappa$ are regularity parameters of the matrix pencil $A_0 + \gamma A_1$ (see Definition~\ref{def:regularity}).
	Comparing \eqref{eq:us_runtime} and \eqref{eq:hazan_koren_runtime}, we see that our running times match the dependences on $N$, $n$, $\epsilon$, and $p$ from the algorithm for the TRS presented by \citet{HazanKoren2016}.
	Comparing \eqref{eq:us_runtime} (specifically the running time for finding an $\epsilon$-optimal solution) and \eqref{eq:JL_runtime}, we see that our running time matches the linear dependence on $N$ and improves the dependence on $\epsilon$ by a logarithmic factor from the running time presented by \citet{jiang2020linear}.
	The dependences on the regularity parameters in the two running times are incomparable (see Remark~\ref{rem:regularity_comparison}) but there exist examples where our running time gives a polynomial-order improvement upon the running time presented by \citet{jiang2020linear} (see Remark~\ref{rem:runtime_comparison}).
	
	In comparison to the approach taken by \citet{jiang2020linear}, we believe our approach is conceptually simpler and more straightforward to implement. In particular our approach directly solves the GTRS in the primal space as opposed to solving a feasibility version of the dual problem. Moreover, our analysis highlights the connection
between the GTRS and generalized eigenvalue problems, and in fact demonstrates that the dominant cost in solving the GTRS is the cost of solving a generalized eigenvalue problem.

	In our running times \eqref{eq:us_runtime}, the large dependence on the regularity parameters arises from the error that is introduced as a result of inexact generalized eigenvalue and eigenvector computations.
	We illustrate that our algorithms can be substantially sped up if we have access to exact generalized eigenvalue and eigenvector methods. In particular, we show that when $A_0$ and $A_1$ are diagonal, we can compute an $\epsilon$-optimal solution to the GTRS in time
	\begin{align*}
	O\left(\frac{N\kappa\sqrt{\zeta}}{\sqrt{\epsilon}}\right).
	\end{align*}
	
	As mentioned previously, the generalizations of our convex hull results allow us to apply our algorithms to variants of the GTRS. In particular, our algorithms can be applied without change to interval-, equality-, or hollow-constrained GTRS.
\end{enumerate}

Our study of the convex hull of the epigraph of GTRS is inspired by convex hull results in related contexts.
The recent work of \citet{Ho-NguyenKK2017} gives a characterization on the convex hull of the epigraph of the TRS. In particular, under the assumption that $A_1$ is positive definite, \citet[Theorem 3.5]{Ho-NguyenKK2017} give the explicit closed convex hull characterization of the set $\cS$. 
In this respect, one can view our developments on the (closed) convex hull of $\cS$ when neither $A_0$ nor $A_1$ is positive semidefinite as complementary to the results of \citet[Section 3]{Ho-NguyenKK2017}. 
Notably, in contrast to \cite[Section 3]{Ho-NguyenKK2017}, we have to handle a number of issues that arise due to the recessive directions of the nonconvex domain.
The papers by \citet{yildiran2009convex,modaresi2017convex} are also closely related to our convex hull results. \citet{yildiran2009convex} studies the convex hull of the intersection of two \emph{strict} quadratic inequalities (note that the resulting set is open) under the milder regularity condition that there exists $\gamma\geq0$ such that $A_0+\gamma A_1$ is positive semidefinite, and \citet{modaresi2017convex} analyze conditions under which one can safely take the closure of the sets in \citet{yildiran2009convex} and still obtain the desired closed convex hull results. 
In contrast, our analysis leverages the additional structure present in an epigraph set to give a more direct proof of the convex hull result.
Furthermore, as our analysis is constructive (given $x\in \conv(\cS)$, we show how to find two points $x_1$, $x_2\in\cS$ such that $x\in[x_1,x_2]$), it immediately suggests a rounding procedure (given a solution to the convex reformulation, we show how to find a solution to the original GTRS).
This contrasts the analysis in \citet{yildiran2009convex}, where such a rounding procedure is not obvious.
Moreover, our analysis provides a more refined result that easily extends to variants of the GTRS with non-intersecting constraints. 
Finally, we would like to mention related work on convex hulls of sets defined by second-order cones~(SOCs). \citet{BKK14} study the convex hull of the intersection of a convex and nonconvex quadratic or the intersection of an SOC with a nonconvex quadratic. Similarly, the convex hull of the a two-term disjunction applied to an SOC or its cross section has received much attention (see \cite{BKK14,KKY15} and references therein). As our focus has been on the case where neither $A_0$ nor $A_1$ is positive semidefinite, we view our developments as complementary to these results. 

\emph{Notation}. 
Given a symmetric matrix, $A\in\R^{n\by n}$, let $\lambda_{\min}(A)$ and $\lambda_{\max}(A)$ denote the minimum and maximum eigenvalues of $A$.
We write $A\succeq 0$ (respectively, $A\succ 0$) if $A$ is positive semidefinite (respectively, positive definite).
Let $\norm{A}$ denote the spectral norm of $A$, i.e. $\norm{A}=\max\set{\abs{\lambda_{\min}(A)},\abs{\lambda_{\max}(A)}}$.
Let $\det(A)$ and $\tr(A)$ denote the determinant and trace of $A$.
For $a\in\R^n$, let $\Diag(a)$ denote the diagonal matrix $A\in\R^{n\by n}$ with diagonal entries $A_{i,i} = a_i$.
Let $I_n$ be the $n\times n$ identity matrix.
For a set $S \subseteq \R^n$, let $\conv(S)$ and $\clconv(S)$ be the convex hull and closed convex hull of $S$, respectively.
For $x\in\R^n$, let $\norm{x}$ be its Euclidean norm.
For $x\in\R^n$ and $r\geq 0$, let $B(x,r)$ be the closed ball of radius $r$ centered at $x$, i.e., $B(x,r) = \set{y\in\R^n :\, \norm{x-y}\leq r}$.
Let $\R_+$ denote the nonnegative reals. Let $\grad$ denote the gradient operator. We will use $\tilde O$-notation to hide $\log\log$-factors in running times. 

\section{Convex hull characterization}\label{sec:ConvHull}
In this section we discuss our (closed) convex hull results.
We will aggregate the objective function $q_0$ with the constraint $q_1$ using a nonnegative aggregation weight to derive relaxations of the set $\cS$. We then show that under a mild assumption the (closed) convex hull of $\cS$ can be described by two convex quadratic functions obtained from this aggregation technique.

Let $q:\R\times\R^n\to\R$ be defined as
\begin{align*}
q(\gamma,x) \coloneqq q_0(x) + \gamma q_1(x).
\end{align*}
Let $A:\R\to\R^{n\by n}$ be defined as $A(\gamma) \coloneqq A_0 + \gamma A_1$. Similarly define $b(\gamma)$ and $c(\gamma)$.
In particular, $q(\gamma,x) = x^\top A(\gamma) x + 2 b(\gamma)^\top x + c(\gamma)$.
We stress that while $q(0,x) = q_0(x)$, we have $q(1,x) = q_0(x)+q_1(x)$ which is not equal to $q_1(x)$ in general.

Note that $q(\gamma,x)$ is linear in its first argument and quadratic in its second argument. This structure plays a large role in our analysis.

In order to derive valid relaxations to $\cS$ based on aggregation, we will consider only nonnegative $\gamma$ in the remainder of the paper. For $\gamma\geq 0$, define
\begin{align*}
\cS(\gamma) \coloneqq \set{(x,t)\in\R^{n+1}:~ q(\gamma,x)\leq t}.
\end{align*}
Note that $\cS \subseteq \cS(\gamma)$ holds for all $\gamma\geq 0$. 
Furthermore, it is clear that $q_0(x) \leq t$ and $q_1(x)\leq 0$ if and only if $q(\gamma,x)\leq t$ for all $\gamma\geq 0$.
Thus, we can rewrite $\cS$ as
\begin{align*}
\cS \coloneqq \set{(x,t)\in\R^{n+1}:~ \begin{array}{l}
	q_0(x)\leq t\\
	q_1(x) \leq 0
\end{array}}
=\bigcap_{\gamma\geq0} \cS(\gamma).
\end{align*}

Note that the set $\cS(\gamma)$ is convex if and only if $A(\gamma)\succeq 0$. We will define $\Gamma$ to be these $\gamma$ values, i.e.,
\begin{align*}
\Gamma\coloneqq \set{\gamma\in\R_{+}:~ A(\gamma)\succeq 0}.
\end{align*}

Note that $\Gamma$ is a closed (possibly empty) interval. When this interval is nonempty, we will write it as $\Gamma = [\gamma_-,\gamma_+]$.

We use the following two assumptions in our convex hull characterizations:
\begin{assumption}\label{as:pd}
The matrices $A_0$ and $A_1$ both have negative eigenvalues and there exists a $\gamma^*\geq 0$ such that $A(\gamma^*) \succ 0$.
\end{assumption}

\begin{assumption}\label{as:psd}
The matrices $A_0$ and $A_1$ both have negative eigenvalues and there exists a $\gamma^*\geq 0$ such that $A(\gamma^*) \succeq 0$.
\end{assumption}

\begin{remark}
	\label{rem:no_psd_trivial}
	We claim that the case where $A_0$ and $A_1$ both have negative eigenvalues but do not satisfy either of the above assumptions is not interesting. In particular if $A_0$ and $A_1$ both have negative eigenvalues and $A(\gamma)\not\succeq 0$ for all $\gamma\geq 0$, then it is easy to show (apply the S-lemma then note that $A_0$ has a negative eigenvalue) that $\conv(\cS)=\R^{n+1}$. Consequently, the optimal value of the GTRS is always $-\infty$ in this case.
	
	The assumption that there exists a $\gamma^*\geq 0$ such that $A(\gamma^*)\succeq 0$ is  made in most of the present literature on the GTRS~\cite{BenTalDenHertog2014,adachi2019eigenvalue,jiang2016simultaneous,jiang2019novel,jiang2020linear,PongWolkowicz2014,salahi2018efficient} and convex hulls of the intersection of two quadratics~\cite{yildiran2009convex,modaresi2017convex} either implicitly (for example, by assuming that an optimizer exists or that the optimal value is finite) or explicitly.

	It is well-known that Assumption~\ref{as:pd} implies that $A_0$ and $A_1$ are simultaneously diagonalizable. Even so, we will refrain from assuming that our matrices are diagonal and opt to work on a general basis. We choose to do this as the proofs of our convex hull results will serve as the basis for our algorithms, which do not have access to a simultaneously-diagonalizing basis.
	\end{remark}

\begin{remark}\label{rem:assumptions}
Assumptions~\ref{as:pd} and \ref{as:psd} each imply that $\Gamma$ is nonempty and, consequently, that $\gamma_-$ and $\gamma_+$ exist. In addition, as $A(\gamma_-)$ and $A(\gamma_+)$ are both on the boundary of the positive semidefinite cone, they both have zero as an eigenvalue.

Under Assumption~\ref{as:pd}, the existence of some $\gamma^*\geq 0$ such that $A(\gamma^*) \succ 0$ implies that $\gamma_-<\gamma^* <\gamma_+$ and hence $\gamma_-$ and $\gamma_+$ are distinct.
Furthermore, as $\gamma^*\in(\gamma_-,\gamma_+)$, we have  $d^\top A (\gamma_-)d = d^\top A(\gamma_+) d = 0$ if and only if $d = 0$.

In contrast, under Assumption~\ref{as:psd}, it is possible to have $\gamma_- =\gamma^*=\gamma_+$ and $\Gamma=\{\gamma^*\}$.
\end{remark}

Finally, define $\Sequal$ to be the subset of $\cS$ where the constraint $q_1(x)\leq 0$ is tight.
\begin{align*}
\Sequal \coloneqq \set{(x,t)\in\R^{n+1}:~ \begin{array}{l}
	q_0(x) \leq t\\
	q_1(x) = 0
\end{array}}.
\end{align*}
When either Assumption~\ref{as:pd}~or~\ref{as:psd} holds, $A_1$ has both positive and negative eigenvalues so that $\Sequal$ is nonempty.

We now state our (closed) convex hull results:
\begin{theorem}\label{thm:pd_implies_conv_hull}
Under Assumption~\ref{as:pd}, we have
\begin{align*}
\conv(\cS) =\conv(\Sequal) = \cS(\gamma_-) \cap \cS(\gamma_+).
\end{align*}
In particular,
\begin{align*}
\min_{x\in\R^n}\set{q_0(x):~ q_1(x) \leq 0}
= \min_{x\in\R^n} \max\set{q(\gamma_-,x),q(\gamma_+,x)}.
\end{align*}
\end{theorem}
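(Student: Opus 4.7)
The plan is to establish the chain of inclusions $\conv(\Sequal)\subseteq\conv(\cS)\subseteq\cS(\gamma_-)\cap\cS(\gamma_+)\subseteq\conv(\Sequal)$. The first inclusion is immediate from $\Sequal\subseteq\cS$, and the second follows from $\cS=\bigcap_{\gamma\geq 0}\cS(\gamma)\subseteq\cS(\gamma_-)\cap\cS(\gamma_+)$ together with the convexity of the two right-hand sets, which holds since $A(\gamma_\pm)\succeq 0$ by definition of $\gamma_\pm$. The real work lies in the final inclusion, for which I will give a constructive argument: given $(x,t)\in\cS(\gamma_-)\cap\cS(\gamma_+)$, I will exhibit two points of $\Sequal$ whose convex combination is $(x,t)$.

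The construction uses null-eigenvectors of $A(\gamma_-)$ and $A(\gamma_+)$. Let $d_\pm$ be unit vectors with $A(\gamma_\pm)d_\pm=0$, which exist because $A(\gamma_\pm)$ lies on the boundary of the PSD cone (Remark~\ref{rem:assumptions}). The same remark ensures $d_-^\top A(\gamma_+)d_->0$ and $d_+^\top A(\gamma_-)d_+>0$ under Assumption~\ref{as:pd}; rewriting via $A(\gamma_+)-A(\gamma_-)=(\gamma_+-\gamma_-)A_1$ then yields $d_-^\top A_1 d_->0$ and $d_+^\top A_1 d_+<0$. I now choose $d$ based on the sign of $q_1(x)$: take $d:=d_-$ if $q_1(x)\leq 0$ and $d:=d_+$ if $q_1(x)>0$. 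In either case the univariate quadratic $f(\alpha):=q_1(x+\alpha d)$ takes both signs (in the first case $f$ opens upward with $f(0)\leq 0$; in the second it opens downward with $f(0)>0$), so $f$ admits real roots $\alpha_1\leq 0\leq\alpha_2$. Setting $x_i:=x+\alpha_i d$ and $\lambda:=-\alpha_1/(\alpha_2-\alpha_1)\in[0,1]$ (with the degenerate case $\alpha_1=\alpha_2=0$ handled trivially, since then $q_1(x)=0$ forces $(x,t)\in\Sequal$ already), we obtain $x=(1-\lambda)x_1+\lambda x_2$ and $q_1(x_1)=q_1(x_2)=0$.

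The crucial final step exploits the linearity of $\alpha\mapsto q(\gamma,x+\alpha d)$ when $A(\gamma)d=0$. For the chosen $\gamma\in\{\gamma_-,\gamma_+\}$, using $q_1(x_i)=0$ so that $q_0(x_i)=q(\gamma,x_i)$, we compute
\begin{align*}
(1-\lambda)q_0(x_1)+\lambda q_0(x_2) \;=\; q(\gamma,x)+2\bigl((1-\lambda)\alpha_1+\lambda\alpha_2\bigr)\,b(\gamma)^\top d \;=\; q(\gamma,x)\;\leq\; t,
\end{align*}
the last inequality holding because $(x,t)\in\cS(\gamma)$. Lifting via $t_i:=q_0(x_i)+r$ with the nonnegative slack $r:=t-(1-\lambda)q_0(x_1)-\lambda q_0(x_2)$ ensures $(x_i,t_i)\in\Sequal$ and $(1-\lambda)(x_1,t_1)+\lambda(x_2,t_2)=(x,t)$, completing the final inclusion. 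The main subtlety I expect is keeping the case split on $\textup{sign}(q_1(x))$ and the root-coincidence degeneracy coherent; once these are settled the ``In particular'' statement is immediate, because membership in $\cS(\gamma_-)\cap\cS(\gamma_+)$ is equivalent to $\max\{q(\gamma_-,x),q(\gamma_+,x)\}\leq t$, and $\Opt$ equals the infimum of the linear objective $t$ over $\conv(\cS)$.
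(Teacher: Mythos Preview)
Your argument for the convex hull equality is correct and follows essentially the same route as the paper: the same chain of inclusions, the same case split on the sign of $q_1(\hat x)$, and the same null-eigenvector direction $d_\pm$ of $A(\gamma_\pm)$; the only cosmetic difference is that the paper moves in a direction $(d,e)$ in $\R^{n+1}$ (with $e$ chosen so that $\alpha\mapsto q(\gamma,\hat x_\alpha)-\hat t_\alpha$ is constant) and locates $\alpha$ where $q(\gamma_-,\cdot)=q(\gamma_+,\cdot)$, whereas you solve $q_1(x+\alpha d)=0$ directly in $\R^n$ and lift to the $t$-coordinate afterward---these produce the same two points of $\Sequal$.

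One omission: the ``In particular'' clause uses $\min$, not $\inf$, and your last paragraph only yields equality of infima. The paper closes this by noting that $\max\{q(\gamma_-,x),q(\gamma_+,x)\}\geq q(\gamma^*,x)$ with $A(\gamma^*)\succ 0$, so any sublevel set of the minimax objective is compact and the minimum is attained; you should add this short step.
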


\begin{theorem}\label{thm:psd_implies_conv_hull}
Under Assumption~\ref{as:psd}, we have
\begin{align*}
{\clconv}(\cS) ={\clconv}(\Sequal) = \cS(\gamma_-) \cap \cS(\gamma_+).
\end{align*}
In particular,
\begin{align*}
\inf_{x\in\R^n}\set{q_0(x):~ q_1(x) \leq 0}
= \inf_{x\in\R^n} \max\set{q(\gamma_-,x),q(\gamma_+,x)}.
\end{align*}
\end{theorem}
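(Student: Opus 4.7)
The plan is to reduce Theorem~\ref{thm:psd_implies_conv_hull} to Theorem~\ref{thm:pd_implies_conv_hull} via a small perturbation of $A_0$. I would first observe that the easy chain
\begin{align*}
\clconv(\Sequal) \subseteq \clconv(\cS) \subseteq \cS(\gamma_-) \cap \cS(\gamma_+)
\end{align*}
holds without any positive definiteness assumption: $\Sequal \subseteq \cS$, and $q_0(x)\leq t$ together with $q_1(x)\leq 0$ forces $q(\gamma,x)\leq t$ for every $\gamma \geq 0$, so $\cS\subseteq \cS(\gamma)$; moreover $\cS(\gamma_\pm)$ are closed and convex as epigraphs of convex quadratics. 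It therefore suffices to establish the reverse inclusion $\cS(\gamma_-)\cap\cS(\gamma_+)\subseteq\clconv(\Sequal)$.

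To do so, fix $(x^*,t^*)\in\cS(\gamma_-)\cap\cS(\gamma_+)$ and, for each small $\epsilon>0$, consider the perturbed GTRS obtained by replacing $A_0$ with $A_0^\epsilon:=A_0+\epsilon I$ (keeping $b_i,c_i,A_1$ fixed). Denote the resulting quantities by $q_0^\epsilon,q^\epsilon,\cS^\epsilon,\Sequal^\epsilon,A^\epsilon(\cdot),\Gamma^\epsilon=[\gamma_-^\epsilon,\gamma_+^\epsilon]$. For all sufficiently small $\epsilon$, $A_0^\epsilon$ retains a negative eigenvalue and $A^\epsilon(\gamma^*)=A(\gamma^*)+\epsilon I\succ 0$, so the perturbed instance satisfies Assumption~\ref{as:pd}. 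Theorem~\ref{thm:pd_implies_conv_hull} then yields
\begin{align*}
\conv(\Sequal^\epsilon) = \cS^\epsilon(\gamma_-^\epsilon)\cap\cS^\epsilon(\gamma_+^\epsilon),
\end{align*}
and the inclusion $\Sequal^\epsilon\subseteq\Sequal$ (since $q_0^\epsilon\geq q_0$ pointwise and $q_1$ is unchanged) gives $\conv(\Sequal^\epsilon)\subseteq\conv(\Sequal)$.

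I would next pass to the limit $\epsilon\downarrow 0$. Using $\lambda_{\min}(A^\epsilon(\gamma))=\lambda_{\min}(A(\gamma))+\epsilon$, concavity of $\gamma\mapsto\lambda_{\min}(A(\gamma))$, and the fact that this function tends to $-\infty$ as $\gamma\to\infty$ (because $A_1$ has a negative eigenvalue), the intervals $\Gamma^\epsilon$ are bounded for small $\epsilon$ and contract monotonically to $\Gamma$; hence $\gamma_\pm^\epsilon\to\gamma_\pm$. Setting $t_\epsilon:=\max\set{q^\epsilon(\gamma_-^\epsilon,x^*),\,q^\epsilon(\gamma_+^\epsilon,x^*)}$, we have $(x^*,t_\epsilon)\in\cS^\epsilon(\gamma_-^\epsilon)\cap\cS^\epsilon(\gamma_+^\epsilon)\subseteq\conv(\Sequal)$, and by continuity
\begin{align*}
t_\epsilon \;\longrightarrow\; \max\set{q(\gamma_-,x^*),\,q(\gamma_+,x^*)} \,\leq\, t^*.
\end{align*}
Either $t_\epsilon\leq t^*$ for some $\epsilon$, in which case $(x^*,t^*)\in\conv(\Sequal)$ follows from the (easily verified) upward-closedness of $\conv(\Sequal)$ in the $t$-coordinate, or $t_\epsilon\downarrow t^*$, in which case $(x^*,t^*)\in\clconv(\Sequal)$ as a limit of points in $\conv(\Sequal)$. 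Either way the missing inclusion holds. Finally, the infimum reformulation is immediate: writing the GTRS as $\inf\set{t:(x,t)\in\cS}$, the linearity of the objective permits replacing $\cS$ by $\clconv(\cS)=\cS(\gamma_-)\cap\cS(\gamma_+)$, and unpacking the definition of $\cS(\gamma)$ gives $\inf_x\max\set{q(\gamma_-,x),q(\gamma_+,x)}$.

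The main technical obstacle is the convergence $\gamma_\pm^\epsilon\to\gamma_\pm$ and the accompanying continuity $q(\gamma_\pm^\epsilon,x^*)\to q(\gamma_\pm,x^*)$: these depend on $\gamma\mapsto\lambda_{\min}(A(\gamma))$ being concave, finite, and coercive to $-\infty$, which in turn uses that $A_1$ is indefinite under Assumption~\ref{as:psd}. Once this is in hand, the argument is just bookkeeping layered on top of Theorem~\ref{thm:pd_implies_conv_hull}.
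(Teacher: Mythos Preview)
Your proposal is correct and follows essentially the same approach as the paper: both perturb $A_0$ to $A_0+\epsilon I$, invoke Theorem~\ref{thm:pd_implies_conv_hull} on the perturbed instance (which satisfies Assumption~\ref{as:pd}), use $\Sequal^\epsilon\subseteq\Sequal$ to land back in $\conv(\Sequal)$, and then pass to the limit. The only cosmetic difference is in how the limit is organized---the paper fixes a target slack $\epsilon$ in $t$ and chooses the perturbation size small enough to hit $(\hat x,\hat t+\epsilon)$ directly, whereas you let the perturbation shrink and track $t_\epsilon\to t^*$---but the content is the same.
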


\begin{remark}
The convex reformulation given in the second part of Theorem~\ref{thm:psd_implies_conv_hull} was first proved by \citet{jiang2019novel} using a different argument without relying on the convex hull structure of the underlying sets. In contrast, the first part of Theorem~\ref{thm:psd_implies_conv_hull} establishes a fundamental convex hull result highlighting the crux of why such a convex reformulation is possible.
\end{remark}

We present the proof of Theorem~\ref{thm:pd_implies_conv_hull} in Section~\ref{sec:proof_pd}. The proof of Theorem~\ref{thm:psd_implies_conv_hull} is presented in Section~\ref{sec:proof_psd} and relies on Theorem~\ref{thm:pd_implies_conv_hull}.

\subsection{Proof of Theorem~\ref{thm:pd_implies_conv_hull}}
\label{sec:proof_pd}

\begin{lemma}
\label{lem:psd_implies_convex_closed}
The set $\cS(\gamma)$ is convex and closed for all $\gamma\in\Gamma$.
\end{lemma}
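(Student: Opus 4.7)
The plan is to observe that for $\gamma \in \Gamma$, the function $x \mapsto q(\gamma,x) = x^\top A(\gamma) x + 2 b(\gamma)^\top x + c(\gamma)$ is a convex quadratic, since $A(\gamma)\succeq 0$ by definition of $\Gamma$. The set $\cS(\gamma)$ is precisely the epigraph of this function in the variables $(x,t)$, so convexity is immediate from convexity of epigraphs of convex functions.

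For closedness, I would simply note that the map $(x,t)\mapsto q(\gamma,x)-t$ is continuous on $\R^{n+1}$, and $\cS(\gamma)$ is its preimage of the closed set $(-\infty,0]$, hence closed. Alternatively, one can invoke that the epigraph of a continuous (in particular, quadratic) function is closed.

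There is no real obstacle here; the lemma is essentially a restatement of the definition of $\Gamma$ combined with standard facts about epigraphs of convex quadratics. The only thing worth being explicit about is that $b(\gamma)$ and $c(\gamma)$ are affine in $\gamma$ and thus well-defined finite quantities, so $q(\gamma,\cdot)$ is an honest convex quadratic (not, say, improper) for every $\gamma\in\Gamma$.
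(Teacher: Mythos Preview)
Your proof is correct and follows essentially the same argument as the paper: convexity from $A(\gamma)\succeq 0$ (the paper phrases it as ``the constraint defining $\cS(\gamma)$ is convex in $(x,t)$'' rather than invoking epigraphs explicitly, but this is the same observation), and closedness from $\cS(\gamma)$ being the preimage of $(-\infty,0]$ under the continuous map $(x,t)\mapsto q(\gamma,x)-t$.
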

\begin{proof}
Let $\gamma\in\Gamma$ and recall the definition of $\cS(\gamma)$.
\begin{align*}
\cS(\gamma) &\coloneqq \set{(x,t)\in\R^{n+1} :~ q(\gamma,x)\leq t}\\
&=\set{(x,t)\in\R^{n+1} :~ x^\top A(\gamma)x + 2b(\gamma)^\top x + c(\gamma) \leq t}
\end{align*}
By the definition of $\Gamma$, we have $A(\gamma)\succeq 0$. Thus, the constraint defining $\cS(\gamma)$ is convex in $(x,t)$, and we conclude that $\cS(\gamma)$ is convex. 
Closedness of $\cS(\gamma)$ follows by noting that it is the preimage of $(-\infty,0]$ under a continuous map.
\end{proof}

\begin{lemma}\label{lem:pd:Ssubset}
Suppose $\Gamma$ is nonempty and write $\Gamma=[\gamma_-,\gamma_+]$. Then, $\conv(\cS) \subseteq \cS(\gamma_-) \cap \cS(\gamma_+)$.
\end{lemma}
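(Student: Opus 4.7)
The plan is to chain together two facts already established in the excerpt: the inclusion $\cS \subseteq \cS(\gamma)$ valid for every $\gamma \geq 0$, and the convexity of $\cS(\gamma)$ for every $\gamma \in \Gamma$ (Lemma~\ref{lem:psd_implies_convex_closed}). The content of the claim is really just the observation that aggregating the objective and constraint with a weight $\gamma \in \Gamma$ yields a convex relaxation of $\cS$, so the convex hull of $\cS$ cannot escape this relaxation.

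Concretely, I would first recall that for every $\gamma \geq 0$ and every $(x,t)\in\cS$, we have $q(\gamma,x) = q_0(x) + \gamma q_1(x) \le q_0(x) \le t$, so $(x,t)\in\cS(\gamma)$; this gives $\cS \subseteq \cS(\gamma)$. Specializing to $\gamma = \gamma_-$ and $\gamma = \gamma_+$, which belong to $\Gamma$ by hypothesis, we get $\cS \subseteq \cS(\gamma_-) \cap \cS(\gamma_+)$. The right-hand side is an intersection of two convex sets (by Lemma~\ref{lem:psd_implies_convex_closed}), hence convex. Therefore, taking the convex hull of both sides preserves the inclusion and yields
\[
\conv(\cS) \subseteq \conv\bigl(\cS(\gamma_-) \cap \cS(\gamma_+)\bigr) = \cS(\gamma_-) \cap \cS(\gamma_+),
\]
which is exactly the desired statement.

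There is no real obstacle here; the lemma is the easy direction of the convex hull characterization. The substantive work happens in the reverse inclusion (to be handled in later lemmas), where one must actually construct, for a point in $\cS(\gamma_-)\cap\cS(\gamma_+)$, a convex combination of points of $\cS$ (or of $\Sequal$) realizing it. This short lemma simply certifies that $\cS(\gamma_-)\cap\cS(\gamma_+)$ is a valid convex outer approximation.
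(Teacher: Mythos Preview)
Your proof is correct and follows essentially the same approach as the paper: the paper notes $\cS = \bigcap_{\gamma\geq 0}\cS(\gamma)\subseteq \cS(\gamma_-)\cap\cS(\gamma_+)$, then takes convex hulls and invokes Lemma~\ref{lem:psd_implies_convex_closed} for convexity of the right-hand side. Your version just spells out the inclusion $\cS\subseteq\cS(\gamma)$ explicitly rather than citing the intersection identity, but the argument is otherwise identical.
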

\begin{proof}
Note that $\cS = \bigcap_{\gamma\geq 0} \cS(\gamma) \subseteq \cS(\gamma_-)\cap \cS(\gamma_+)$. The result then follows by taking the convex hull of each side and noting that both $\cS(\gamma_-)$ and $\cS(\gamma_+)$ are convex by Lemma~\ref{lem:psd_implies_convex_closed}.
\end{proof}

The bulk of the work in proving Theorem~\ref{thm:pd_implies_conv_hull} lies in the following result.
\begin{lemma}\label{lem:pd:Ssupset}
Under Assumption~\ref{as:pd},  we have $\cS(\gamma_-)\cap \cS(\gamma_+)\subseteq\conv(\Sequal)$.
\end{lemma}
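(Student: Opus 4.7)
The plan is to express an arbitrary point $(x,t)\in \cS(\gamma_-)\cap \cS(\gamma_+)$ as a convex combination of two points in $\Sequal$. Concretely, I will choose a direction $d\in\R^n$ and scalars $\alpha_1,\alpha_2>0$ so that the points $x_1 \coloneqq x+\alpha_1 d$ and $x_2\coloneqq x-\alpha_2 d$ satisfy $q_1(x_1)=q_1(x_2)=0$, take $\lambda=\alpha_2/(\alpha_1+\alpha_2)\in(0,1)$ so that $\lambda x_1+(1-\lambda)x_2=x$, and show that $\lambda q_0(x_1)+(1-\lambda)q_0(x_2)\leq t$. Lifting to $(x_i,q_0(x_i))\in\Sequal$ then yields a convex combination whose first coordinate is $x$ and whose second coordinate is at most $t$; to hit $t$ exactly, one inflates $t_1$ to $q_0(x_1)+(t-s)/\lambda$, where $s$ denotes the above convex combination, which is harmless since $\Sequal$ is closed under raising $t$.

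The driving algebraic identity is that for any quadratic $f(y)=y^\top A y+2b^\top y+c$,
\[
\lambda f(y_1)+(1-\lambda)f(y_2)=f(\lambda y_1+(1-\lambda)y_2)+\lambda(1-\lambda)(y_1-y_2)^\top A(y_1-y_2).
\]
Applied to $q_0$ with the above $x_1,x_2,\lambda$, it gives $\lambda q_0(x_1)+(1-\lambda)q_0(x_2)=q_0(x)+\alpha_1\alpha_2\, d^\top A_0 d$. On the other hand, $q_1(x+\alpha d)=(d^\top A_1 d)\alpha^2+2(A_1 x+b_1)^\top d\,\alpha+q_1(x)$, so Vieta's formulas give $\alpha_1\alpha_2=-q_1(x)/(d^\top A_1 d)$ (when the two roots have opposite signs). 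Combining the two identities yields
\[
\lambda q_0(x_1)+(1-\lambda)q_0(x_2)=q_0(x)-q_1(x)\,\frac{d^\top A_0 d}{d^\top A_1 d}.
\]

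The key is now to pick $d$ so that $-d^\top A_0 d/d^\top A_1 d$ lies in $\{\gamma_-,\gamma_+\}$, because then the right-hand side equals $q(\gamma_\pm,x)\leq t$. The natural candidates are nonzero vectors $d\in\Null(A(\gamma_\pm))$, both of which exist since $A(\gamma_-)$ and $A(\gamma_+)$ lie on the boundary of the positive semidefinite cone. For such $d$ one has $d^\top A_0 d=-\gamma_\pm d^\top A_1 d$. Invoking Remark~\ref{rem:assumptions}, under Assumption~\ref{as:pd} we have $d^\top A(\gamma_-)d=0$ and $d^\top A(\gamma_+)d>0$ for any nonzero $d\in\Null(A(\gamma_-))$; subtracting gives $(\gamma_+-\gamma_-)d^\top A_1 d>0$, hence $d^\top A_1 d>0$. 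Symmetrically, any nonzero $d\in\Null(A(\gamma_+))$ satisfies $d^\top A_1 d<0$.

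The case analysis then closes the argument. The main issue I anticipate is ensuring that the roots $\alpha_1,-\alpha_2$ are real and of opposite signs, which is equivalent to $q_1(x)/d^\top A_1 d<0$. If $q_1(x)<0$, pick $d\in\Null(A(\gamma_-))\setminus\{0\}$ so that $d^\top A_1 d>0$ and the sign condition holds; the constructed combination then lies in $\conv(\Sequal)$ via $q(\gamma_-,x)\leq t$. If $q_1(x)>0$, pick $d\in\Null(A(\gamma_+))\setminus\{0\}$ instead, so that $d^\top A_1 d<0$ and we use $q(\gamma_+,x)\leq t$. The remaining case $q_1(x)=0$ is immediate: then $q_0(x)=q(\gamma_-,x)\leq t$, so $(x,t)\in\Sequal\subseteq\conv(\Sequal)$ already. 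This proves $\cS(\gamma_-)\cap \cS(\gamma_+)\subseteq\conv(\Sequal)$.
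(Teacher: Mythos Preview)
Your proposal is correct and shares the paper's core idea: split on the sign of $q_1(x)$, and for the nontrivial cases move along a nonzero direction $d$ in the null space of $A(\gamma_+)$ (when $q_1(x)>0$) or of $A(\gamma_-)$ (when $q_1(x)<0$), using Remark~\ref{rem:assumptions} to guarantee $d^\top A_1 d\neq 0$ with the right sign. The difference is purely in how the bookkeeping is organized. The paper moves in the lifted space $(x,t)$ along $(d,e)$ with $e$ chosen so that $q(\gamma_\pm,\hat x_\alpha)-\hat t_\alpha$ is constant in $\alpha$, and then locates the two $\alpha$'s where the other aggregated quadratic matches it; this directly produces two points of $\Sequal$ whose convex combination is exactly $(\hat x,\hat t)$, with no inflation step. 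You instead stay in $x$-space, solve $q_1(x+\alpha d)=0$ and use Vieta's formulas together with the identity $\lambda f(y_1)+(1-\lambda)f(y_2)=f(\lambda y_1+(1-\lambda)y_2)+\lambda(1-\lambda)(y_1-y_2)^\top A(y_1-y_2)$ to evaluate the averaged objective, landing on $q(\gamma_\pm,x)\le t$ in one line. Your route is algebraically slick and makes the role of $\gamma_\pm$ as the ratio $-d^\top A_0 d/d^\top A_1 d$ very transparent; the paper's route avoids the final inflation and, because it tracks the $t$-coordinate along the line, more directly yields the rounding procedure used later in Section~\ref{sec:alg}.
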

\begin{proof}
Let $(\hat x,\hat t)\in \cS(\gamma_-)\cap \cS(\gamma_+)$. We will show that $(\hat x,\hat t)\in\conv(\Sequal)$. We split the analysis into three cases: (i) $q_1(\hat x)=0$,  (ii) $q_1(\hat x)>0$, and (iii) $q_1(\hat x)<0$. 

\begin{enumerate}[(i)]
\item If $q_1(\hat x) = 0$, then $q_0(\hat x) = q_0(\hat x) + \gamma_- q_1(\hat x)= q(\gamma_-,\hat x)$. As $(\hat x,\hat t)\in \cS(\gamma_-)$ by assumption, we deduce that $q(\gamma_-,\hat x)\leq \hat t$. Combining these inequalities, we have that $q_0(\hat x)= q(\gamma_-, \hat x) \leq \hat t$ and that $(\hat x,\hat t)\in \Sequal$.

\item Now suppose $q_1(\hat x)>0$.
Let $d\neq 0$ such that $d^\top A(\gamma_+) d = 0$ (such a vector $d$ exists as $A(\gamma_+)$ has zero as an eigenvalue; see Remark~\ref{rem:assumptions}) and define $e \coloneqq 2\left(\hat x^\top A(\gamma_+)d + b(\gamma_+)^\top d\right)$.
We modify $(\hat x,\hat t)$ along the direction $(d,e)$: For $\alpha\in\R$, let $(\hat x_\alpha,\hat t_\alpha) \coloneqq  (\hat x+\alpha d, \hat t+\alpha e)$.
We will show that there exist $\alpha_1<0<\alpha_2$ such that $(\hat x_{\alpha_i},\hat t_{\alpha_i})\in \Sequal$ for $i=1,2$, whence $(\hat x,\hat t)\in\conv(\Sequal)$.

We study the behavior of the expressions $q(\gamma_-,\hat x_\alpha)-\hat t_\alpha$ and $q(\gamma_+,\hat x_\alpha)-\hat t_\alpha$ as functions of $\alpha$.
A short calculation shows that for any $\alpha\in\R$, we have
\begin{align}\label{eq:thm:pd:q_lambda_+}
&q(\gamma_+, \hat x_\alpha) - \hat t_\alpha\notag\\
&\qquad= \left(q(\gamma_+,\hat x) - \hat t\right) + 2\alpha\left(\hat x^\top A(\gamma_+)d + b(\gamma_+)^\top d - e/2\right) + \alpha^2d^\top A(\gamma_+)d \notag \\
&\qquad= q(\gamma_+,\hat x) - \hat t,
\end{align}
where the last equation follows from the definition of $e$.
Thus, $q(\gamma_+,\hat x_\alpha)-\hat t_\alpha$ is constant in $\alpha$.
Next, we compute
\begin{align*}
&q(\gamma_-, \hat x_\alpha) - \hat t_\alpha\\
&\qquad= \left(q(\gamma_-,\hat x) - \hat t\right) + 2\alpha\left(\hat x^\top A(\gamma_-)d + b(\gamma_-)^\top d - e/2\right) + \alpha^2d^\top A(\gamma_-)d.
\end{align*}
As $d\neq 0$ and $d^\top A(\gamma_+)d=0$, we deduce that $d^\top A(\gamma_-)d\neq 0$ (see Remark~\ref{rem:assumptions}). Then, as  $A(\gamma_-)\succeq 0$, we have that $d^\top A(\gamma_-)d>0$. Hence, $q(\gamma_-, \hat x_\alpha) - \hat t_\alpha$ is strongly convex in $\alpha$. 

Note that 
\[
q(\gamma_-,\hat x) = q_0(\hat x) + \gamma_- q_1(\hat x) < q_0(\hat x) + \gamma_+ q_1(\hat x) = q(\gamma_+,\hat x),
\]
where the inequality follows from the fact that $\gamma_-<\gamma_+$ and $q_1(\hat x)>0$. 
Therefore, $q(\gamma_-,\hat x) - \hat t < q(\gamma_+,\hat x) - \hat t$. 
Thus, there are values $\alpha_1<0<\alpha_2$ such that $q(\gamma_-,\hat  x_{\alpha_i}) - \hat t_{\alpha_i} =q(\gamma_+, \hat x_{\alpha_i}) - \hat t_{\alpha_i}$ for $i=1,2$.

It remains to show that $(\hat x_{\alpha_i},\hat t_{\alpha_i})\in\Sequal$ for $i=1,2$.
This follows immediately because for $i=1,2$, we have
\begin{align*}
q_1(\hat x_{\alpha_i}) &= \frac{1}{\gamma_+-\gamma_-}\left(q(\gamma_+,\hat x_{\alpha_i}) - q(\gamma_-,\hat x_{\alpha_i})\right) = 0.
\end{align*}
Then, applying \eqref{eq:thm:pd:q_lambda_+} and recalling that $q(\gamma_+,\hat x)\leq \hat t$, we have
\begin{align*}
q_0(\hat x_{\alpha_i}) &= q(\gamma_+,\hat x_{\alpha_i}) - \gamma_+ q_1(\hat x_{\alpha_i}) =q(\gamma_+,\hat x_{\alpha_i})\leq \hat t_{\alpha_i}.
\end{align*}

\item The final case is symmetric to case (ii), thus we will only sketch its proof.

Suppose $q_1(\hat x)<0$.
Let $d\neq 0$ such that $d^\top A(\gamma_-) d = 0$ and define $e \coloneqq 2\left(\hat x^\top A(\gamma_-)d + b(\gamma_-)^\top d\right)$.
For $\alpha\in\R$, let $(\hat x_\alpha,\hat t_\alpha) \coloneqq (\hat x+\alpha d, \hat t+\alpha e)$.

A short calculation shows  that for any $\alpha\in\R$, we have
\begin{align*}
&q(\gamma_-, \hat x_\alpha) - \hat t_\alpha\\
&\qquad= \left(q(\gamma_-,\hat x) - \hat t\right) + 2\alpha\left(\hat x^\top A(\gamma_-)d + b(\gamma_-)^\top d - e/2\right) + \alpha^2d^\top A(\gamma_-)d\\
&\qquad= q(\gamma_-,\hat x) - \hat t.
\end{align*}
Similarly, for any $\alpha\in\R$, 
\begin{align*}
&q(\gamma_+, \hat x_\alpha) - \hat t_\alpha\\
&\qquad= \left(q(\gamma_+,\hat x) - \hat t\right) + 2\alpha\left(\hat x^\top A(\gamma_+)d + b(\gamma_+)^\top d - e/2\right) + \alpha^2d^\top A(\gamma_+)d.
\end{align*}
As $d^\top A(\gamma_-)d=0$ and $d\neq 0$, Assumption~\ref{as:pd} implies that $d^\top A(\gamma_+)d>0$. We see that $q(\gamma_+,\hat x_\alpha)-\hat t_\alpha$ is strongly convex in $\alpha$. 
As $q_1(\hat x)<0$, we have $q(\gamma_+,\hat x) -\hat  t < q(\gamma_-,\hat x) - \hat t$. 
Thus, there are values $\alpha_1<0<\alpha_2$ such that $q(\gamma_+, \hat x_{\alpha_i}) - \hat t_{\alpha_i} =q(\gamma_-, \hat x_{\alpha_i}) - \hat t_{\alpha_i}$ for $i=1,2$.

Noting that $\gamma_-\neq\gamma_+$ and $q(\gamma_-,\hat x_{\alpha_i})=q(\gamma_+,\hat x_{\alpha_i})$, we conclude that $q_0(\hat x_{\alpha_i}) = q(\gamma_i,\hat x_{\alpha_i})\leq \hat t_{\alpha_i}$ and $q_1(\hat x_{\alpha_i}) = 0$. Thus, $(\hat x_{\alpha_i},\hat t_{\alpha_i})\in\Sequal$ for $i=1,2$. We conclude $(\hat x,\hat t)\in\conv(\Sequal)$.\qedhere
\end{enumerate}
\end{proof}

\begin{remark}
The proof of Lemma~\ref{lem:pd:Ssupset} suggests a simple rounding scheme from the convex relaxation to the original nonconvex problem: given $\hat x\in\R^n$, let $d$ be an eigenvector of eigenvalue zero for either $A(\gamma_{\pm})$ (depending on the sign of $q_1(\hat x)$) and move $\alpha\geq 0$ units in the direction of either $\pm d$ (depending on the sign of $e$ defined in the proof) until $q_1(\hat x\pm \alpha d) = 0$. This rounding scheme guarantees that $q_0(\hat x\pm\alpha d)\leq \max\set{q(\gamma_-, \hat x), q(\gamma_+, \hat x)}$.
\end{remark}

We are now ready to prove Theorem \ref{thm:pd_implies_conv_hull}.
\begin{proof}
[Proof of Theorem \ref{thm:pd_implies_conv_hull}]
Lemmas~\ref{lem:pd:Ssubset}~and~\ref{lem:pd:Ssupset} together imply
\begin{align*}
\cS(\gamma_-)\cap\cS(\gamma_+) \subseteq \conv(\Sequal) \subseteq \conv(\cS) \subseteq \cS(\gamma_-)\cap \cS(\gamma_+).
\end{align*}
Hence, we deduce that equality holds throughout the chain of inclusions.

In particular, the GTRS  \eqref{eq:GTRSepi} can be rewritten
\begin{align*}
\inf_{(x,t)\in\R^{n+1}}\set{t:~ (x,t) \in \cS} &=\inf_{(x,t)\in\R^{n+1}}\set{t:~ (x,t) \in \conv(\cS)}\\
&= \inf_{(x,t)\in\R^{n+1}}\set{t:~ (x,t) \in \cS(\gamma_-)\cap\cS(\gamma_+)}\\
&= \inf_{(x,t)\in\R^{n+1}}\set{t:~ \begin{array}
	{l}
	q(\gamma_-,x)\leq t\\
	q(\gamma_+,x)\leq t
\end{array}}\\
&= \inf_{x\in\R^n} \max\set{q(\gamma_-,x), q(\gamma_+,x)}.
\end{align*}
It remains to prove that the minimum is achieved in each of the formulations of the GTRS above.
It suffices to show that the minimum is achieved in the last formulation. Note $q(\gamma_-,x)$ and $q(\gamma_+,x)$ are both continuous functions of $x$, hence $\max\set{q(\gamma_-,x), q(\gamma_+,x)}$ is continuous. Next, taking $u\coloneqq \max\set{c(\gamma_-),c(\gamma_+)}$ we have that $u$ is an upper bound on the optimal value. Moreover, because $\gamma^*\in(\gamma_-,\gamma_+)$, we can lower bound $\max\set{q(\gamma_-,x),q(\gamma_+,x)}$, by $q(\gamma^*,x)$. Consequently, it suffices to replace the feasible domain $\R^n$ in the last formulation with the set
\begin{align*}
\set{x\in\R^n:~ q(\gamma^*,x)\leq u}.
\end{align*}
This set is bounded as $A(\gamma^*)\succ 0$ and it is closed as it is the inverse image of $(-\infty,u]$ under a continuous map. Recalling that a continuous function on a compact set achieves its minimum concludes the proof.
\end{proof}

We next provide a numerical example illustrating Theorem~\ref{thm:pd_implies_conv_hull}.
\begin{example}
\label{ex:convex_hull_result}
Define the homogeneous quadratic functions $q_i(x)\coloneqq x^\top A_i x$ for $i=0,1$, where
\begin{align*}
A_0 \coloneqq \begin{pmatrix}
	1 & 2\\ 2& 1
\end{pmatrix},\qquad
A_1 \coloneqq \begin{pmatrix}
	0 & -1\\ -1&0
\end{pmatrix}.
\end{align*}
As $\det(A_0) = -3$ and $\det(A_1) = -1$, the matrices $A_0$ and $A_1$ must both have negative eigenvalues. Furthermore, 
\begin{align*}
A(2) = A_0 + 2 A_1 = I \succ 0.
\end{align*}
Thus, Assumption~\ref{as:pd} is satisfied.

We now compute $\gamma_-$ and $\gamma_+$.
Note that as $A(\gamma)$ is a $2\by 2$ matrix, $A(\gamma)\succeq 0$ if and only if $\tr(A(\gamma))\geq 0$ and $\det(A(\gamma))\geq 0$. Note that
$\tr(A(\gamma))=2\geq 0$ is satisfied for all $\gamma$. We compute
\begin{align*}
\det(A(\gamma)) &= 1 - (2-\gamma)^2.
\end{align*}
This quantity is nonnegative if and only if $\abs{2-\gamma}\leq 1$. Thus $\gamma_-= 1$ and $\gamma_+=3$. 
Theorem~\ref{thm:pd_implies_conv_hull} then implies
\begin{align*}
\conv\left(\set{(x,t)\in\R^3 :\, \begin{array}
	{l}
	x_1^2 + 4x_1x_2 +x_2^2 \leq t\\
	-2x_1x_2 \leq 0
\end{array}}\right)&= 
\set{(x,t)\in\R^3 :\, \begin{array}
	{l}
	(x_1+x_2)^2 \leq t\\
	(x_1-x_2)^2 \leq t
\end{array}}.
\end{align*}

We plot the corresponding sets $\cS$ and $\cS(\gamma_-)\cap \cS(\gamma_+)$ in Figure~\ref{fig:example_convex_hull}.
\begin{figure}
  \centering
    \includegraphics[width=0.4\textwidth]{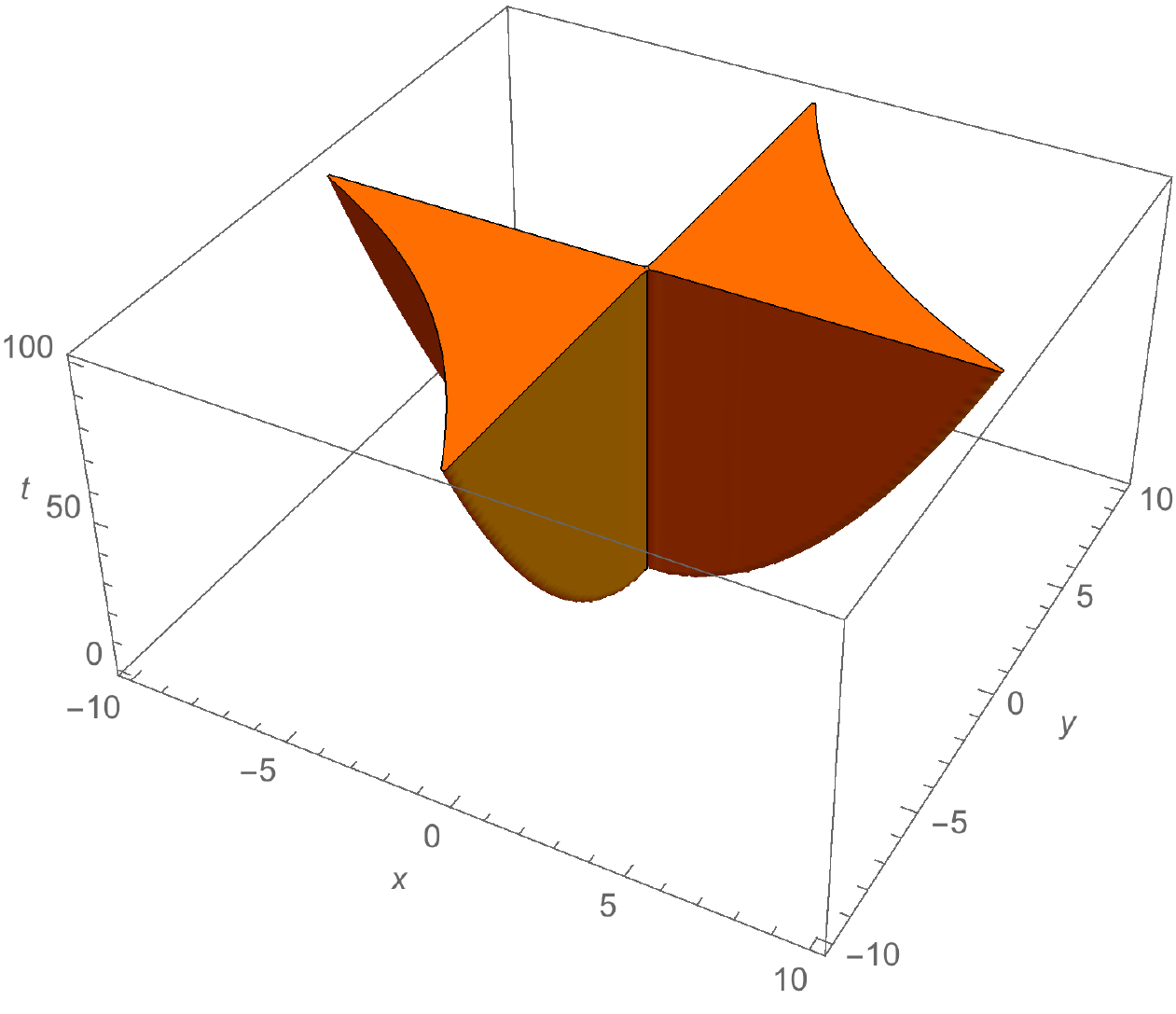}\qquad
    \includegraphics[width=0.4\textwidth]{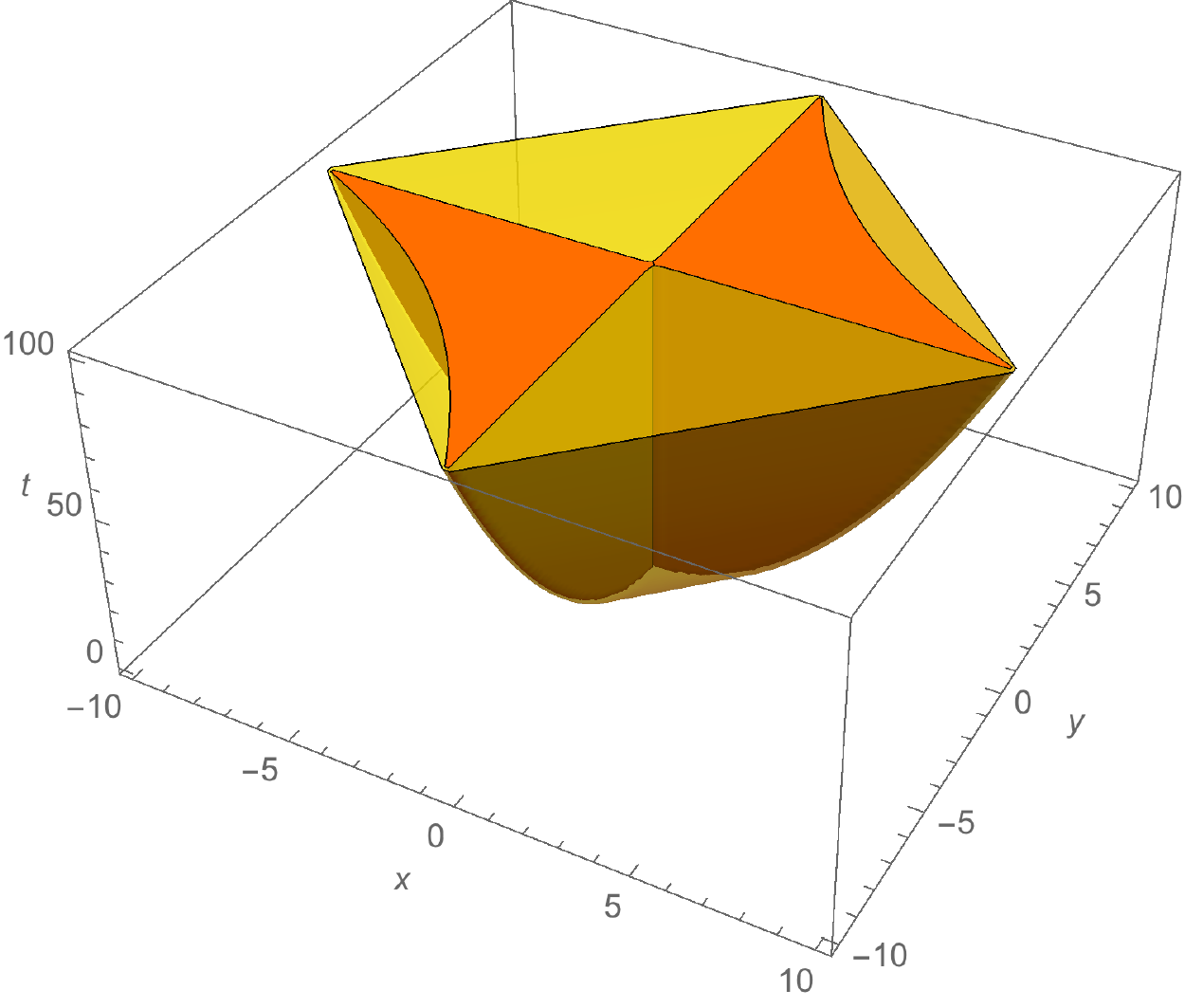}
	\caption{The sets $\cS$ (in orange) and $\cS(\gamma_-)\cap\cS(\gamma_+)$ (in yellow) from Example~\ref{ex:convex_hull_result}}
	\label{fig:example_convex_hull}
\end{figure}
\end{example}

\subsection{Proof of Theorem \ref{thm:psd_implies_conv_hull}}
\label{sec:proof_psd}
Next, we prove Theorem \ref{thm:psd_implies_conv_hull} using a limiting argument and reducing it to Theorem \ref{thm:pd_implies_conv_hull}.

\begin{lemma}\label{lem:psd:Ssubset}
Suppose $\Gamma$ is nonempty and write $\Gamma = [\gamma_-,\gamma_+]$. Then, ${\clconv}(\cS)\subseteq \cS(\gamma_-) \cap \cS(\gamma_+)$.
\end{lemma}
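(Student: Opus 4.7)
The plan is to essentially mimic the proof of Lemma~\ref{lem:pd:Ssubset}, but leveraging the \emph{closedness} of $\cS(\gamma_-)$ and $\cS(\gamma_+)$ (not just convexity) to pass from the convex hull to the closed convex hull. No new geometric ingredient is needed: the key fact was already established in Lemma~\ref{lem:psd_implies_convex_closed}, namely that for every $\gamma\in\Gamma$ the set $\cS(\gamma)$ is both convex \emph{and} closed. Under Assumption~\ref{as:psd} (or merely nonemptiness of $\Gamma$), this applies in particular to $\gamma_-$ and $\gamma_+$.

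The steps I would carry out are as follows. First, record the inclusion $\cS=\bigcap_{\gamma\ge 0}\cS(\gamma)\subseteq \cS(\gamma_-)\cap\cS(\gamma_+)$, which is immediate from the definitions and does not require any assumption beyond $\gamma_\pm\in\Gamma$. Second, observe that by Lemma~\ref{lem:psd_implies_convex_closed} the intersection $\cS(\gamma_-)\cap\cS(\gamma_+)$ is closed (as an intersection of closed sets) and convex (as an intersection of convex sets). Third, since the closed convex hull is the smallest closed convex set containing a given set, the inclusion $\cS\subseteq \cS(\gamma_-)\cap \cS(\gamma_+)$ upgrades to
\begin{equation*}
\clconv(\cS)\ \subseteq\ \cS(\gamma_-)\cap \cS(\gamma_+),
\end{equation*}
which is the desired conclusion.

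There is essentially no main obstacle here; the lemma is a soft consequence of Lemma~\ref{lem:psd_implies_convex_closed}. The only subtle point worth emphasizing is the use of \emph{closedness} of $\cS(\gamma_\pm)$, which is what allows the argument to go through under Assumption~\ref{as:psd} (where we may have $\gamma_-=\gamma_+$ and the sets $\cS(\gamma_\pm)$ may only be positive semidefinite, not positive definite, on the quadratic part). In contrast to Theorem~\ref{thm:pd_implies_conv_hull}, where the stronger Assumption~\ref{as:pd} eventually yields equality with $\conv(\cS)$ itself, here we cannot in general drop the closure on the left-hand side; the reverse inclusion and the promotion from $\subseteq$ to equality will be the content of the subsequent lemmas using a limiting/perturbation argument that reduces back to Theorem~\ref{thm:pd_implies_conv_hull}.
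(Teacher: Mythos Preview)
Your proposal is correct and follows essentially the same argument as the paper: both use the inclusion $\cS=\bigcap_{\gamma\ge 0}\cS(\gamma)\subseteq \cS(\gamma_-)\cap\cS(\gamma_+)$ and then invoke Lemma~\ref{lem:psd_implies_convex_closed} to conclude that the right-hand side is closed and convex, so the inclusion persists after taking the closed convex hull on the left.
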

\begin{proof}
Note that $\cS = \bigcap_{\gamma\geq 0} \cS(\gamma) \subseteq \cS(\gamma_-)\cap \cS(\gamma_+)$. Containment then follows by taking the closed convex hull of both sides and noting that both $\cS(\gamma_-)$ and $\cS(\gamma_+)$ are closed and convex  by Lemma \ref{lem:psd_implies_convex_closed}.
\end{proof}

\begin{lemma}\label{lem:psd:Ssupset}
Under Assumption~\ref{as:psd}, we have that $\cS(\gamma_-)\cap \cS(\gamma_+)\subseteq {\clconv}(\Sequal)$.
\end{lemma}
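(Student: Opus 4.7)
My plan is to reduce Lemma~\ref{lem:psd:Ssupset} to Theorem~\ref{thm:pd_implies_conv_hull} by a small perturbation of $q_0$ that pushes the matrix pencil strictly into the interior of the positive semidefinite cone, followed by a limiting argument. Concretely, for $\epsilon\in(0,\abs{\lambda_{\min}(A_0)})$ I would set $q_0^\epsilon(x)\coloneqq q_0(x)+\epsilon\norm{x}^2$, leave $q_1$ unchanged, and denote by $\cS^\epsilon$, $\Sequal^\epsilon$, $\cS^\epsilon(\gamma)$, and $\Gamma^\epsilon=[\gamma_-^\epsilon,\gamma_+^\epsilon]$ the corresponding perturbed objects. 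Then $A_0^\epsilon=A_0+\epsilon I$ still has a negative eigenvalue by the choice of $\epsilon$, while $A_0^\epsilon+\gamma^* A_1 = A(\gamma^*)+\epsilon I \succ 0$, so Assumption~\ref{as:pd} holds for the perturbed problem. Applying Theorem~\ref{thm:pd_implies_conv_hull} gives $\cS^\epsilon(\gamma_-^\epsilon)\cap\cS^\epsilon(\gamma_+^\epsilon)=\conv(\Sequal^\epsilon)$, and since the constraint $q_0^\epsilon(x)\leq t$ implies $q_0(x)\leq t$ we have $\Sequal^\epsilon\subseteq\Sequal$, hence
\begin{equation*}
\cS^\epsilon(\gamma_-^\epsilon)\cap\cS^\epsilon(\gamma_+^\epsilon)\subseteq \conv(\Sequal)\subseteq\clconv(\Sequal).
\end{equation*}

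Next, I would verify that $\gamma_-^\epsilon\uparrow\gamma_-$ and $\gamma_+^\epsilon\downarrow\gamma_+$ as $\epsilon\downarrow 0$. Since $\gamma\in\Gamma^\epsilon$ iff $\lambda_{\min}(A(\gamma))\geq -\epsilon$ and $\gamma\mapsto\lambda_{\min}(A(\gamma))$ is concave, the $\Gamma^\epsilon$ form a nested decreasing family of closed intervals, each containing $\Gamma=[\gamma_-,\gamma_+]$, whose intersection equals $\Gamma$; this forces the endpoints to converge monotonically. I expect this to be the main technical obstacle, as it underpins the entire limiting argument and is most delicate in the degenerate case $\gamma_-=\gamma_+$ permitted under Assumption~\ref{as:psd}, where the perturbed intervals must collapse back to a single point.

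To close out the argument, fix $(\hat x,\hat t)\in\cS(\gamma_-)\cap\cS(\gamma_+)$ and $\delta>0$. Since $q(\gamma_\pm,\hat x)\leq\hat t$, a direct computation using $q^\epsilon(\gamma,x)=q(\gamma,x)+\epsilon\norm{x}^2$ gives
\begin{equation*}
q(\gamma_\pm^\epsilon,\hat x)+\epsilon\norm{\hat x}^2 - (\hat t+\delta) \leq (\gamma_\pm^\epsilon-\gamma_\pm)\,q_1(\hat x) + \epsilon\norm{\hat x}^2 - \delta,
\end{equation*}
and by the convergence established in the previous step, the right-hand side is negative for all sufficiently small $\epsilon>0$. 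Thus $(\hat x,\hat t+\delta)\in\cS^\epsilon(\gamma_-^\epsilon)\cap\cS^\epsilon(\gamma_+^\epsilon)\subseteq\clconv(\Sequal)$ for any such $\epsilon$. Letting $\delta\downarrow 0$ and invoking closedness of $\clconv(\Sequal)$ then yields $(\hat x,\hat t)\in\clconv(\Sequal)$, as desired.
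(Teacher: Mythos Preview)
Your proposal is correct and follows essentially the same approach as the paper: perturb $A_0$ by $\epsilon I$ so that Assumption~\ref{as:pd} holds, apply Theorem~\ref{thm:pd_implies_conv_hull} to the perturbed system, use $\Sequal^\epsilon\subseteq\Sequal$, and pass to the limit via a shift in the $t$-coordinate. The only cosmetic differences are that the paper swaps the roles of your symbols $\epsilon$ and $\delta$, and where you explicitly argue the endpoint convergence $\gamma_\pm^\epsilon\to\gamma_\pm$ via the nested intervals $\Gamma^\epsilon$, the paper simply invokes continuity of $\gamma_\pm'$ in the perturbation parameter at $0$.
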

\begin{proof}
Let $(\hat x,\hat t)\in \cS(\gamma_-)\cap \cS(\gamma_+)$. It suffices to show that $(\hat x,\hat t+\epsilon)\in\conv(\Sequal)$ for all $\epsilon>0$.

We will perturb $A_0$ slightly to create a new GTRS instance.
Let $\delta>0$ to be picked later. Define $A_0' = A_0 + \delta I_n$ and let all remaining data be unchanged, i.e.,
\begin{align*}
q_0'(x) &\coloneqq x^\top A_0' x + 2b_0'^\top x + c_0' \coloneqq x^\top (A_0+\delta I_n) x + 2b_0^\top x + c_0\\
q_1'(x) &\coloneqq x^\top A_1' x + 2b_1'^\top x + c_1' \coloneqq x^\top A_1 x + 2b_1^\top x + c_1.
\end{align*}
We will denote all quantities related to the perturbed system with an apostrophe. 

We claim that it suffices to show that there exists a $\delta>0$ small enough such that the GTRS defined by $q_0'$ and $q_1'$ satisfies Assumption~\ref{as:pd} and $(\hat x,\hat t+\epsilon)\in \cS'(\gamma_-')\cap \cS'(\gamma_+')$ . Indeed, suppose this is the case. Note that for any $x\in\R^n$, we have $q_1(x)= q'_1(x)$ and $q_0(x)\leq q'_0(x)$. Hence, $\Sequal'\subseteq\Sequal$ and $\conv(\Sequal')\subseteq\conv(\Sequal)$.
Then applying Theorem~\ref{thm:pd_implies_conv_hull} gives $(\hat x,\hat t+\epsilon)\in\cS'(\gamma_-')\cap \cS'(\gamma_+') = \conv(\Sequal')\subseteq\conv(\Sequal)$ as desired.

We pick $\delta>0$ small enough such that
\begin{align*}
\lambda_{\min}(A_0')<0,\quad
\delta\norm{\hat x}^2 \leq \frac{\epsilon}{2},\quad
\abs{\gamma'_+ - \gamma_+}\abs{q_1(\hat x)} \leq \frac{\epsilon}{2},
\quad
\abs{\gamma'_- - \gamma_-}\abs{q_1(\hat x)} \leq \frac{\epsilon}{2}.
\end{align*}
This is possible as the expression on the left of each inequality is continuous in $\delta$ and is strictly satisfied if $\delta=0$.
Then, noting that $A'(\gamma^*) = A(\gamma^*) + \delta I_n\succ 0$, we have that the GTRS defined by $q_0'$ and $q_1'$ satisfies Assumption~\ref{as:pd}.

It remains to show that $q'(\gamma'_+,\hat x)\leq (\hat t+\epsilon)$ and $q'(\gamma'_-,\hat x)\leq (\hat t+\epsilon)$. We compute
\begin{align*}
q'(\gamma'_+, \hat x) - (\hat t+\epsilon)
&= q'(\gamma_+,\hat x) - (\hat t+\epsilon) + (\gamma'_+ - \gamma_+)q_1(\hat x)\\
&\leq q(\gamma_+,\hat x)+\delta \norm{\hat x}^2 - (\hat t+\epsilon) +\abs{\gamma'_+ - \gamma_+}\abs{q_1(\hat x)}\\
&\leq q(\gamma_+,\hat x) - \hat t\\
&\leq 0.
\end{align*}
The first inequality follows by noting $q'(\gamma,x) = q(\gamma,x) + \delta\norm{x}^2$, the second inequality follows from our assumptions on $\delta$, and the third line follows from the assumption that $(\hat x, \hat t)\in \cS(\gamma_+)$. 
A similar calculation shows $q'(\gamma'_-, \hat x)\leq (t+\epsilon)$. This concludes the proof.
\end{proof}

We are now ready to prove Theorem \ref{thm:psd_implies_conv_hull}.
\begin{proof}
[Proof of Theorem \ref{thm:psd_implies_conv_hull}]
Lemmas~\ref{lem:psd:Ssubset}~and~\ref{lem:psd:Ssupset} together imply
\begin{align*}
\cS(\gamma_-)\cap\cS(\gamma_+) \subseteq {\clconv}(\Sequal) \subseteq {\clconv}(\cS) \subseteq \cS(\gamma_-)\cap\cS(\gamma_+).
\end{align*}
Hence, we deduce that equality holds throughout the chain of inclusions.

In particular, the GTRS \eqref{eq:GTRSepi} can be rewritten
\begin{align*}
\inf_{(x,t)\in\R^{n+1}}\set{t:~ (x,t)\in \cS}
&= \inf_{(x,t)\in\R^{n+1}}\set{t:~ (x,t)\in {\clconv}(\cS)}\\
&= \inf_{(x,t)\in\R^{n+1}}\set{t:~ (x,t)\in \cS(\gamma_-)\cap \cS(\gamma_+)}\\
&= \inf_{(x,t)\in\R^{n+1}}\set{t:~ \begin{array}
	{l}
	q(\gamma_-,x)\leq t\\
	q(\gamma_+,x)\leq t
\end{array}}\\
&= \inf_{x\in\R^n}\max\set{q(\gamma_-,x),q(\gamma_+,x)}.\qedhere
\end{align*}
\end{proof}

\subsection{Removing the nonconvex assumptions}

As part of our Assumptions~\ref{as:pd}~and~\ref{as:psd}, we assume that $A_0$ and $A_1$ both have negative eigenvalues, i.e., that both $q_0$ and $q_1$ are nonconvex.
These assumptions are made for ease of presentation and to highlight the novel contributions of this work.
Indeed, the proofs of Theorems~\ref{thm:pd_implies_conv_hull}~and~\ref{thm:psd_implies_conv_hull} can be modified to additionally cover all four cases of convex/nonconvex objective and constraint functions. We remark that the resulting theorem statement for the case of a nonconvex objective function and a strongly convex constraint function coincides with that of \citet{Ho-NguyenKK2017}.

In this section we record more general versions Theorems~\ref{thm:pd_implies_conv_hull}~and~\ref{thm:psd_implies_conv_hull}. Their proofs are completely analogous to the original proofs and are deferred to Appendix~\ref{app:general_conv_hull_proofs}.

\begin{restatable}{theorem}{pdGeneralConvHull}
\label{thm:pd_implies_conv_hull_general}
Suppose there exists $\gamma^*\geq 0$ such that $A(\gamma^*)\succ 0$. Consider the closed nonempty interval $\Gamma\coloneqq \set{\gamma\in\R_+:\, A(\gamma)\succeq 0}$. Let $\gamma_-$ denote its leftmost endpoint.
\begin{itemize}
	\item If $\Gamma$ is bounded above, let $\gamma_+$ denote its rightmost endpoint. Then,
	\begin{align*}
	\conv(\cS) = \cS(\gamma_-)\cap \cS(\gamma_+).
	\end{align*}
	In particular, we have $\min_{x\in\R^n}\set{q_0(x):\, q_1(x)\leq 0} = \min_{x\in\R^n} \max\set{q(\gamma_-, x),\, q(\gamma_+,x)}$.
	\item If $\Gamma$ is not bounded above, then $q_1(x)$ is convex and
	\begin{align*}
	\conv(\cS) = \cS(\gamma_-)\cap \set{(x,t)\in\R^{n+1}:\, q_1(x)\leq 0}.
	\end{align*}
	In particular, we have $\min_{x\in\R^n}\set{q_0(x):\, q_1(x)\leq 0} = \min_{x\in\R^n} \set{q(\gamma_-,x):\, q_1(x)\leq 0}$.
\end{itemize}
\end{restatable}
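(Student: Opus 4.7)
The plan is to mirror the proof of Theorem~\ref{thm:pd_implies_conv_hull} while branching on the two cases of bounded/unbounded $\Gamma$. The easy inclusions fit a common pattern: in case 1 both $\cS(\gamma_-)$ and $\cS(\gamma_+)$ are convex (Lemma~\ref{lem:psd_implies_convex_closed}) and contain $\cS$; in case 2 I would first observe that $\Gamma$ unbounded above forces $A_1\succeq 0$ (otherwise any $d$ with $d^\top A_1 d<0$ gives $d^\top A(\gamma) d\to -\infty$), so the set $\{(x,t):q_1(x)\leq 0\}$ is convex and contains $\cS$. In either case, taking convex hulls yields the $\subseteq$ direction.

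For the reverse inclusion I would reuse the three-case split of Lemma~\ref{lem:pd:Ssupset} on the sign of $q_1(\hat x)$. Case (i) $q_1(\hat x)=0$ is unchanged: $q_0(\hat x)=q(\gamma_-,\hat x)\leq \hat t$ places $(\hat x,\hat t)\in\cS$ directly. Case (ii) $q_1(\hat x)>0$ only arises in case 1, and the original argument goes through verbatim: $\gamma_+<\infty$ is a boundary point of $\Gamma$, so $A(\gamma_+)$ has a zero eigenvalue; by continuity, one may pick $\gamma^*\in(\gamma_-,\gamma_+)$ with $A(\gamma^*)\succ 0$, and so for any nonzero $d\in\ker(A(\gamma_+))$ convex combination forces $d^\top A(\gamma_-) d>0$, giving the strong convexity in $\alpha$ needed to produce the two endpoints in $\Sequal$.

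Case (iii) $q_1(\hat x)<0$ is where the argument must be adjusted. If $\gamma_-=0$, then $\cS(\gamma_-)=\{(x,t):q_0(x)\leq t\}$, so $(\hat x,\hat t)\in\cS$ directly in either sub-case. Suppose now $\gamma_->0$, so $A(\gamma_-)$ has a zero eigenvalue. In case 1 the same $\gamma^*$-argument from case (ii) delivers $d\neq 0$ with $A(\gamma_-) d=0$ and $d^\top A(\gamma_+) d>0$, and the original proof applies. In case 2, I instead need $d\neq 0$ with $A(\gamma_-) d=0$ and $d^\top A_1 d>0$, so that $q_1$ is strongly convex along $d$ and hits zero on both sides of $\alpha=0$. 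This is the step I expect to be the main obstacle. I would argue by contradiction: if every nonzero $d\in\ker(A(\gamma_-))$ had $d^\top A_1 d=0$, then $A_1\succeq 0$ gives $A_1 d=0$, and hence $A_0 d=-\gamma_- A_1 d=0$, so $\ker(A(\gamma_-))\subseteq \ker(A_0)\cap\ker(A_1)$. Using the block decomposition of $A_0,A_1$ along $\ker(A(\gamma_-))\oplus \ker(A(\gamma_-))^\perp$, one checks that $A(\gamma_--\epsilon)$ remains positive semidefinite for all sufficiently small $\epsilon>0$, contradicting $\gamma_->0$ being the left endpoint of $\Gamma$. Given such a $d$, the tail of Lemma~\ref{lem:pd:Ssupset} goes through unchanged: after defining $e\coloneqq 2(\hat x^\top A(\gamma_-) d + b(\gamma_-)^\top d)$, one produces $\alpha_1<0<\alpha_2$ with $(\hat x+\alpha_i d,\hat t+\alpha_i e)\in\cS$, and $(\hat x,\hat t)$ is their convex combination.

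Finally, the ``in particular'' reformulations follow from the convex hull results because a linear objective attains the same infimum on a set as on its convex hull; attainment of the minimum in case 1 follows by the compactness argument at the end of the proof of Theorem~\ref{thm:pd_implies_conv_hull}, again using the existence of $\gamma^*\in(\gamma_-,\gamma_+)$ with $A(\gamma^*)\succ 0$.
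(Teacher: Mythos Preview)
Your proposal is correct and follows the paper's strategy closely. The paper organizes the case analysis by whether each of $A_0$, $A_1$ is convex (four cases, handling both bullet points simultaneously), while you split on boundedness of $\Gamma$ and then on $\gamma_-=0$ versus $\gamma_->0$; these partitions are equivalent. The one substantive difference is the unbounded case with $\gamma_->0$: the paper simply reruns case~(iii) of Lemma~\ref{lem:pd:Ssupset} verbatim after replacing every occurrence of $\gamma_+$ by $\gamma^*$, so the needed strong convexity in $\alpha$ comes for free from $d^\top A(\gamma^*)d>0$. Your route through $A_1$ also works, but the block-decomposition contradiction for $d^\top A_1 d>0$ is heavier than necessary---since $d^\top A(\gamma_-)d=0$, $d^\top A(\gamma^*)d>0$, and $\gamma^*>\gamma_-$, linearity of $\gamma\mapsto d^\top A(\gamma)d$ gives $d^\top A_1 d>0$ in one line. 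A minor point: your attainment remark asserts $\gamma^*\in(\gamma_-,\gamma_+)$, but when $\gamma_-=0$ one may have $\gamma^*=0$; the compactness argument still goes through since $\max\{q(\gamma_-,x),q(\gamma_+,x)\}\ge q(\gamma_-,x)=q(\gamma^*,x)$ in that case.
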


\begin{restatable}{theorem}{psdGeneralConvHull}
\label{thm:psd_implies_conv_hull_general}
Suppose there exists $\gamma^*\geq 0$ such that $A(\gamma^*)\succeq 0$. Consider the closed nonempty interval $\Gamma\coloneqq \set{\gamma\in\R_+:\, A(\gamma)\succeq 0}$. Let $\gamma_-$ denote its leftmost endpoint.
\begin{itemize}
	\item If $\Gamma$ is bounded above, let $\gamma_+$ denote its rightmost endpoint. Then,
	\begin{align*}
	\overline{\conv}(\cS) = \cS(\gamma_-)\cap \cS(\gamma_+).
	\end{align*}
	In particular, $\inf_{x\in\R^n}\set{q_0(x):\, q_1(x)\leq 0} = \inf_{x\in\R^n} \max\set{q(\gamma_-, x),\, q(\gamma_+,x)}$.
	\item If $\Gamma$ is not bounded above, then $q_1(x)$ is convex and
	\begin{align*}
	\overline{\conv}(\cS) = \cS(\gamma_-)\cap \set{(x,t)\in\R^{n+1}:\, q_1(x)\leq 0}.
	\end{align*}
	In particular, $\inf_{x\in\R^n}\set{q_0(x):\, q_1(x)\leq 0} = \inf_{x\in\R^n} \set{q(\gamma_-,x):\, q_1(x)\leq 0}$.
\end{itemize}
\end{restatable}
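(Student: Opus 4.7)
The plan is to mimic the proof of Theorem~\ref{thm:psd_implies_conv_hull}: reduce to the positive-definite analog Theorem~\ref{thm:pd_implies_conv_hull_general} via the perturbation $A_0\mapsto A_0+\delta I_n$, handling the bounded and unbounded cases for $\Gamma$ in parallel. A useful preliminary observation is that boundedness of $\Gamma$ is equivalent to $A_1\not\succeq 0$: if $A_1\succeq 0$ then $A(\gamma)=A(\gamma^*)+(\gamma-\gamma^*)A_1\succeq 0$ for all $\gamma\geq \gamma^*$, whereas any $v$ with $v^\top A_1 v<0$ drives $v^\top A(\gamma)v\to-\infty$. In particular, in the unbounded case $q_1$ is convex and $\{(x,t):q_1(x)\leq 0\}$ is closed and convex.

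The easy containment uses $\cS=\bigcap_{\gamma\geq 0}\cS(\gamma)$. In the bounded case, $\cS\subseteq \cS(\gamma_-)\cap\cS(\gamma_+)$; taking closed convex hulls and applying Lemma~\ref{lem:psd_implies_convex_closed} yields $\clconv(\cS)\subseteq \cS(\gamma_-)\cap\cS(\gamma_+)$. In the unbounded case, $\cS\subseteq \cS(\gamma_-)\cap\{q_1\leq 0\}$, and the same closed convex hull argument applies using the preliminary observation.

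For the reverse containment, I would fix $(\hat x,\hat t)$ in the right-hand side and $\epsilon>0$, and show $(\hat x,\hat t+\epsilon)\in\conv(\cS)$ using the perturbed data $A_0'\coloneqq A_0+\delta I_n$ for small $\delta>0$, all other data unchanged, with primed quantities denoting those of the perturbed problem. Since $A'(\gamma^*)\succ 0$, Theorem~\ref{thm:pd_implies_conv_hull_general} applies to the perturbed problem. Crucially, because $A_1$ is unchanged, boundedness of $\Gamma'$ is equivalent to $A_1\not\succeq 0$ just as for $\Gamma$, so the same sub-case of Theorem~\ref{thm:pd_implies_conv_hull_general} fires. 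Continuity of eigenvalues gives $\gamma_-'\to\gamma_-$ and, in the bounded case, $\gamma_+'\to\gamma_+$ as $\delta\to 0$. I would then choose $\delta$ small enough to simultaneously make $\delta\|\hat x\|^2$ and $|\gamma_\pm'-\gamma_\pm|\,|q_1(\hat x)|$ at most $\epsilon/2$. A calculation identical to Lemma~\ref{lem:psd:Ssupset} then places $(\hat x,\hat t+\epsilon)$ in $\cS'(\gamma_-')\cap \cS'(\gamma_+')$ (bounded case) or $\cS'(\gamma_-')\cap\{q_1\leq 0\}$ (unbounded case). Theorem~\ref{thm:pd_implies_conv_hull_general} then gives $(\hat x,\hat t+\epsilon)\in\conv(\cS')\subseteq \conv(\cS)$, the last inclusion because $q_0'\geq q_0$ pointwise forces $\cS'\subseteq\cS$. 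Sending $\epsilon\to 0$ and using that $\clconv(\cS)$ is closed yields the desired reverse inclusion. The infimum reformulations then follow immediately because $t$ is linear, so the infimum is unchanged when passing to the closed convex hull.

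The main obstacle is ensuring that the perturbation preserves the bounded/unbounded dichotomy so that the correct sub-case of Theorem~\ref{thm:pd_implies_conv_hull_general} applies, and that $\gamma_\pm'\to\gamma_\pm$ continuously as $\delta\to 0$. Both points rest on the simple fact that shifting $A_0$ by $\delta I_n$ leaves $A_1$ (and hence the large-$\gamma$ asymptotics of the pencil) untouched while shifting every eigenvalue of $A(\gamma)$ by exactly $\delta$.
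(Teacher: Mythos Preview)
Your proposal is correct and follows essentially the same route as the paper's proof: the ``$\subseteq$'' inclusion via closedness and convexity of the right-hand side, and the ``$\supseteq$'' inclusion by perturbing $A_0\mapsto A_0+\delta I_n$, observing that the bounded/unbounded dichotomy for $\Gamma$ is preserved (since $A_1$ is unchanged), and then invoking Theorem~\ref{thm:pd_implies_conv_hull_general} on the perturbed system exactly as in Lemma~\ref{lem:psd:Ssupset}. Your explicit characterization ``$\Gamma$ bounded iff $A_1\not\succeq 0$'' is a slightly cleaner way to phrase the paper's one-line observation ``$A_1=A_1'$ so that $\Gamma$ is bounded if and only if $\Gamma'$ is bounded,'' but the substance is identical.
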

These results admit further nontrivial generalizations involving multiple quadratics; we refer the interested readers to our follow up work~\cite{wang2019tightness}.

\begin{remark}\label{rem:convComparisonY}
\citet{yildiran2009convex} proves a convex hull result for a set defined by two \emph{strict} quadratic constraints.
\citet{modaresi2017convex} then show that given a particular topological assumption, that the appropriate closed versions of \citet{yildiran2009convex}'s results also hold.
We discuss these results in the context of the convex hull results we have presented thus far. Given $q_0$ and $q_1$ we will consider the quadratic functions $q_0(x)-t$ and $q_1(x)$ in the variables $(x,t)$. As \cite{yildiran2009convex} works with homogeneous quadratics, we introduce an extra variable to get homogeneous quadratic forms. Define
\begin{align*}
Q_0 \coloneqq \begin{pmatrix}
	A_0 & 0 & b_0 \\
	0^\top & 0 & -1/2\\
	b_0^\top & -1/2 & c_0
\end{pmatrix}
,\quad
Q_1 \coloneqq \begin{pmatrix}
	A_1 & 0 & b_1 \\
	0^\top & 0 & 0\\
	b_1^\top & 0 & c_1
\end{pmatrix}
,\quad
Q(\gamma) \coloneqq \begin{pmatrix}
	A(\gamma) & 0 & b(\gamma)\\
	0^\top & 0 & -1/2\\
	b(\gamma)^\top & -1/2 & c(\gamma)
\end{pmatrix}.
\end{align*}
\citet{yildiran2009convex} uses the aggregation weights $\gamma$ where $Q(\gamma)$ has exactly one negative eigenvalue. Note that for all $\gamma\geq 0$, the lower right $2\by 2$ block of $Q(\gamma)$ is invertible. Thus, we may take the Schur complement of this block in $Q(\gamma)$:
\begin{align*}
Q(\gamma)/\begin{pmatrix}
	0 & -1/2 \\ -1/2 & c(\gamma)
\end{pmatrix} = A(\gamma) - \begin{pmatrix}
	0 & b(\gamma)
\end{pmatrix}
\begin{pmatrix}
	0 & -1/2 \\ -1/2 & c(\gamma)
\end{pmatrix}^{-1}
\begin{pmatrix}
	0^\top \\ b(\gamma)^\top
\end{pmatrix} = A(\gamma).
\end{align*}
Recall that Schur complements preserve inertia. In other words, $Q(\gamma)$ and
\begin{align*}
\begin{pmatrix}
	A(\gamma) &  & \\
	 & 0 & -1/2\\
	 & -1/2 & c(\gamma).
\end{pmatrix}
\end{align*}
have the same number of negative eigenvalues. Noting that the lower right $2\by 2$ block has exactly one negative eigenvalue, we conclude that $Q(\gamma)$ has exactly one negative eigenvalue if and only if $A(\gamma)\succeq 0$.
The result presented by \citet{yildiran2009convex} then implies
\begin{align*}
\conv\left(\set{(x,t):\, \begin{array}
	{l}
	q_0(x)<t\\
	q_1(x)<0
\end{array}}\right) = \set{(x,t):\, q(\gamma_-,x)<t}\cap\set{(x,t):\, q(\gamma_+,x)<t}
\end{align*}
when $\gamma_+$ exists and
\begin{align*}
\conv\left(\set{(x,t):\, \begin{array}
	{l}
	q_0(x)<t\\
	q_1(x)<0
\end{array}}\right) = \set{(x,t):\, q(\gamma_-,x)<t}\cap\set{(x,t):\, q_1(x)<0}
\end{align*}
otherwise.

One can then verify the topological assumption of \citet{modaresi2017convex}, namely that $\cS \subseteq\overline{\inter(\cS)}$ the closure of the interior of $\cS$. Thus, combining these two results gives an alternate proof of Theorems~\ref{thm:pd_implies_conv_hull_general}~and~\ref{thm:psd_implies_conv_hull_general}.

We believe our analysis is simpler and more direct. In particular, our analysis takes advantage of the epigraph structure present in our sets and immediately implies a rounding procedure via Lemma~\ref{lem:pd:Ssupset}. In addition, our results are more refined when Assumption~\ref{as:pd}~or~\ref{as:psd} hold as we can also characterize the (closed) convex hull of the set $\Sequal$ and show that it is equal to that of $\cS$. This particular distinction between $\Sequal$ and $\cS$ has a number of interesting implications in equality-, interval-, or hollow-constrained GTRS, and we discuss these results in the following section.
\end{remark} 

\section{Nonintersecting constraints}
\label{sec:nonintersecting_constraints}

There have been a number of works considering interval-, equality-, or hollow-constrained variants of the GTRS~\cite{BenTalTeboulle1996,BenTalDenHertog2014,jiang2018socp,jiang2019novel,PongWolkowicz2014,salahi2018efficient,more1993generalizations,SternWolkowicz1995,	yang2018quadratic} (see \cite[Section 3.3]{Ho-NguyenKK2017} and references therein for extensions of the TRS and their applications).
In this section, we extend our (closed) convex hull results in the presence of a general nonintersecting constraint. This allows us to handle multiple variants of the GTRS simultaneously.

Specifically, we will impose an additional requirement $x\in\Omega$. The new form of the GTRS will be 
\begin{align*}
\inf_{x\in\R^n}\set{q_0(x):~ \begin{array}
	{l}
	q_1(x) \leq 0\\
	x\in \Omega
\end{array}} = \inf_{(x,t)\in\R^{n+1}}\set{t:~ \begin{array}
	{l}
	q_0(x)\leq t\\
	q_1(x) \leq 0\\
	x\in \Omega
\end{array}}.
\end{align*}
Let $\cS_\Omega$ denote the set of feasible points $(x,t)$, i.e.,
\begin{align*}
\cS_\Omega \coloneqq 
\set{(x,t)\in\R^{n+1} :~ \begin{array}
	{l}
	q_0(x) \leq t\\
	q_1(x) \leq 0\\
	x\in\Omega
\end{array}}.
\end{align*}

We will assume that $\Omega\subseteq\R^n$ satisfies the following \emph{nonintersecting} condition.
\begin{assumption}\label{as:nonintersecting}
The set $\Omega\subseteq\R^n$ satisfies $\set{x\in\R^n :~ q_1(x) = 0}\subseteq\Omega$.
\end{assumption}
The following two corollaries to Theorems \ref{thm:pd_implies_conv_hull} and \ref{thm:psd_implies_conv_hull} follow immediately by noting that $\Sequal\subseteq \cS_\Omega \subseteq \cS$ holds under Assumption~\ref{as:nonintersecting}.
\begin{corollary}
Suppose Assumptions~\ref{as:pd}~and~\ref{as:nonintersecting} hold. Then,
\begin{align*}
\conv(\cS_\Omega)=\cS(\gamma_-)\cap \cS(\gamma_+).
\end{align*}
\end{corollary}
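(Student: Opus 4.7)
The plan is to deduce this corollary directly from Theorem~\ref{thm:pd_implies_conv_hull} via a sandwich argument, exploiting the stronger form of that theorem which characterizes not only $\conv(\cS)$ but also $\conv(\Sequal)$. The key observation, flagged in the text preceding the corollary, is the chain of inclusions $\Sequal \subseteq \cS_\Omega \subseteq \cS$.

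First I would verify each inclusion. The right inclusion is immediate since $\cS_\Omega$ is obtained from $\cS$ by appending the extra restriction $x \in \Omega$. For the left inclusion, I would take any $(x,t) \in \Sequal$; then $q_1(x) = 0$, which by Assumption~\ref{as:nonintersecting} places $x \in \Omega$, and combined with $q_0(x) \leq t$ and $q_1(x) \leq 0$ shows $(x,t) \in \cS_\Omega$. Applying the convex hull operation, which preserves inclusions, yields
\begin{align*}
\conv(\Sequal) \subseteq \conv(\cS_\Omega) \subseteq \conv(\cS).
\end{align*}
Under Assumption~\ref{as:pd}, Theorem~\ref{thm:pd_implies_conv_hull} already asserts $\conv(\Sequal) = \conv(\cS) = \cS(\gamma_-) \cap \cS(\gamma_+)$, so equality is forced throughout the chain and the desired identity $\conv(\cS_\Omega) = \cS(\gamma_-) \cap \cS(\gamma_+)$ follows.

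I do not anticipate any genuine obstacle: the entire corollary is a repackaging of the refined convex hull statement in Theorem~\ref{thm:pd_implies_conv_hull}, where it was crucial that the convex hull of $\cS$ is already generated by points on the level set $\{q_1(x) = 0\}$. The nonintersecting condition on $\Omega$ is precisely the minimal hypothesis guaranteeing that this generating set $\Sequal$ is preserved upon restricting to $\Omega$, which is why imposing $x \in \Omega$ cannot shrink the convex hull. The only thing I would double-check is that Assumption~\ref{as:pd}, which ensures $\gamma_-$ and $\gamma_+$ are well-defined and distinct (see Remark~\ref{rem:assumptions}), carries over unchanged to the constrained setting; since Assumption~\ref{as:pd} concerns only $A_0$ and $A_1$ and not $\Omega$, this is automatic.
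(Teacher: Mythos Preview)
Your proposal is correct and essentially identical to the paper's proof: both use the sandwich $\Sequal \subseteq \cS_\Omega \subseteq \cS$, apply $\conv(\cdot)$, and invoke Theorem~\ref{thm:pd_implies_conv_hull} to collapse the chain. The paper just orders the displayed chain as $\conv(\cS_\Omega) \subseteq \conv(\cS) = \conv(\Sequal) \subseteq \conv(\cS_\Omega)$, but the logic is the same.
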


\begin{proof}
Under Assumptions~\ref{as:pd}~and~\ref{as:nonintersecting}, we get the following chain of inclusions
\begin{align*}
\conv(\cS_\Omega) \subseteq \conv(\cS) = \conv(\Sequal) \subseteq \conv(\cS_\Omega),
\end{align*}
where the first subset relation follows $\cS_\Omega \subseteq \cS$ (by definition of the set $\cS_\Omega$), the equality relation follows from Theorem~\ref{thm:pd_implies_conv_hull}, and the last subset relation follows from $\Sequal\subseteq \cS_\Omega$ (by  Assumption~\ref{as:nonintersecting}). 
We conclude that $\conv(\cS_\Omega)=\conv(\cS)$. By Theorem~\ref{thm:pd_implies_conv_hull}, we know that $\conv(\cS) = \cS(\gamma_-)\cap \cS(\gamma_+)$.
\end{proof}

\begin{corollary}
Suppose Assumptions~\ref{as:psd}~and~\ref{as:nonintersecting} hold. Then,
\begin{align*}
\overline{\conv}(\cS_\Omega)=\cS(\gamma_-)\cap \cS(\gamma_+).
\end{align*}
\end{corollary}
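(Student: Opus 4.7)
The plan is to mirror the proof of the preceding corollary (the positive definite case) almost verbatim, substituting the closed convex hull operator $\overline{\conv}(\cdot)$ for $\conv(\cdot)$ and invoking Theorem~\ref{thm:psd_implies_conv_hull} in place of Theorem~\ref{thm:pd_implies_conv_hull}. The whole argument will hinge on the sandwich $\Sequal\subseteq \cS_\Omega\subseteq \cS$ and the monotonicity of the closed convex hull operation under set inclusion.

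Concretely, I would first observe that Assumption~\ref{as:nonintersecting} (together with the fact that $A_1$ has both positive and negative eigenvalues under Assumption~\ref{as:psd}, so that $\Sequal$ is nonempty and contained in $\Omega$ through its $x$-projection) yields the two inclusions $\Sequal\subseteq \cS_\Omega$ and $\cS_\Omega\subseteq \cS$. Next, applying $\overline{\conv}$ to each inclusion and using the elementary fact that $A\subseteq B$ implies $\overline{\conv}(A)\subseteq\overline{\conv}(B)$, I would write the chain
\begin{align*}
\overline{\conv}(\cS_\Omega)\;\subseteq\;\overline{\conv}(\cS)\;=\;\overline{\conv}(\Sequal)\;\subseteq\;\overline{\conv}(\cS_\Omega),
\end{align*}
where the middle equality is exactly the content of Theorem~\ref{thm:psd_implies_conv_hull}. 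Consequently $\overline{\conv}(\cS_\Omega)=\overline{\conv}(\cS)$, and Theorem~\ref{thm:psd_implies_conv_hull} again identifies this common set with $\cS(\gamma_-)\cap \cS(\gamma_+)$.

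I do not anticipate any genuine obstacle: the argument is essentially formal once the two ingredients above are in place. The only point worth double-checking is that Theorem~\ref{thm:psd_implies_conv_hull} delivers both $\overline{\conv}(\cS)=\cS(\gamma_-)\cap \cS(\gamma_+)$ and $\overline{\conv}(\Sequal)=\cS(\gamma_-)\cap \cS(\gamma_+)$---i.e., that its convex hull identity already passes through the ``equality-tight'' subset $\Sequal$, which is precisely what was established in the proof of that theorem via Lemma~\ref{lem:psd:Ssupset}. This refined form is exactly what makes the nonintersecting extension essentially free: any nonintersecting $\Omega$ is forced to contain the generators of the closed convex hull, so enforcing $x\in\Omega$ cannot shrink the closed convex hull below $\cS(\gamma_-)\cap \cS(\gamma_+)$.
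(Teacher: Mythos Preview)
Your proposal is correct and follows essentially the same approach as the paper: both use the sandwich $\Sequal\subseteq\cS_\Omega\subseteq\cS$, apply $\overline{\conv}$, invoke Theorem~\ref{thm:psd_implies_conv_hull} for the identity $\overline{\conv}(\cS)=\overline{\conv}(\Sequal)$, and conclude via the resulting chain of inclusions.
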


\begin{proof}
Applying Assumptions~\ref{as:psd}~and~\ref{as:nonintersecting} and Theorem~\ref{thm:psd_implies_conv_hull}, we get the following chain of inclusions
\begin{align*}
\overline{\conv}(\cS_\Omega) \subseteq \overline{\conv}(\cS) = \overline{\conv}(\Sequal) \subseteq \overline{\conv}(\cS_\Omega).
\end{align*}
We conclude that $\overline{\conv}(\cS_\Omega)=\overline{\conv}(\cS)$. By Theorem~\ref{thm:psd_implies_conv_hull}, we know that $\overline{\conv}(\cS) = \cS(\gamma_-)\cap \cS(\gamma_+)$.
\end{proof}

\begin{remark}
These two corollaries show that nonintersecting constraints in the GTRS may be ignored. Consider for example the interval-constrained GTRS. Define
\begin{align*}
\Omega \coloneqq \set{x\in\R^n:~ q_1(x) \geq -1}.
\end{align*}
Then, clearly Assumption~\ref{as:nonintersecting} is satisfied. 
Under Assumption~\ref{as:psd}, we have 
\begin{align*}
\inf_{x\in\R^n}\set{q_0(x) :\, -1\leq q_1(x)\leq 0}
&= \inf_{(x,t)\in\R^{n+1}}\set{t:~(x,t)\in\cS_\Omega}\\
&= \inf_{(x,t)\in\R^{n+1}}\set{t:~ (x,t)\in \overline{\conv}(\cS_\Omega)}\\
&= \inf_{(x,t)\in\R^{n+1}}\set{t:~ (x,t)\in \cS(\gamma_-)\cap \cS(\gamma_+)}\\
&= \inf_{x\in\R^n} \max\set{q(\gamma_-,x),q(\gamma_+,x)}.
\end{align*}
Thus, the value of the interval-constrained GTRS is the same as the GTRS under Assumption~\ref{as:psd}. Similarly, the $\Omega$ sets arising from equality- or hollow-constrained GTRS also satisfy Assumption~\ref{as:nonintersecting}. Hence, under Assumption~\ref{as:psd}, the additional constraints in these variants of the GTRS can also be dropped.
\end{remark} 

\section{Solving the convex reformulation in linear time}
\label{sec:alg}

In this section we present algorithms, inspired by Theorem \ref{thm:pd_implies_conv_hull}, for approximately solving the GTRS.
Note that Theorem \ref{thm:pd_implies_conv_hull} gives a tight convex reformulation of the GTRS: under Assumption \ref{as:pd},
\begin{align*}
\Opt \coloneqq \min_{x\in\R^n} \set{q_0(x):\,q_1(x)\leq 0} = \min_{x\in\R^n}\max\set{q(\gamma_-,x),q(\gamma_+,x)}.
\end{align*}
Then given a solution to the convex reformulation on the right, Lemma~\ref{lem:pd:Ssupset} gives a rounding scheme to recover a solution to the original GTRS on the left.

In order to establish an explicit running time of an algorithm based on the above idea, we must carefully handle a number of numerical issues.
In practice, we cannot expect to compute $\gamma_\pm$ exactly.
Instead, we will show how to compute estimates $\tilde\gamma_\pm$ of $\gamma_\pm$ up to some accuracy $\delta$.
We will take care to pick $\tilde \gamma_\pm$ satisfying the relation $[\tilde \gamma_-,\tilde \gamma_+]\subseteq [\gamma_-,\gamma_+]$ so that the quadratic forms defined by $A(\tilde\gamma_-)$ and $A(\tilde\gamma_+)$ are convex.
Based on the estimates $\tilde\gamma_\pm$, we will then formulate and solve the convex optimization problem
\[
\widetilde\Opt \coloneqq \min_{x\in\R^n}\max \set{q(\tilde\gamma_-,x),q(\tilde\gamma_+,x)}.
\]
Finally, given an (approximate) solution to the convex problem $\widetilde\Opt$, Lemma~\ref{lem:pd:Ssupset} tells us how to construct a solution to the original nonconvex GTRS using specific eigenvectors. Again, we will need to handle numerical issues that arise from not being able to compute these eigenvectors exactly.

Throughout this section, we will work under the following assumption.
\begin{assumption}
	\label{as:alg}
	\leavevmode
	\begin{itemize}
		\item There exists some $\gamma^*\geq0$ such that $A(\gamma^*)\succ 0$,
		\item $\norm{A_0},\norm{A_1},\norm{b_0},\norm{b_1},\abs{c_1}\leq 1$.\qedhere
	\end{itemize}
\end{assumption}

\begin{remark}
Note that the first part of Assumption~\ref{as:alg} is simply Assumption~\ref{as:pd}. We make this assumption so that we may use the convex reformulation guaranteed by Theorem~\ref{thm:pd_implies_conv_hull}. 
Assumption~\ref{as:pd} is commonly used in GTRS algorithms; see e.g., \citet[Assumption 2.3]{jiang2020linear} and the discussion following it.  
The second part of Assumption~\ref{as:alg} can be achieved for an arbitrary pair $q_0$ and $q_1$ by simply scaling each quadratic by a positive scalar. Note that any optimal (respectively feasible) solution remains optimal (respectively feasible) when $q_0$ (respectively $q_1$) is scaled by a positive scalar.
\end{remark}

We will analyze the running time of our algorithm in terms of $N$, the number of nonzero entries in $A_0$ and $A_1$, $\epsilon$, the additive error, $p$, the failure probability, and $n$, the dimension.
In addition, the running time of our algorithm depends on certain regularity parameters of the pair $q_0$ and $q_1$ defined below.
\begin{definition}
\label{def:regularity}
Let $q_0$, $q_1$ satisfy Assumption \ref{as:alg}. Define
\begin{align*}
\zeta^*\coloneqq \max\set{1,\gamma_+},
~~\text{ and }~~
\xi^*\coloneqq \min\set{1, \max_{\gamma\geq 0}\lambda_{\min}(A(\gamma))}.
\end{align*}
We say that $q_0$ and $q_1$ are $(\xi,\zeta)$ regular if $0<\xi\leq\xi^*$ and $\zeta\geq\zeta^*$.
Define $\kappa^* = \zeta^*/\xi^*$. When $(\xi,\zeta)$ are clear from context we will write $\kappa\coloneqq \zeta/\xi$.
\end{definition}

In our analysis, we will frequently use the inequalities $\kappa,\zeta,\xi^{-1}\geq 1$,  which for example imply $\kappa^2\geq \kappa$ and $1+\kappa \leq 2\kappa$, and the inequalities $\gamma_-\leq\gamma_+\leq \zeta$, which for example under Assumption~\ref{as:alg} imply $\norm{A(\gamma_+)}\leq 1+\zeta\leq 2\zeta$.

\begin{remark}
\label{rem:regularity_comparison}
\citet{jiang2020linear} present a different linear-time algorithm for solving the GTRS. In their paper, they assume they are given a regularity parameter $\xi_{\text{JL}}$ as input. This parameter must satisfy $\xi_{\text{JL}}\leq\xi_{\text{JL}}^*$ where
\begin{align*}
\xi^*_{\text{JL}} &\coloneqq \min\set{1,-\lambda_{\min}(A_1), \max_{\mu\in(0,1]} \lambda_{\min}\left(\mu A_0 +(1-\mu) A_1 \right)}.
\end{align*}
We now discuss how our regularity parameters, $\xi^*_{\text{us}}$, $\zeta^*$, and $\kappa^*\coloneqq \frac{\zeta^*}{\xi^*_{\text{us}}}$ relate to $\xi^*_{\text{JL}}$. For simplicity, we will assume
\begin{align*}
\xi^*_{\text{us}} &=\max_{\gamma\geq 0} \set{\lambda_{\min} (A(\gamma))},\hspace{2em}
\zeta^* = \gamma_+,\\
\xi^*_{\text{JL}} &= \min\set{-\lambda_{\min}(A_1), \max_{\mu\in(0,1]} \lambda_{\min}\left(\mu A_0 +(1-\mu) A_1 \right)}.
\end{align*}

We claim $\zeta^* \leq (-\lambda_{\min}(A_1))^{-1}$. Indeed, let $x$ be a unit eigenvector corresponding to $\lambda_{\min}(A_1)$. Then, for any $\gamma>(-\lambda_{\min}(A_1))^{-1}$, we have
\begin{align*}
x^\top A(\gamma)x &= x^\top A_0 x + \gamma x^\top A_1 x \leq 1 + \gamma\lambda_{\min}(A_1)<0.
\end{align*}
The role played by the bound $\gamma_+\leq \zeta$ in our analysis is similar to the role of the bound $\xi_{\text{JL}}\leq -\lambda_{\min}(A_1)$ in the analysis presented by \citet{jiang2020linear}.

We claim that
\begin{align*}
\frac{1}{2\kappa^*}\leq \max_{\mu\in(0,1]}\lambda_{\min}(\mu A_0 + (1-\mu)A_1) \leq \xi^*_{\text{us}},
\end{align*}
and that the lower bound is sharp. 
Indeed, by performing the transformation $\mu = \frac{1}{1+\gamma}$, we can rewrite
\begin{align*}
\max_{\mu\in(0,1]}\lambda_{\min}\left(\mu A_0 +(1-\mu) A_1 \right)
&= \max_{\gamma\geq 0} \frac{1}{1+\gamma} \lambda_{\min}(A(\gamma)),
\end{align*}
which we can clearly bound above by $\xi^*_{\text{us}}$.
On the other hand, noting that any optimizer, $\gamma$, of the above problem must lie in $[0,\gamma_+]=[0,\zeta^*]$, we can lower bound
\begin{align*}
\max_{\gamma\geq 0} \frac{1}{1+\gamma} \lambda_{\min}(A(\gamma))
&\geq \frac{1}{1+\zeta^*} \max_{\gamma\geq 0} \lambda_{\min}(A(\gamma))= \frac{\xi^*_{\text{us}}}{1+\zeta^*}\geq \frac{1}{2\kappa^*}.
\end{align*}

We now construct a simple example for which the lower bound, $\xi^*_{\text{JL}} \geq \tfrac{1}{2\kappa^*}$, is sharp. Let $\alpha>0$ and define
\begin{align*}
A_0 = \Diag(1,1,-1),\hspace{2em}
A_1 = \Diag\left(1,-(1+\alpha)^{-1}, 1\right).
\end{align*}
It is simple to see that $\norm{A_0} = \norm{A_1} = 1$, $\xi^*_{\text{us}} = \tfrac{\alpha}{2+\alpha}$ and $\zeta^* = 1+\alpha$. In particular, $\kappa^* = \tfrac{2+3\alpha+\alpha^2}{\alpha}$. 
On the other hand, we can compute 
\begin{align*}
\xi^*_{\text{JL}}
&= \max_{\mu\in(0,1]} \min\set{\mu - (1-\mu)\alpha, \mu(-1 + 2\alpha) + (1-\mu) \alpha}= \frac{\alpha}{4+3\alpha}.
\end{align*}
Then, letting $\alpha\to 0$, we have $\kappa^* = \tfrac{2 + o(1)}{\alpha}$ and $\xi^*_{\text{JL}}= \tfrac{\alpha}{4 + o(1)}$.

In view of the (closed) convex hull results presented in Theorems \ref{thm:pd_implies_conv_hull} and \ref{thm:psd_implies_conv_hull}, we believe that the right notion of regularity should depend on the parameterization $A_0 + \gamma A_1$ as opposed to $\mu A_0 + (1-\mu) A_1$. We compare the running time of the algorithm presented by \citet{jiang2020linear} and the running time of our algorithms in Remark~\ref{rem:runtime_comparison}.
\end{remark}

We will assume that we have access to these regularity parameters within our algorithms.
\begin{assumption}
	\label{as:alg_regularity}
	Assume we have algorithmic access to a pair $(\xi,\zeta)$ such that $q_0$ and $q_1$ are $(\xi,\zeta)$-regular and a $\hat\gamma$ satisfying $\lambda_{\min}(A(\hat\gamma))\geq \xi$.
\end{assumption}
\begin{remark}
Assumption~\ref{as:alg_regularity} is quite reasonable.
Indeed, there are simple and efficient binary search schemes to find constant factor approximations of $\xi^*$ and $\zeta^*$ and a corresponding $\hat\gamma$.
We detail one such algorithm in Appendix \ref{app:estimating_regularity}.
We remark that a similar assumption is made by \citet{jiang2020linear}: they assume they are given access to $\xi_{\text{JL}}$ and present an algorithm for computing a corresponding $\hat\mu$ (see Remark~\ref{rem:regularity_comparison}).
Another algorithm for finding $\hat\gamma$ is presented by \citet{guo2009improved} in the language of matrix pencil definiteness.
\end{remark}

We now fix the accuracy\footnote{Our definition of accuracy is presented in \eqref{eq:tilde_gamma_guarantee}.} to which we will compute our estimates $\tilde\gamma_\pm$. Define
\begin{equation}\label{eq:delta}
\delta\coloneqq \frac{\epsilon}{72\kappa^2}.
\end{equation}

The framework for our approach is shown in Algorithm \ref{alg:GTRS_outer}.

\begin{algorithm}
\caption{$\text{ApproxConvex}(q_0, q_1, \xi,\zeta,\hat\gamma, \epsilon, p)$}
\label{alg:GTRS_outer}
Given $q_0$ and $q_1$ satisfying Assumption~\ref{as:alg}, $(\xi,\zeta)$ and $\hat\gamma$ satisfying Assumption~\ref{as:alg_regularity}, error parameter $0<\epsilon\leq \kappa^2\xi$, and failure probability $p>0$
\begin{enumerate}
	\item Pick $\delta$ as in \eqref{eq:delta}.
	\item Find $\tilde\gamma_-$ and $\tilde\gamma_+$ such that
	\begin{equation}
	\label{eq:tilde_gamma_guarantee}
	\tilde\gamma_-\in[\gamma_-,\gamma_- + \delta],\hspace{2em}
	\tilde\gamma_+\in[\gamma_+-\delta,\gamma_+],\hspace{2em}
	\lambda_{\min}(A(\tilde\gamma_\pm))\leq \delta/\kappa,
	\end{equation}
	with failure probability of at most $p$.
	\item Define $\widetilde\Opt \coloneqq \min_{x\in\R^n}\max \set{q(\tilde\gamma_-,x),q(\tilde\gamma_+,x)}$. Solve $\widetilde\Opt$ up to accuracy $\epsilon/2$.
	\item Output $\tilde\gamma_-$, $\tilde\gamma_+$, and the approximate optimizer $\tilde x$.
\end{enumerate}
\end{algorithm}
Note that by Definition~\ref{def:regularity}, we have $\kappa^2\xi=\zeta^2/\xi\geq1$. Thus the requirement $0<\epsilon\leq \kappa^2\xi$ in Algorithm \ref{alg:GTRS_outer} is not a practical issue: given $\epsilon>\kappa^2\xi$, we can simply run our algorithm with $\epsilon' = 1$ and return a solution with a better error guarantee.

This section is structured as follows. In Section \ref{subsec:alg_delta}, we prove that when $\delta$ is picked according to \eqref{eq:delta},
$\widetilde \Opt$ is within $\epsilon/2$ of $\Opt$. In Section \ref{subsec:alg_tilde_gamma} we show how to compute $\tilde\gamma_-$ and $\tilde\gamma_+$ to satisfy \eqref{eq:tilde_gamma_guarantee}. Then in Section \ref{subsec:alg_nesterov}, we present an algorithm due to \citet{nesterov2018lectures} and show that it can be used to efficiently solve for $\widetilde\Opt$ up to accuracy $\epsilon/2$.
At the end of Section \ref{subsec:alg_nesterov}, we present Theorem \ref{thm:alg_runtime},
which collects the results of the previous subsections and formally analyzes the runtime of Algorithm \ref{alg:GTRS_outer}.
In Section \ref{subsec:alg_rounding}, we give a rounding scheme for finding a solution to the original GTRS \eqref{eq:GTRS} given a solution to the convex reformulation. Finally, in Section \ref{subsec:alg_further_remarks}, we show that the running times of our algorithms can be significantly improved in situations where it is easy to compute $\gamma_\pm$ and zero eigenvectors of $A(\gamma_\pm)$.

\subsection{Perturbation analysis of the convex reformulation}
\label{subsec:alg_delta}

In this subsection, we show that the perturbed convex reformulation, $\widetilde\Opt$, approximates the true convex reformulation, $\Opt$, up to an additive error of $\epsilon/2$ when $\delta$ is picked as in \eqref{eq:delta}.
We will assume that step 2 of Algorithm~\ref{alg:GTRS_outer} is successful, i.e., we have $\tilde\gamma_\pm$ satisfying \eqref{eq:tilde_gamma_guarantee}.

Recall the definition of $\delta$ in \eqref{eq:delta}.
As we require $\epsilon \leq \kappa^2\xi$, we will have
\begin{align*}
\delta \coloneqq \frac{\epsilon}{72\kappa^2} \leq \frac{\xi}{72} < \xi.
\end{align*}
It is easy to see that $\lambda_{\min}(A(\gamma))$ is a $1$-Lipschitz function in $\gamma$. Then recalling that $\lambda_{\min}(A(\gamma_\pm)) = 0$ and $\lambda_{\min}(A(\hat\gamma))\geq \xi$, we deduce the containment $\hat\gamma\in(\gamma_-+\delta,\gamma_+-\delta)$. 
This, along with \eqref{eq:tilde_gamma_guarantee}, implies
\begin{align}
\label{eq:gamma_ordering}
\hat\gamma\in (\tilde\gamma_-,\tilde\gamma_+)\subseteq[\gamma_-,\gamma_+],
\qquad
\tilde\gamma_- \in[\gamma_-,\gamma_-+\delta],
\qquad
\tilde\gamma_+ \in[\gamma_+-\delta,\gamma_+].
\end{align}

Recall the perturbed reformulation
\begin{align*}
\widetilde\Opt \coloneqq \min_{x\in\R^n} \max\set{q(\tilde\gamma_-,x),q(\tilde\gamma_+,x)}.
\end{align*}

For notational convenience, let $f(x)\coloneqq \max\set{q(\gamma_-,x),q(\gamma_+,x)}$ and let $\tilde f(x) \coloneqq \max\set{q(\tilde\gamma_-,x),q(\tilde\gamma_+,x)}$.
Let $x^*$ and $\tilde x^*$ denote optimizers of $\Opt$ and $\widetilde\Opt$ respectively.

\begin{lemma}\label{lem:ubOpt}
For any fixed $x\in\R^n$, we have $\tilde f(x)\leq f(x)$. In particular, $\widetilde\Opt \leq \Opt$.
\end{lemma}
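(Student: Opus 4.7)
The plan is to exploit the fact that $q(\gamma,x) = q_0(x)+\gamma q_1(x)$ is \emph{affine} (in fact linear) in $\gamma$ for each fixed $x$. Combined with the containment $\tilde\gamma_-,\tilde\gamma_+\in[\gamma_-,\gamma_+]$ established in display \eqref{eq:gamma_ordering}, this should essentially finish the proof.

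More concretely, I would first fix $x\in\R^n$ and note that any $\gamma\in[\gamma_-,\gamma_+]$ can be written as a convex combination $\gamma = (1-\lambda)\gamma_- + \lambda\gamma_+$ with $\lambda\in[0,1]$. By affinity of $\gamma\mapsto q(\gamma,x)$, this immediately gives
\begin{equation*}
q(\gamma,x) = (1-\lambda)\,q(\gamma_-,x) + \lambda\,q(\gamma_+,x) \leq \max\set{q(\gamma_-,x),q(\gamma_+,x)} = f(x).
\end{equation*}
Applying this bound at $\gamma=\tilde\gamma_-$ and $\gamma=\tilde\gamma_+$ (which lie in $[\gamma_-,\gamma_+]$ by \eqref{eq:gamma_ordering}) yields $q(\tilde\gamma_\pm,x)\leq f(x)$, hence $\tilde f(x) = \max\set{q(\tilde\gamma_-,x),q(\tilde\gamma_+,x)} \leq f(x)$, which is the pointwise inequality.

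For the ``in particular'' claim, I would take the minimum over $x\in\R^n$ on both sides, giving $\widetilde\Opt = \min_x \tilde f(x) \leq \min_x f(x)$, and then identify the right-hand side with $\Opt$ via Theorem~\ref{thm:pd_implies_conv_hull}. There is no real obstacle here — the entire argument is a two-line consequence of affinity in $\gamma$ and the interval containment of the perturbed endpoints, so the main thing to be careful about is simply citing \eqref{eq:gamma_ordering} to justify $\tilde\gamma_\pm\in[\gamma_-,\gamma_+]$ and Theorem~\ref{thm:pd_implies_conv_hull} to rewrite $\Opt$ as $\min_x f(x)$.
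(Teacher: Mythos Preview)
Your proposal is correct and follows essentially the same approach as the paper: both exploit the linearity of $\gamma\mapsto q(\gamma,x)$ together with the containment $[\tilde\gamma_-,\tilde\gamma_+]\subseteq[\gamma_-,\gamma_+]$ from \eqref{eq:gamma_ordering} to get the pointwise bound, and then pass to the optimum. The only cosmetic difference is that the paper evaluates at the optimizer $x^*$ of $\Opt$ (writing $\widetilde\Opt\leq \tilde f(x^*)\leq f(x^*)=\Opt$) rather than taking $\min_x$ on both sides, but this is the same argument.
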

\begin{proof}
Note that $q(\gamma,x)$ is a linear function in $\gamma$. Hence, for any fixed $x\in\R^n$, the containment $[\tilde\gamma_-,\tilde\gamma_+]\subseteq[\gamma_-,\gamma_+]$ implies $\tilde f(x) \leq f(x)$.
We deduce
\begin{align*}
\widetilde\Opt &\leq \tilde f(x^*)\leq f(x^*)= \Opt.\qedhere
\end{align*}
\end{proof}

To show $\widetilde\Opt \geq \Opt -\epsilon/2$, we will show that $x^*$ and $\tilde x^*$ lie in a ball of bounded radius and that $\tilde f$ approximates $f$ uniformly on this ball.

\begin{lemma}
\label{lem:locating_opt}
Let $x^*$ and $\tilde x^*$ be optimizers of $\Opt$ and $\widetilde\Opt$ respectively. Then
$x^*,\tilde x^*\in B(0,5\kappa)$.
\end{lemma}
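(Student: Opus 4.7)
The plan is to exploit that $q(\hat\gamma, x)$ is a strongly convex lower bound for both $f(x) \coloneqq \max\{q(\gamma_-,x), q(\gamma_+,x)\}$ and $\tilde f(x) \coloneqq \max\{q(\tilde\gamma_-,x), q(\tilde\gamma_+,x)\}$. From \eqref{eq:gamma_ordering}, $\hat\gamma$ lies in both $[\gamma_-,\gamma_+]$ and $[\tilde\gamma_-,\tilde\gamma_+]$; since $q(\cdot, x)$ is linear in its first argument, $q(\hat\gamma, x)$ is a convex combination of $q(\gamma_-, x), q(\gamma_+, x)$ and likewise of $q(\tilde\gamma_-, x), q(\tilde\gamma_+, x)$, so $q(\hat\gamma, x) \le f(x)$ and $q(\hat\gamma, x) \le \tilde f(x)$ for every $x \in \R^n$.

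Next I would compare the optimal values with the values at $0$. Evaluating at $x=0$, we have $f(0) = c_0 + \max\{\gamma_- c_1, \gamma_+ c_1\}$ and similarly for $\tilde f(0)$, so by optimality $f(x^*) \le f(0)$ and $\tilde f(\tilde x^*) \le \tilde f(0)$. I would then use Assumption~\ref{as:alg_regularity} and Assumption~\ref{as:alg} to bound $q(\hat\gamma, x)$ below as
\begin{equation*}
q(\hat\gamma, x) \ge \xi\, \|x\|^2 - 2\|b(\hat\gamma)\|\, \|x\| + c(\hat\gamma) \ge \xi\, \|x\|^2 - 2(1+\zeta)\|x\| + c(\hat\gamma),
\end{equation*}
where I use $\lambda_{\min}(A(\hat\gamma)) \ge \xi$ and $\|b(\hat\gamma)\| \le \|b_0\| + \hat\gamma\|b_1\| \le 1+\zeta$ via $\hat\gamma \in [\gamma_-,\gamma_+] \subseteq [0,\zeta]$.

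The key cancellation happens when one subtracts these bounds: the unknown constant $c_0$ drops out, leaving
\begin{equation*}
\xi\, \|x^*\|^2 \le 2(1+\zeta)\|x^*\| + |\gamma_\sigma - \hat\gamma|\, |c_1| \le 4\zeta\|x^*\| + \zeta,
\end{equation*}
using $\gamma_\sigma,\hat\gamma \in [0,\zeta]$, $|c_1|\le 1$, and $\zeta \ge 1$. Dividing by $\xi$ yields $\|x^*\|^2 - 4\kappa\|x^*\| - \kappa \le 0$, and solving the quadratic (using $\kappa \ge 1$ to bound $\sqrt{4\kappa^2+\kappa} \le \sqrt{5}\,\kappa$) gives $\|x^*\| \le (2 + \sqrt 5)\kappa < 5\kappa$.

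The identical argument applied to $\tilde f(\tilde x^*) \le \tilde f(0)$ together with the lower bound $\tilde f(\tilde x^*) \ge q(\hat\gamma, \tilde x^*)$ (valid because $\hat\gamma \in (\tilde\gamma_-, \tilde\gamma_+)$) produces the same bound for $\tilde x^*$. No real obstacle is expected; the one point to be careful about is that $c_0$ is not bounded by Assumption~\ref{as:alg}, but since it appears additively in both $f(x^*)$ and $f(0)$ (and similarly in the perturbed problem), it cancels in the comparison and therefore does not enter the final radius estimate.
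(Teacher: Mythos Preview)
Your proposal is correct and follows essentially the same approach as the paper: both use $q(\hat\gamma,\cdot)$ as a strongly convex lower bound for $f$ and $\tilde f$ (via $\hat\gamma\in[\gamma_-,\gamma_+]\cap[\tilde\gamma_-,\tilde\gamma_+]$), compare against the value at $x=0$ so that $c_0$ cancels, and then bound the remaining constants using $|c_1|\le 1$ and $\hat\gamma,\gamma_\pm\in[0,\zeta]$. The only cosmetic difference is that the paper completes the square to identify the containing ball explicitly and then applies the triangle inequality, whereas you apply Cauchy--Schwarz to the linear term first and solve the resulting quadratic in $\|x^*\|$; both routes yield the bound $5\kappa$.
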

\begin{proof}
By picking the feasible solution $0\in\R^n$ and Lemma~\ref{lem:ubOpt}, we have a trivial upper bound on $\widetilde\Opt$ and $\Opt$:
\begin{align}
\label{eq:locating_opt_upper_bound}
\widetilde\Opt \leq \Opt \leq \max\set{q(\gamma_-, 0),q(\gamma_+,0)} = \max\set{c(\gamma_-),c(\gamma_+)}.
\end{align}
By the first part of \eqref{eq:gamma_ordering}, we have
\[
f(x) \geq \tilde f(x)
\geq q(\hat\gamma,x)
\geq \xi\norm{x}^2 + 2b(\hat\gamma)^\top x + c(\hat\gamma),
\]
where the last inequality follows from the assumption that $\lambda_{\min}(A(\hat\gamma))\geq \xi$.
Then,
\begin{align*}
x^*,\tilde x^* &\in \set{x\in\R^n :\, \xi\norm{x}^2 + 2b(\hat\gamma)^\top x + c(\hat\gamma) \leq \max\set{c(\gamma_-),c(\gamma_+)}}\\
&\subseteq \set{x\in\R^n :\, \xi \norm{x}^2 + 2b(\hat\gamma)^\top  x\leq \zeta}.
\end{align*}
The last relation holds since $\max\set{c(\gamma_-)-c(\hat\gamma),c(\gamma_+)-c(\hat\gamma)} = \max\{(\gamma_- -\hat\gamma)c_1, (\gamma_+ -\hat\gamma)c_1\}\leq |c_1| \gamma_+ \leq \zeta$.
Then, by completing the square
\begin{align*}
x^*,\tilde x^*  &\in B\left(-b(\hat\gamma)\xi^{-1}, \sqrt{\norm{b(\hat\gamma)}^2\xi^{-2} + \kappa}\right)\\
&\subseteq B\left(0, 2\norm{b(\hat\gamma)}\xi^{-1} + \sqrt{\kappa}\right)\\
&\subseteq B\left(0,4\kappa+\sqrt{\kappa} \right)\\
&\subseteq B\left(0,5\kappa\right),
\end{align*}
where in the third line, we used Assumption~\ref{as:alg} and the bound $\norm{b(\hat\gamma)} \leq \norm{b_0} + \gamma_+\norm{b_1}\leq 2\zeta$. 
\end{proof}

\begin{lemma}
\label{lem:delta_uniform}
If $\norm{\hat x}\leq 5\kappa$, then $\tilde f(\hat x) \geq f(\hat x) - \epsilon/2$. In particular, $\widetilde\Opt \geq \Opt - \epsilon/2$,
\end{lemma}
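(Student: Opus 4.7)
The plan is to exploit the fact that $q(\gamma,x)$ is linear in $\gamma$, so the discrepancy between $q(\gamma_\pm,\hat x)$ and $q(\tilde\gamma_\pm,\hat x)$ depends only on $|\gamma_\pm-\tilde\gamma_\pm|$ and $|q_1(\hat x)|$. Concretely, I would first write
\begin{align*}
q(\gamma_\pm,\hat x)-q(\tilde\gamma_\pm,\hat x)=(\gamma_\pm-\tilde\gamma_\pm)\,q_1(\hat x),
\end{align*}
and combine this with \eqref{eq:tilde_gamma_guarantee} (which gives $|\gamma_\pm-\tilde\gamma_\pm|\le\delta$) to obtain $q(\tilde\gamma_\pm,\hat x)\ge q(\gamma_\pm,\hat x)-\delta\,|q_1(\hat x)|$ for each choice of sign.

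Next I would estimate $|q_1(\hat x)|$ when $\norm{\hat x}\le 5\kappa$. Using Assumption~\ref{as:alg} (i.e. $\norm{A_1},\norm{b_1},|c_1|\le 1$) and $\kappa\ge 1$,
\begin{align*}
|q_1(\hat x)|\le\norm{A_1}\norm{\hat x}^2+2\norm{b_1}\norm{\hat x}+|c_1|\le 25\kappa^2+10\kappa+1\le 36\kappa^2.
\end{align*}
With $\delta=\epsilon/(72\kappa^2)$ from \eqref{eq:delta}, this yields $\delta\,|q_1(\hat x)|\le\epsilon/2$, so $q(\tilde\gamma_\pm,\hat x)\ge q(\gamma_\pm,\hat x)-\epsilon/2$ for both signs. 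Taking the maximum over the two indices gives $\tilde f(\hat x)\ge f(\hat x)-\epsilon/2$, which is the first claim.

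For the "In particular" part, I would apply the uniform estimate at the optimizer $\tilde x^*$ of $\widetilde\Opt$. By Lemma~\ref{lem:locating_opt}, $\norm{\tilde x^*}\le 5\kappa$, so the pointwise bound applies and
\begin{align*}
\widetilde\Opt=\tilde f(\tilde x^*)\ge f(\tilde x^*)-\epsilon/2\ge f(x^*)-\epsilon/2=\Opt-\epsilon/2,
\end{align*}
where the middle inequality uses the fact that $x^*$ minimizes $f$. There is no serious obstacle here; the argument is a clean perturbation estimate, and the only part that requires care is the bookkeeping to confirm that the choice of $\delta$ in \eqref{eq:delta} exactly absorbs the factor $36\kappa^2$ coming from the bound on $|q_1(\hat x)|$ on the ball $B(0,5\kappa)$.
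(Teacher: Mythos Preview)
Your proposal is correct and follows essentially the same argument as the paper: both use the linearity of $q(\gamma,x)$ in $\gamma$ together with $|\gamma_\pm-\tilde\gamma_\pm|\le\delta$ to reduce to bounding $\delta\,|q_1(\hat x)|$, and both bound $|q_1(\hat x)|\le(\norm{\hat x}+1)^2\le 36\kappa^2$ on $B(0,5\kappa)$. Your write-up is in fact slightly more complete, since you spell out the ``In particular'' step via Lemma~\ref{lem:locating_opt} at $\tilde x^*$, which the paper leaves implicit.
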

\begin{proof}
Recall that $\delta \coloneqq \frac{\epsilon}{72\kappa^2}$.
Let $\hat x\in\R^n$ such that $\norm{\hat x}\leq 5\kappa$. We compute
\begin{align*}
\tilde f(\hat x) &= \max \set{q(\tilde\gamma_-, \hat x), q(\tilde\gamma_+, \hat x)}\\
&\geq \max \set{q(\gamma_-, \hat x), q(\gamma_+, \hat x)} - \delta \abs{q_1(\hat x)}\\
&\geq f(\hat x) - \delta \left(\norm{\hat x}^2 + 2\norm{\hat x} + 1\right)\\
&\geq f(\hat x) - \delta \left(6\kappa\right)^2\\
&= f(\hat x)-\epsilon/2,
\end{align*}
where the first inequality follows from \eqref{eq:gamma_ordering}, the second inequality follows from Assumption~\ref{as:alg}, and the third inequality follows from the bound $\norm{\hat x}\leq 5\kappa$.
\end{proof}

\subsection{Approximating $\gamma_-$ and $\gamma_+$}
\label{subsec:alg_tilde_gamma}

In this subsection, we show how to approximate $\gamma_-$ and $\gamma_+$ and provide an explicit running time analysis of this procedure. Our developments rely on the fact that $\lambda_{\min}(A(\gamma))$ is a concave function in $\gamma$ and that $\gamma_-$ and $\gamma_+$ are the unique zeros of this function.

\begin{lemma}
$\lambda_{\min}(A(\gamma))$ is a concave function in $\gamma$.
\end{lemma}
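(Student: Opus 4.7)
The plan is to use the variational (Courant--Fischer) characterization of the minimum eigenvalue, which expresses $\lambda_{\min}$ as a pointwise infimum of a family of functions that are individually very simple in $\gamma$. Specifically, I would write
\begin{align*}
\lambda_{\min}(A(\gamma)) = \min_{x\in\R^n:\,\norm{x}=1} x^\top A(\gamma) x = \min_{x\in\R^n:\,\norm{x}=1} \bigl(x^\top A_0 x + \gamma\, x^\top A_1 x\bigr).
\end{align*}

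For each fixed unit vector $x$, the map $\gamma \mapsto x^\top A_0 x + \gamma\, x^\top A_1 x$ is affine in $\gamma$, and in particular concave. Concavity is preserved under pointwise infima of arbitrary (possibly infinite) families of concave functions, so the pointwise minimum over the unit sphere is concave. This immediately gives the claim.

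I do not expect any real obstacles here; the only subtlety is just to note that $A(\gamma)$ is symmetric for every $\gamma$ (since $A_0$ and $A_1$ are symmetric), which is what legitimizes the Rayleigh-quotient characterization of $\lambda_{\min}$. Everything else is a one-line application of the "infimum of affine is concave" principle. An alternative route, if desired, would be to invoke the fact that $\lambda_{\min}$ is a concave function on the space of symmetric matrices (as the minimum eigenvalue is concave with respect to the Loewner order) and then compose with the affine map $\gamma \mapsto A_0 + \gamma A_1$, but the Rayleigh-quotient argument is cleaner and self-contained.
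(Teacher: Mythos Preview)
Your proposal is correct and essentially identical to the paper's proof: both invoke the Courant--Fischer characterization $\lambda_{\min}(A(\gamma)) = \min_{\norm{x}=1} x^\top A(\gamma) x$, observe that the inner expression is linear (hence concave) in $\gamma$ for each fixed unit vector $x$, and conclude by the fact that a pointwise minimum of concave functions is concave.
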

\begin{proof}
By Courant-Fischer Theorem,
$\lambda_{\min}(A(\gamma)) =\min_{\norm{x} = 1} x^\top A(\gamma) x$. Note that for any fixed $x\in\R^n$, the expression $x^\top A(\gamma) x$ is linear in $\gamma$. Then, the result follows upon recalling that the minimum of concave (in our case linear) functions is concave.
\end{proof}

Let us also state a simple property of the function $\lambda_{\min}(A(\gamma))$.
\begin{lemma}
\label{lemma:lambda_min_structure}
\leavevmode
\begin{enumerate}[(i)]
	\item Suppose $\gamma\leq\hat\gamma$, then 
	$
	\abs{\gamma-\gamma_-}\leq \kappa\abs{\lambda_{\min}(A(\gamma))}.
	$
	\item Suppose $\gamma\geq\hat\gamma$, then 
	$
	\abs{\gamma-\gamma_+}\leq \kappa\abs{\lambda_{\min}(A(\gamma))}.
	$
\end{enumerate}
\end{lemma}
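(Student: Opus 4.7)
The plan is to deduce both inequalities from the concavity of $g(\gamma) \coloneqq \lambda_{\min}(A(\gamma))$, which is already established in the preceding lemma. The two parts are symmetric, so I will focus on part (i); part (ii) follows by the analogous argument with $\gamma_+$ in place of $\gamma_-$ and the interval $[\hat\gamma, \gamma_+]$ in place of $[\gamma_-,\hat\gamma]$. The three facts I will lean on are: $g(\gamma_-) = 0$ (since $A(\gamma_-)$ lies on the boundary of the PSD cone, cf.\ Remark~\ref{rem:assumptions}), $g(\hat\gamma) \geq \xi > 0$ (Assumption~\ref{as:alg_regularity}), and the length bound $\hat\gamma - \gamma_- \leq \gamma_+ - \gamma_- \leq \gamma_+ \leq \zeta$. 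The containment $\hat\gamma \in (\gamma_-, \gamma_+)$ needed here is immediate: $g(\hat\gamma) > 0$ forces $\hat\gamma \in \Gamma$, and strict positivity rules out the two boundary points where $g$ vanishes.

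In the first sub-case, $\gamma \in [\gamma_-,\hat\gamma]$, I will write $\gamma$ as a convex combination of $\gamma_-$ and $\hat\gamma$. Concavity of $g$ then places its graph above the chord connecting $(\gamma_-, 0)$ and $(\hat\gamma, g(\hat\gamma))$, giving
\begin{equation*}
g(\gamma) \;\geq\; \frac{\gamma - \gamma_-}{\hat\gamma - \gamma_-}\, g(\hat\gamma) \;\geq\; \frac{\gamma - \gamma_-}{\zeta}\,\xi \;=\; \frac{|\gamma - \gamma_-|}{\kappa}.
\end{equation*}
Since $g(\gamma) \geq 0$ on $[\gamma_-,\hat\gamma]$, this yields $|\gamma - \gamma_-| \leq \kappa \, |g(\gamma)|$.

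In the second sub-case, $\gamma < \gamma_-$, I instead write $\gamma_-$ as the convex combination $\gamma_- = \tfrac{\hat\gamma - \gamma_-}{\hat\gamma - \gamma}\,\gamma + \tfrac{\gamma_- - \gamma}{\hat\gamma - \gamma}\,\hat\gamma$ of $\gamma$ and $\hat\gamma$. Concavity then gives
\begin{equation*}
0 \;=\; g(\gamma_-) \;\geq\; \frac{\hat\gamma - \gamma_-}{\hat\gamma - \gamma}\, g(\gamma) + \frac{\gamma_- - \gamma}{\hat\gamma - \gamma}\, g(\hat\gamma),
\end{equation*}
and rearranging (using $g(\hat\gamma) \geq \xi$ and $\hat\gamma - \gamma_- \leq \zeta$) yields $-g(\gamma) \geq (\gamma_- - \gamma)\,\xi/(\hat\gamma - \gamma_-) \geq |\gamma - \gamma_-|/\kappa$. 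Here $g(\gamma) < 0$ (since $\gamma \notin \Gamma$), so the left side equals $|g(\gamma)|$ and I again conclude $|\gamma - \gamma_-| \leq \kappa\,|g(\gamma)|$. Combining the two sub-cases establishes (i), and part (ii) follows by the mirrored argument.

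I do not expect serious obstacles here; the argument is essentially a careful bookkeeping of signs in the chord/concavity inequality. The only point requiring minor care is to confirm $\hat\gamma \in (\gamma_-,\gamma_+)$ before invoking the chord bounds, and to handle the two sign regimes (inside vs.\ outside $\Gamma$) consistently so that the final inequality is phrased in terms of the absolute value $|g(\gamma)|$.
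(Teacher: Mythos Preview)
Your proof is correct and follows essentially the same approach as the paper: both arguments exploit the concavity of $g(\gamma)=\lambda_{\min}(A(\gamma))$ together with $g(\gamma_-)=0$, $g(\hat\gamma)\ge\xi$, and $\hat\gamma-\gamma_-\le\zeta$ to obtain the secant-slope bound $|g(\gamma)|\ge |\gamma-\gamma_-|\,\xi/\zeta$. The paper compresses both of your sub-cases into a single absolute-value inequality, whereas you spell out the inside/outside cases separately; one small nit is that your justification ``$g(\gamma)<0$ since $\gamma\notin\Gamma$'' is only literally defined for $\gamma\ge 0$, but the sign already follows from the inequality you derived, so no gap results.
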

\begin{proof}
We only prove the first statement as the second statement follows similarly. Let $\gamma\leq \hat\gamma$. From the concavity of $\lambda_{\min}(A(\gamma))$, we have
\begin{align*}
\abs{\lambda_{\min}(A(\gamma))}
\geq \abs{\gamma- \gamma_-} \frac{\lambda_{\min}(A(\hat\gamma))}{\hat\gamma - \gamma_-}
\geq \abs{\gamma- \gamma_-}\frac{\xi}{\zeta},
\end{align*}
where in the second inequality we used the definition of $\xi$ in Definition~\ref{def:regularity} and the bound $\hat\gamma - \gamma_- \leq \gamma_+ \leq \zeta$. Noting $\zeta/\xi = \kappa$ and rearranging terms completes the proof.
\end{proof}

We will use the Lanczos method for approximating the most negative eigenvalue (and a corresponding eigenvector) of a sparse matrix. This algorithm, along with Lemma \ref{lemma:lambda_min_structure}, will allow us to binary search over the range $[0,\zeta]$ for the zeros of the function $\lambda_{\min}(A(\gamma))$.

\begin{lemma}
[\cite{KuczynskiWozniakowski1992estimating}]
\label{lem:approx_eig}
There exists an algorithm, $\text{ApproxEig}(A,\rho,\eta,p_{\text{eig}})$, which given a symmetric matrix $A\in\R^{n\by n}$, $\rho$ such that $\norm{A}\leq \rho$, and parameters $\eta,p_{\text{eig}}>0$, will, with probability at least $1-p_{\text{eig}}$, return a unit vector $x\in\R^n$ such that $x^\top A x \leq \lambda_{\min}(A) +\eta$. This algorithm runs in time
\begin{align*}
O\left(\frac{N\sqrt{\rho}}{\sqrt{\eta}}\log\left(\frac{n}{p_{\text{eig}}}\right)\right),
\end{align*}
where $N$ is the number of nonzero entries in $A$.
\end{lemma}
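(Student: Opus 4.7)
The plan is to reduce the problem of approximating $\lambda_{\min}(A)$ to the more classical problem of approximating the largest eigenvalue of a symmetric positive semidefinite matrix, for which the desired guarantee is exactly what is established by \citet{KuczynskiWozniakowski1992estimating} for the Lanczos method. Concretely, since we are given $\rho \geq \norm{A}$, the shifted matrix $M \coloneqq \rho I_n - A$ is positive semidefinite, has the same sparsity pattern as $A$ (up to the diagonal), and satisfies $\lambda_{\max}(M) = \rho - \lambda_{\min}(A) \leq 2\rho$.

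The first step is to invoke the Kuczynski--Wo\'zniakowski analysis of the randomized Lanczos method applied to $M$: after $k$ Lanczos iterations starting from a random Gaussian initial vector, with probability at least $1-p_{\text{eig}}$ one obtains a unit vector $x$ whose Rayleigh quotient with $M$ satisfies
\begin{align*}
x^\top M x \geq \lambda_{\max}(M) - c \cdot \frac{\lambda_{\max}(M) \log^2(n/p_{\text{eig}})}{k^2},
\end{align*}
for a universal constant $c>0$. Choosing $k = O(\sqrt{\rho/\eta}\,\log(n/p_{\text{eig}}))$ makes the right-hand side error at most $\eta$. The second step is to translate this back: subtracting $\rho$ from both sides and negating gives $x^\top A x \leq \lambda_{\min}(A) + \eta$, which is the desired guarantee.

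For the running time, each Lanczos iteration consists of one matrix-vector product with $M$, which can be performed in $O(N)$ time by using the sparsity of $A$ (the shift $\rho I_n$ contributes a diagonal update of cost $O(n) = O(N)$ assuming $A$ is nontrivial), plus $O(n)$ additional work for the orthogonalization recurrence. Multiplying the per-iteration cost by the iteration count $k$ yields the claimed total running time $O\!\left(N\sqrt{\rho/\eta}\,\log(n/p_{\text{eig}})\right)$.

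Since this lemma is essentially a direct restatement of the main theorem of \citet{KuczynskiWozniakowski1992estimating} modulo the standard $\rho I - A$ shift, there is no real obstacle; the only point requiring attention is that $\rho$ is an upper bound on $\norm{A}$ rather than on $\lambda_{\max}(A)$ alone, which is precisely what ensures that $M \succeq 0$ and that $\lambda_{\max}(M) \leq 2\rho$ so that the iteration count dependence on the ``scale'' of $M$ can be absorbed into the $\sqrt{\rho}$ factor of the stated bound.
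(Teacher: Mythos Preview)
Your proposal is correct and follows the standard reduction via the shift $M = \rho I_n - A$; the paper itself does not prove this lemma at all but simply cites \citet{KuczynskiWozniakowski1992estimating} as a black box, so your sketch is in fact more detailed than what appears in the paper.
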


Consider ApproxGammaPlus (Algorithm~\ref{alg:tilde_gamma}) for computing $\tilde\gamma_+$ up to accuracy $\delta$. A similar algorithm can be used to compute $\tilde\gamma_-$ up to accuracy $\delta$ and is omitted.
\begin{algorithm}
	\caption{$\text{ApproxGammaPlus}(q_0, q_1,\xi,\zeta,\hat\gamma,\delta,p_{\tilde\gamma_+})$}
	\label{alg:tilde_gamma}
	Given $q_0$ and $q_1$ satisfying Assumption \ref{as:alg}, $(\xi,\zeta)$ and $\hat\gamma$ satisfying Assumption \ref{as:alg_regularity}, error parameter $\delta>0$, and failure probability $p_{\tilde\gamma_+}$
	\begin{enumerate}
		\item Let $s_0 = \hat\gamma$, $t_0 = \zeta$
		\item Let $T=\ceil{\log\left(\frac{\zeta\kappa}{\delta}\right)}+2$
		\item For $k = 0,\dots, T-1$
		\begin{enumerate}
			\item Let $\gamma = (s_k + t_k)/2$
			\item Let $x = \text{ApproxEig}(A(\gamma), 2\zeta, \frac{\delta}{4\kappa}, \frac{p_{\tilde\gamma}}{T})$
			\item If $x^\top A(\gamma)x < \tfrac{\delta}{4\kappa}$, set $s_{k+1} = s_k$ and $t_{k+1} = \gamma$
			\item Else if $x^\top A(\gamma) x > \frac{\delta}{\kappa}$, set $s_{k+1} = \gamma$ and $t_{k+1} = t_k$
			\item Else, stop and output $\tilde\gamma$
		\end{enumerate}
	\end{enumerate}
\end{algorithm}

\begin{restatable}{lemma}{ApproxGammaPlusLemma}
\label{lem:tilde_gamma}
Given $q_0$, $q_1$ satisfying Assumption~\ref{as:alg}, $(\xi,\zeta)$ and $\hat\gamma$ satisfying Assumption~\ref{as:alg_regularity}, $\delta>0$, and $p_{\tilde\gamma_+}$, ApproxGammaPlus (Algorithm~\ref{alg:tilde_gamma}) outputs $\tilde\gamma_+$ satisfying
\begin{align*}
\tilde\gamma_+\in[\gamma_+-\delta,\;\gamma_+],\hspace{2em} \lambda_{\min}(A(\tilde\gamma_+))\leq \delta/\kappa
\end{align*}
with probability $1-p_{\tilde\gamma_+}$. This algorithm runs in time
\begin{align*}
\tilde O\left(\frac{N\sqrt{\kappa\zeta}}{\sqrt{\delta}}\log\left(\frac{n}{p_{\tilde\gamma_+}}\right)\log\left(\frac{\kappa}{\delta}\right)\right).
\end{align*}
\end{restatable}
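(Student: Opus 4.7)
The plan is to analyze Algorithm~\ref{alg:tilde_gamma} by separately verifying the correctness of the output emitted by step (e) and the termination of the algorithm within $T$ iterations. Throughout the analysis I will condition on the event that every one of the $T$ calls to ApproxEig succeeds, which holds with probability at least $1-p_{\tilde\gamma_+}$ by a union bound. I will also use the elementary fact that $\lambda_{\min}(A(\gamma))$ is $1$-Lipschitz in $\gamma$, which follows from Courant--Fischer together with $|x^\top A_1 x|\leq \norm{A_1}\leq 1$ for any unit vector $x$.

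Correctness of the output is the shorter half. If step (e) triggers at iteration $k$ with $\gamma = (s_k+t_k)/2$, then by definition $x^\top A(\gamma)x \in [\delta/(4\kappa),\delta/\kappa]$, and the ApproxEig guarantee $\lambda_{\min}(A(\gamma))\leq x^\top A(\gamma)x \leq \lambda_{\min}(A(\gamma)) + \delta/(4\kappa)$ then yields $0 \leq \lambda_{\min}(A(\gamma))\leq \delta/\kappa$. The right-hand inequality is the second required property of $\tilde\gamma_+$. From $\lambda_{\min}(A(\gamma))\geq 0$, I conclude $\gamma\in[\gamma_-,\gamma_+]$, and the trivially maintained invariant $s_k\geq \hat\gamma > \gamma_-$ forces $\gamma\leq\gamma_+$. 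Applying Lemma~\ref{lemma:lambda_min_structure}(ii) to $\gamma\geq\hat\gamma$ then gives $\gamma_+-\gamma \leq \kappa\,\lambda_{\min}(A(\gamma))\leq \delta$, which is the containment $\tilde\gamma_+\in[\gamma_+-\delta,\gamma_+]$.

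The termination argument is the main obstacle, and the idea is to maintain the two-sided endpoint invariants $\lambda_{\min}(A(s_k)) > 3\delta/(4\kappa)$ and $\lambda_{\min}(A(t_k)) < \delta/(4\kappa)$ throughout the iteration. The base case uses $\lambda_{\min}(A(\hat\gamma))\geq \xi > 3\delta/(4\kappa)$ (which follows from the hypothesis $\epsilon\leq\kappa^2\xi$ together with the definition of $\delta$) and $\lambda_{\min}(A(\zeta))\leq 0 < \delta/(4\kappa)$ (from $\zeta\geq\zeta^*\geq \gamma_+$). The inductive step is a direct consequence of the ApproxEig accuracy combined with the branch conditions in (c) and (d). Once the invariants are in place, the $1$-Lipschitz continuity of $\lambda_{\min}(A(\cdot))$ yields $t_k - s_k > 3\delta/(4\kappa) - \delta/(4\kappa) = \delta/(2\kappa)$ whenever the algorithm has not terminated. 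Combining this with the bisection bound $t_k - s_k \leq \zeta/2^k$ forces $2^k < 2\zeta\kappa/\delta$, so the algorithm must terminate via step (e) strictly before iteration $T = \lceil\log(\zeta\kappa/\delta)\rceil + 2$.

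For the runtime, each iteration's dominant cost is the single call to $\text{ApproxEig}(A(\gamma),2\zeta,\delta/(4\kappa),p_{\tilde\gamma_+}/T)$, which by Lemma~\ref{lem:approx_eig} takes $O(N\sqrt{\kappa\zeta/\delta}\log(nT/p_{\tilde\gamma_+}))$ time; multiplying by the iteration count $T = O(\log(\zeta\kappa/\delta))$ and using $\log\zeta\leq\log\kappa$ (since $\xi\leq 1$ implies $\kappa\geq\zeta$) produces the stated runtime, with the extra $\log T = O(\log\log(\zeta\kappa/\delta))$ absorbed into the $\tilde O$. The subtle point in the entire proof is the specific calibration of the thresholds $\delta/(4\kappa)$ and $\delta/\kappa$ in the algorithm: the gap of $3\delta/(4\kappa)$ between them is precisely what is needed to (i) absorb the $\delta/(4\kappa)$ slack from the ApproxEig guarantee so that the two invariants can both be preserved and (ii) leave enough room to force $t_k - s_k > \delta/(2\kappa)$, which in turn is what makes bisection terminate in $O(\log(\zeta\kappa/\delta))$ steps.
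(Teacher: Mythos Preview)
Your proof is correct and follows essentially the same approach as the paper: condition on all ApproxEig calls succeeding, verify via Lemma~\ref{lemma:lambda_min_structure} that any value output at step~(e) satisfies the stated bounds, and then argue termination by showing $t_k-s_k>\delta/(2\kappa)$ whenever the loop has not yet exited. The only cosmetic difference is in the termination bookkeeping: the paper tracks a fixed target interval $P=\{\gamma\geq\hat\gamma:\lambda_{\min}(A(\gamma))\in[\delta/(4\kappa),3\delta/(4\kappa)]\}$ and shows $P\subseteq[s_k,t_k]$, whereas you maintain the endpoint invariants $\lambda_{\min}(A(s_k))>3\delta/(4\kappa)$ and $\lambda_{\min}(A(t_k))<\delta/(4\kappa)$ directly; both routes yield the same $\delta/(2\kappa)$ lower bound via $1$-Lipschitzness, and both (yours explicitly, the paper implicitly via ``$\xi\geq\delta$'') rely on $\delta$ being small enough for the base case, which is not literally in the lemma's hypotheses but holds in every invocation.
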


\begin{proof}
We condition on the event that ApproxEig succeeds every time it is called. By the union bound, this happens with probability at least $1-p_{\tilde\gamma_+}$.

Suppose the algorithm outputs at step 3.(e). Let $\gamma$ be the value of $\gamma$ on the round in which the algorithm stops, and $x$ the vector returned by ApproxEig in the corresponding iteration. 
Then, the stopping rule guarantees $x^\top A(\gamma)x \in[\delta/4\kappa,\delta/\kappa]$. As we have conditioned on ApproxEig succeeding, we deduce
\begin{align*}
x^\top A(\gamma)x - \frac{\delta}{4\kappa}\leq \lambda_{\min}(A(\gamma))\leq x^\top A(\gamma)x.
\end{align*}
In particular, $\lambda_{\min}(A(\gamma))\in[0,\delta/\kappa]$ and $\gamma\leq\gamma_+$.
Applying Lemma \ref{lemma:lambda_min_structure} gives
\begin{align*}
\abs{\gamma - \gamma_+} &\leq \kappa\abs{\lambda_{\min}(A(\gamma))}\leq \delta.
\end{align*}
We conclude $\gamma_+-\delta\leq \gamma\leq \gamma_+$.

We now show that this algorithm outputs within $T$ rounds. Let
\begin{align*}
P\coloneqq \set{\gamma:\, \gamma\geq \hat\gamma,\, \lambda_{\min}(A(\gamma))\in[\delta/4\kappa,3\delta/4\kappa]}.
\end{align*}
Recalling that $\lambda_{\min}(A(\gamma))$ is $1$-Lipschitz in $\gamma$, we deduce that $\abs{P}\geq \delta/2\kappa$. Note also that $\lambda_{\min}(A(\hat\gamma))\geq \xi \geq \delta \geq 3\delta/4\kappa$ thus $P$ is a connected interval.

Suppose for the sake of contradiction that the algorithm fails to output in each of the $T$ rounds. Note that $P\subseteq[s_0,t_0]$.
We will show by induction that $P\subseteq[s_k,t_k]$ for every $k\in\set{1,\dots,T}$. Let $k\in\set{0,\dots,T-1}$. 
By assumption, the algorithm fails to output in round $k$. This can happen in two ways: If $x^\top A(\gamma)x < \delta/4\kappa$, then $x$ certifies that $\gamma\notin P$ and $P\subseteq[s_k, \gamma]$. If $x^\top A(\gamma)x>\delta/\kappa$, then as we have conditioned on ApproxEig succeeding, $\lambda_{\min}(A(\gamma))\geq \delta/\kappa - \delta/4\kappa$ and $P\subseteq [\gamma,t_k]$.
In either case, we have that $P\subseteq[s_{k+1},t_{k+1}]$.

We conclude that $P$, an interval of length at least $\delta/2\kappa$, is contained in $[s_T,t_T]$, an interval of length
\begin{align*}
t_T-s_T\leq \frac{\zeta}{2^T} \leq \delta/4\kappa,
\end{align*}
a contradiction. Thus, the algorithm outputs within $T$ rounds.

The running time of this algorithm follows from Lemma \ref{lem:approx_eig}.
\end{proof}

\begin{remark}
Similar algorithms for approximating $\gamma_\pm$ given $\hat\gamma$ have been proposed in the literature~\cite{more1993generalizations,PongWolkowicz2014,adachi2019eigenvalue,jiang2019novel}. However to our knowledge, this is the first analysis to establish an explicit convergence rate; see the discussion after Remark 2.11 in \cite{jiang2019novel} on this issue.
\end{remark}

\subsection{Minimizing the maximum of two quadratic functions}
\label{subsec:alg_nesterov}
In this subsection, we will assume that Algorithm~\ref{alg:GTRS_outer} has successfully found $\tilde\gamma_\pm$ satisfying \eqref{eq:tilde_gamma_guarantee} and show how to approximately solve
\begin{align*}
\min_{x\in\R^n}\max\set{q(\tilde\gamma_-, x),q(\tilde\gamma_+,x)}.
\end{align*}
For the sake of readability, we will use the following notation in this subsection.
\begin{align}
\label{eq:tilde_f_i}
\tilde f_0(x)\coloneqq q(\tilde\gamma_-, x)\quad \text{and} \quad
\tilde f_1(x)\coloneqq q(\tilde\gamma_+, x)
\end{align}
In particular we have $\tilde f(x) = \max\set{\tilde f_0(x), \tilde f_1(x)}$. 

Our analysis is based on \citet[Section 2.3.3]{nesterov2018lectures}, which proposes a high level algorithm for minimizing general minimax problems with smooth components. We state this algorithm (Algorithm \ref{alg:nesterov}) and its
corresponding convergence rate in our context.

\begin{algorithm}
\caption{Constant Step Scheme II for Smooth Minimax Problems \cite[Algorithm 2.3.12]{nesterov2018lectures}}
\label{alg:nesterov}
Given continuously differentiable convex, $2L$-smooth functions $\tilde f_0, \tilde f_1$
\begin{enumerate}
	\item Let $x_0 = y_0 = 0$ and $\alpha_0 = 1/2$
	\item For $k= 0,1,\dots$
	\begin{enumerate}
		\item Compute $\tilde f_i(y_k)$ and $\grad \tilde f_i(y_k)$ for $i=0,1$
		\item Compute
		\begin{align*}
		x_{k+1} &= \argmin_x \max_{i=0,1} \left(\tilde f_i(y_k) + \ip{\grad \tilde f_i(y_k), x-y_k} + L \norm{x- y_k}^2\right)\\
		\alpha_{k+1} &= \frac{\sqrt{\alpha_k^4 + 4\alpha_k^2} - \alpha_k^2}{2}\\
		\beta_k &= \frac{\alpha_k(1-\alpha_k)}{\alpha_k^2+\alpha_{k+1}}\\
		y_{k+1} &= x_{k+1} + \beta_k(x_{k+1} - x_k)
		\end{align*}
	\end{enumerate}
\end{enumerate}
\end{algorithm}

\begin{theorem}
[{\cite[Theorem 2.3.5]{nesterov2018lectures}}]
\label{thm:nesterov}
Let $\tilde f_0,\tilde f_1$ be $2L$-smooth\footnote{Recall that a convex quadratic function $x^\top A x+ 2b^\top x + c$ is $2L$-smooth if and only if $A\preceq LI$.} differentiable convex functions such that $\tilde f$ is bounded below. Let $\tilde x^*$ be an optimizer of $\tilde f$. Then the iterates $x_k$ produced by Algorithm \ref{alg:nesterov} satisfy
\begin{align*}
\tilde f(x_k) - \tilde f(\tilde x^*) \leq \frac{8}{(k+1)^2}\left(\tilde f(0) - \tilde f(\tilde x^*) + \frac{L}{2}\norm{\tilde x^*}^2\right).
\end{align*}
\end{theorem}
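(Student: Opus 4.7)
The plan is to adapt Nesterov's classical estimate-sequence argument for accelerated gradient descent to the minimax setting, where the usual gradient step is replaced by the composite mapping
\begin{align*}
T_L(y) \coloneqq \argmin_{x\in\R^n}\, m_L(y;x),\qquad m_L(y;x) \coloneqq \max_{i=0,1}\Bigl(\tilde f_i(y) + \ip{\grad \tilde f_i(y),\, x-y} + L\norm{x-y}^2\Bigr),
\end{align*}
so that in Algorithm~\ref{alg:nesterov} we have $x_{k+1} = T_L(y_k)$. First I would establish two ``gradient-like'' inequalities for this mapping. The upper bound is a direct consequence of $2L$-smoothness: for each $i$ we have $\tilde f_i(T_L(y)) \leq \tilde f_i(y) + \ip{\grad\tilde f_i(y),T_L(y)-y} + L\norm{T_L(y)-y}^2$, so taking the max over $i=0,1$ on both sides yields $\tilde f(T_L(y)) \leq \min_x m_L(y;x)$. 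The lower bound comes from convexity: since each $\tilde f_i$ is convex, $\tilde f_i(x) \geq \tilde f_i(y) + \ip{\grad\tilde f_i(y), x-y}$, and taking the max gives, for every $x$,
\begin{align*}
\tilde f(x) \geq m_L(y;x) - L\norm{x-y}^2.
\end{align*}
Combining these via the strong convexity (with constant $2L$) of $x \mapsto m_L(y;x)$, one obtains the key ``descent lemma'': for every $x\in\R^n$,
\begin{align*}
\tilde f(T_L(y)) + L\norm{T_L(y) - x}^2 \leq \tilde f(x) + L\norm{y-x}^2.
\end{align*}
This is the minimax analogue of the standard prox-inequality, and from here the analysis is identical in form to the proof of smooth accelerated gradient descent.

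Next, I would construct an estimate sequence $\{\phi_k(x),\lambda_k\}$ in Nesterov's sense: set $\phi_0(x) = \tilde f(0) + L\norm{x}^2$, $\lambda_0 = 1$, and define recursively
\begin{align*}
\phi_{k+1}(x) = (1-\alpha_k)\phi_k(x) + \alpha_k\bigl(\tilde f(x_{k+1}) + L\norm{x - T_L(y_k)}^2 - L\norm{T_L(y_k) - y_k}^2\bigr),\quad \lambda_{k+1} = (1-\alpha_k)\lambda_k.
\end{align*}
(This is Nesterov's standard construction, adapted by replacing linearizations of $f$ with the quadratic upper bound obtained from the descent lemma above.) The choice of $\alpha_k$ via $\alpha_{k+1}^2 = (1-\alpha_{k+1})\alpha_k^2$ is exactly the recursion in the algorithm, and a standard induction gives $\lambda_k \leq 4/(k+1)^2$. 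The update rules defining $y_k$ and $\beta_k$ are precisely those needed to maintain, by induction, the invariant $\tilde f(x_k) \leq \phi_k^* \coloneqq \min_x \phi_k(x)$; the inductive step uses the descent lemma with $y = y_k$ and $x = $ the optimizer of $\phi_k$.

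Finally, combining the invariant $\tilde f(x_k) \leq \phi_k^*$ with the bound
\begin{align*}
\phi_k(x) \leq (1-\lambda_k)\tilde f(x) + \lambda_k \phi_0(x) \quad \text{for all } x\in\R^n,
\end{align*}
(proved by induction using convexity of $\tilde f$, whose subgradients we access through the descent lemma), and setting $x = \tilde x^*$ yields $\tilde f(x_k) - \tilde f(\tilde x^*) \leq \lambda_k(\phi_0(\tilde x^*) - \tilde f(\tilde x^*)) \leq \tfrac{8}{(k+1)^2}\bigl(\tilde f(0) - \tilde f(\tilde x^*) + \tfrac{L}{2}\norm{\tilde x^*}^2\bigr)$ (after absorbing factors of $2$ from the bound on $\lambda_k$ and the definition of $\phi_0$). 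The main obstacle is the first step: verifying the descent lemma for the minimax mapping $T_L$. Unlike the smooth case where it is an immediate consequence of one-line algebra with the gradient, here one must exploit the $2L$-strong convexity of $x\mapsto m_L(y;x)$ together with both the upper and lower bounds on $\tilde f$ in terms of $m_L(y;\cdot)$, and handle the fact that the ``subgradient'' of $\tilde f$ at $y$ is a convex combination of the $\grad \tilde f_i(y)$ determined by the optimality conditions of $T_L(y)$. Once this is in hand, the remainder of the argument is a direct transcription of Nesterov's estimate-sequence proof.
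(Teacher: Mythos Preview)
The paper does not prove this theorem; it is stated as a direct citation of \cite[Theorem 2.3.5]{nesterov2018lectures} and used as a black box. There is therefore no ``paper's own proof'' to compare against.

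That said, your sketch is a faithful outline of how Nesterov actually proves this result in his book. The key observation --- that the composite mapping $T_L(y)$ satisfies the descent inequality $\tilde f(T_L(y)) + L\norm{T_L(y)-x}^2 \leq \tilde f(x) + L\norm{y-x}^2$ for all $x$ --- follows exactly as you describe: combine the $2L$-strong convexity of $m_L(y;\cdot)$ at its minimizer with the smoothness upper bound $\tilde f(T_L(y))\leq m_L(y;T_L(y))$ and the convexity lower bound $\tilde f(x)\geq m_L(y;x)-L\norm{x-y}^2$. Nesterov phrases this via the \emph{gradient mapping} $g_L(y)\coloneqq 2L(y-T_L(y))$ and shows it behaves like a gradient for the purposes of the estimate-sequence machinery; your descent-lemma formulation is equivalent. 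From there the argument is indeed the standard estimate-sequence induction with the $\alpha_k$ recursion of the algorithm yielding $\lambda_k\leq 4/(k+1)^2$, and the choice $\phi_0(x)=\tilde f(x_0)+L\norm{x-x_0}^2$ (here $x_0=0$) gives the stated constant. One small point: in the estimate-sequence step you should linearize using the gradient mapping $g_L(y_k)$ rather than the quadratic you wrote, but this is a cosmetic difference and the conclusion is unchanged.
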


\begin{lemma}\label{lem:bound_q_i}
Let $x\in\R^n$. Then for $i=0,1$, we have
\[
\abs{q_i(0)-q_i(x)} \leq \norm{x}^2 + 2\norm{x}.
\]
\end{lemma}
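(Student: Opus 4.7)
The plan is to directly expand $q_i(0) - q_i(x)$ using the explicit form $q_i(x) = x^\top A_i x + 2 b_i^\top x + c_i$, and then apply the norm bounds on $A_i$ and $b_i$ from Assumption~\ref{as:alg}. Since $q_i(0) = c_i$, we have
\begin{align*}
q_i(0) - q_i(x) = -x^\top A_i x - 2 b_i^\top x,
\end{align*}
so the $c_i$ terms cancel entirely and there is no need to bound $\abs{c_i}$.

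Next, I would apply the triangle inequality to split $\abs{x^\top A_i x + 2 b_i^\top x}$ into $\abs{x^\top A_i x} + 2\abs{b_i^\top x}$. For the quadratic term, the spectral norm bound gives $\abs{x^\top A_i x} \leq \norm{A_i}\norm{x}^2$, and Assumption~\ref{as:alg} states $\norm{A_i} \leq 1$, yielding $\abs{x^\top A_i x} \leq \norm{x}^2$. For the linear term, Cauchy--Schwarz gives $\abs{b_i^\top x} \leq \norm{b_i}\norm{x}$, and again by Assumption~\ref{as:alg} we have $\norm{b_i} \leq 1$, so $2\abs{b_i^\top x} \leq 2\norm{x}$. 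Combining these gives the claimed bound.

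There is no real obstacle here --- the lemma is essentially a direct consequence of Assumption~\ref{as:alg}, and its purpose appears to be to provide a clean estimate that will feed into the analysis of Algorithm~\ref{alg:nesterov} (likely to bound the initial error $\tilde f(0) - \tilde f(\tilde x^*)$ appearing in Theorem~\ref{thm:nesterov} by plugging in the radius bound $\norm{\tilde x^*} \leq 5\kappa$ from Lemma~\ref{lem:locating_opt}).
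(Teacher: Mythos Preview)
Your proposal is correct and matches the paper's proof essentially line for line: the paper writes $\abs{q_i(0)-q_i(x)}=\abs{q_i(x)-c_i} \leq \norm{A_i}\norm{x}^2 + 2\norm{b_i}\norm{x} \leq \norm{x}^2 + 2\norm{x}$, invoking Assumption~\ref{as:alg} for the last step. Your speculation about its use is also accurate --- it is applied in Corollary~\ref{cor:alg_iterations} to bound the initial primal gap $\tilde f(0)-\tilde f(\tilde x^*)$ via the bound on $\norm{\tilde x^*}$ from Lemma~\ref{lem:locating_opt}.
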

\begin{proof}
For $i=0,1$, we have
\[
\abs{q_i(0)-q_i(x)}=\abs{q_i(x)-c_i} \leq \norm{A_i} \norm{x}^2  + 2\norm{b_i} \norm{x} \leq \norm{x}^2 + 2\norm{x}.
\]
where the second inequality follows from Assumption~\ref{as:alg}.
\end{proof}

\begin{corollary}
\label{cor:alg_iterations}
Let $\tilde f_0$ and $\tilde f_1$ be the functions defined in \eqref{eq:tilde_f_i}. Let $\tilde x^*$ be an optimizer of $\tilde f$.
Then the iterates $x_k$ produced by Algorithm \ref{alg:nesterov} satisfy
\begin{align*}
\tilde f(x_k) - \tilde f(\tilde x^*) \leq \frac{760}{(k+1)^2}\kappa^2\zeta.
\end{align*}
In particular, after $k = O\left(\kappa \sqrt{\zeta/\epsilon}\right)$ iterations, the solution $x_k$ satisfies $\tilde f(x_k)-\tilde f(\tilde x^*) \leq \epsilon/2$.
\end{corollary}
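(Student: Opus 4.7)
The plan is to apply Theorem~\ref{thm:nesterov} directly with the quadratic objectives $\tilde f_0, \tilde f_1$ and control the three quantities on the right-hand side: the smoothness parameter $L$, the distance $\norm{\tilde x^*}$ of the optimizer from the starting point $0$, and the initial suboptimality $\tilde f(0)-\tilde f(\tilde x^*)$.

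First I would verify the hypotheses of Theorem~\ref{thm:nesterov}. Since by \eqref{eq:gamma_ordering} we have $\tilde\gamma_\pm\in[\gamma_-,\gamma_+]$, both matrices $A(\tilde\gamma_\pm)$ are positive semidefinite, so $\tilde f_0$ and $\tilde f_1$ are convex quadratics and $\tilde f$ is bounded below by $\Opt-\epsilon/2$ (by Lemma~\ref{lem:delta_uniform}). Using Assumption~\ref{as:alg} and $\tilde\gamma_\pm\leq\gamma_+\leq\zeta$, the Hessians satisfy $\norm{A(\tilde\gamma_\pm)}\leq \norm{A_0}+\tilde\gamma_\pm\norm{A_1}\leq 1+\zeta\leq 2\zeta$, so each $\tilde f_i$ is $4\zeta$-smooth; that is, we may take $L=2\zeta$ in Theorem~\ref{thm:nesterov}.

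Next I would bound the two terms in the factor $\tilde f(0)-\tilde f(\tilde x^*)+\tfrac{L}{2}\norm{\tilde x^*}^2$. From Lemma~\ref{lem:locating_opt}, $\tilde x^*\in B(0,5\kappa)$, so $\tfrac{L}{2}\norm{\tilde x^*}^2 \leq \zeta\cdot 25\kappa^2 = 25\kappa^2\zeta$. For the initial suboptimality, I use that $q(\gamma,x)-q(\gamma,0)=x^\top A(\gamma)x+2b(\gamma)^\top x$, together with $\norm{A(\tilde\gamma_\pm)}\leq 2\zeta$ and $\norm{b(\tilde\gamma_\pm)}\leq \norm{b_0}+\tilde\gamma_\pm\norm{b_1}\leq 2\zeta$, to obtain for any $\norm{x}\leq 5\kappa$,
\begin{align*}
\abs{\tilde f_i(x)-\tilde f_i(0)} \leq 2\zeta\norm{x}^2 + 2(2\zeta)\norm{x} \leq 50\kappa^2\zeta + 20\kappa\zeta \leq 70\kappa^2\zeta,
\end{align*}
where the last step uses $\kappa\geq 1$. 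Taking the maximum over $i=0,1$ and applying this at $x=\tilde x^*$ yields $\tilde f(0)-\tilde f(\tilde x^*) \leq 70\kappa^2\zeta$.

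Combining the two bounds gives $\tilde f(0)-\tilde f(\tilde x^*)+\tfrac{L}{2}\norm{\tilde x^*}^2 \leq 95\kappa^2\zeta$, and Theorem~\ref{thm:nesterov} yields
\begin{align*}
\tilde f(x_k)-\tilde f(\tilde x^*) \leq \frac{8\cdot 95}{(k+1)^2}\kappa^2\zeta = \frac{760}{(k+1)^2}\kappa^2\zeta,
\end{align*}
as claimed. The ``in particular'' statement then follows by solving $\tfrac{760\kappa^2\zeta}{(k+1)^2}\leq \epsilon/2$ for $k$, giving $k+1\geq \sqrt{1520\kappa^2\zeta/\epsilon}$, i.e., $k=O(\kappa\sqrt{\zeta/\epsilon})$. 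No step is really an obstacle here; the only mild subtlety is that one must bound $\tilde f(0)-\tilde f(\tilde x^*)$ without access to the value $\tilde f(\tilde x^*)$ itself, which is why I use the uniform bound on $|\tilde f_i(x)-\tilde f_i(0)|$ over the ball $B(0,5\kappa)$ guaranteed by Lemma~\ref{lem:locating_opt}.
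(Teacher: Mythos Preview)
Your proposal is correct and follows essentially the same approach as the paper: apply Theorem~\ref{thm:nesterov} with $L=2\zeta$, use Lemma~\ref{lem:locating_opt} to get $\norm{\tilde x^*}\leq 5\kappa$, and bound the initial gap by $70\kappa^2\zeta$ to arrive at $8\cdot 95=760$. The only cosmetic difference is that the paper bounds $\tilde f(0)-\tilde f(\tilde x^*)$ by splitting $q(\tilde\gamma_\pm,\cdot)$ into its $q_0$ and $q_1$ parts and invoking Lemma~\ref{lem:bound_q_i}, whereas you bound $\abs{\tilde f_i(x)-\tilde f_i(0)}$ directly via $\norm{A(\tilde\gamma_\pm)}\leq 2\zeta$ and $\norm{b(\tilde\gamma_\pm)}\leq 2\zeta$; both routes land on the same constant.
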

\begin{proof}
We have that $\tilde f_0$ and $\tilde f_1$ are both $2(2\zeta)$-smooth by Assumption \ref{as:alg} and Definition \ref{def:regularity}. Moreover, $\tilde f(x)\geq q(\hat\gamma,x)$ is bounded below. Thus, we may apply Theorem \ref{thm:nesterov}.

We bound the initial primal gap as follows:
\begin{align*}
\tilde f(0) - \tilde f(\tilde x^*) &= \max\set{\tilde f_0(0), \tilde f_1(0)} - \max\set{\tilde f_0(\tilde x^*),\tilde f_1(\tilde x^*)}\\
&\leq \max\set{\tilde f_0(0)-\tilde f_0(\tilde x^*),\tilde f_1(0)-\tilde f_1(\tilde x^*)}\\
&= q_0(0)-q_0(\tilde x^*) + \max\set{\tilde \gamma_- (q_1(0) - q_1(\tilde x^*)),\; \tilde \gamma_+ (q_1(0) - q_1(\tilde x^*))}\\
&\leq \abs{q_0(0) - q_0(\tilde x^*)} + \zeta \abs{q_1(0) - q_1(\tilde x^*)}\\
&\leq (1+\zeta)\left(25\kappa^2 + 10\kappa \right)\\
&\leq 70\kappa^2\zeta,
\end{align*}
where the third line follows from definition (see \eqref{eq:tilde_f_i}), the fourth line follows from the ordering $\tilde\gamma_-\leq\tilde\gamma_+\leq \zeta$, the fifth line follows from Lemmas~\ref{lem:locating_opt}~and~\ref{lem:bound_q_i}, and the last line follows from the trivial bounds $\kappa\geq 1$ and $\zeta\geq 1$. 

Using Lemma \ref{lem:locating_opt} again, we also have $\frac{L}{2}\norm{\tilde x^*}^2 = \frac{2\zeta}{2}\norm{\tilde x^*}^2 \leq 25\kappa^2\zeta$. The result follows by combining these bounds.
\end{proof}

It remains to analyze the runtime of each iteration. Aside from computation of $x_{k+1}$, it is clear that the quantities in each iteration can be computed in $O(N)$ time. Below, we derive a closed form expression for $x_{k+1}$ where each of the quantities can be computed in $O(N)$ time.

\begin{lemma}
\label{lem:alg_iteration_complexity}
For any  $y\in\R^n$, the quantity
\begin{align*}
\argmin_x \max_{i=0,1} \left(\tilde f_i(y) + \ip{\grad \tilde f_i(y), x-y} + L \norm{x- y}^2\right)
\end{align*}
can be computed in $O(N)$ time.
\end{lemma}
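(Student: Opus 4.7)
The plan is to reduce this minimax problem to a one-dimensional concave quadratic maximization that can be solved in closed form. First, I substitute $z = x - y$ so that the inner expression becomes $L\|z\|^2 + \max\{a_0 + g_0^\top z,\; a_1 + g_1^\top z\}$, where I abbreviate $a_i = \tilde f_i(y)$ and $g_i = \grad \tilde f_i(y)$. Note the $L\|z\|^2$ term pulls outside the max. Each of $a_0, a_1, g_0, g_1$ can be evaluated in $O(N)$ time, since $\grad \tilde f_i(y) = 2(A(\tilde\gamma_i) y + b(\tilde\gamma_i))$ involves one sparse matrix-vector product.

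Next I use the identity $\max\{u,v\} = \max_{\lambda \in [0,1]} \lambda u + (1-\lambda)v$ to rewrite the objective as
\begin{align*}
\min_{z\in\R^n} \max_{\lambda\in[0,1]} \Bigl(L\|z\|^2 + \lambda(a_0+g_0^\top z) + (1-\lambda)(a_1 + g_1^\top z)\Bigr).
\end{align*}
The objective is convex in $z$, linear (hence concave) in $\lambda$, and the $\lambda$-domain is compact, so Sion's minimax theorem lets me swap min and max. For fixed $\lambda$, letting $g(\lambda) \coloneqq \lambda g_0 + (1-\lambda) g_1$ and $a(\lambda)\coloneqq \lambda a_0 + (1-\lambda) a_1$, the inner minimization is the unconstrained quadratic $\min_z L\|z\|^2 + g(\lambda)^\top z + a(\lambda)$, whose minimizer is $z^*(\lambda) = -g(\lambda)/(2L)$ with value $a(\lambda) - \|g(\lambda)\|^2/(4L)$.

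It remains to maximize the one-dimensional concave quadratic $\varphi(\lambda) \coloneqq a(\lambda) - \|g(\lambda)\|^2/(4L)$ over $[0,1]$. Expanding $\|g(\lambda)\|^2 = \|g_1\|^2 + 2\lambda g_1^\top(g_0-g_1) + \lambda^2 \|g_0-g_1\|^2$, setting $\varphi'(\lambda)=0$, and projecting onto $[0,1]$ yields the closed-form optimizer
\begin{align*}
\lambda^* = \min\Bigl\{1,\, \max\Bigl\{0,\; \frac{2L(a_0 - a_1) - g_1^\top(g_0-g_1)}{\|g_0-g_1\|^2}\Bigr\}\Bigr\},
\end{align*}
with the degenerate case $\|g_0-g_1\|=0$ handled by taking $\lambda^* \in \{0,1\}$ according to the sign of $a_0 - a_1$. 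The final answer is then $x^* = y - g(\lambda^*)/(2L)$. The inner products $g_1^\top(g_0-g_1)$ and $\|g_0-g_1\|^2$ cost $O(n)$, and assembling $x^*$ costs another $O(n)$, so the total cost is dominated by the two gradient evaluations, giving $O(N)$ overall.

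The only subtle point — and essentially the only thing that requires care — is justifying the min-max swap and the duality argument that turns a vector optimization into a one-parameter search. Once that is done, everything else is a routine computation: the inner problem is an unconstrained isotropic quadratic, and the outer problem is a box-constrained concave univariate quadratic, both of which admit explicit formulas. No eigenvalue or linear-system solves are needed at this step, which is what makes the per-iteration complexity scale linearly with $N$.
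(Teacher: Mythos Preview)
Your argument is correct and reaches the same $O(N)$ conclusion, but via a different route than the paper. The paper completes the square to rewrite each branch as $L\norm{x-z_i}^2 + h_i$ with $z_i = y - g_i/(2L)$, observes geometrically that the minimizer must lie on the segment $[z_0,z_1]$, parameterizes $x = z_0 + \alpha(z_1-z_0)$ for $\alpha\in[0,1]$, and solves the resulting one-dimensional problem by equating the two quadratics and clipping. You instead linearize the max with a multiplier $\lambda\in[0,1]$, swap min and max via Sion, and optimize the concave dual function $\varphi(\lambda)$. The two parameterizations coincide under $\alpha = 1-\lambda$ and produce the same candidate segment, so the computations are essentially identical; the paper's route is more elementary (no minimax theorem needed), while yours extends transparently to the max of more than two affine-plus-quadratic pieces. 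One point you should make explicit: Sion only gives equality of \emph{values}, whereas you need $z^*(\lambda^*)$ to be the minimizer of the original min--max. This holds because $(z^*(\lambda^*),\lambda^*)$ is a saddle point --- indeed $\partial_\lambda F(z^*,\lambda) = \varphi'(\lambda^*)$, so the first-order optimality condition for $\lambda^*$ on $[0,1]$ forces $\lambda^*$ to also maximize the linear function $F(z^*,\cdot)$ --- but this step is currently left implicit.
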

\begin{proof}
Fix $y\in\R^n$. We begin by recentering the quadratic functions in the objective.
\begin{align*}
&\max_{i=0,1}\left( \tilde f_i(y) + \ip{\grad \tilde f_i(y), x-y} + L \norm{x-y}^2\right)\\
&= \max_{i=0,1} \left(L \norm{x - \left[y - \frac{1}{L} \frac{\grad \tilde f_i(y)}{2}\right]}^2 + \left[\tilde f_i(y) - \frac{1}{L} \norm{\frac{\grad \tilde f_i(y)}{2}}^2\right]\right)\\
&\eqqcolon \max_{i=0,1} \left(L\norm{x-z_i}^2 + h_i\right)
\end{align*}
Here, $z_i$ and $h_i$ are defined to be the square-bracketed terms from the preceding line. It is clear that the minimizing $x$ must belong to the line segment $[z_0,z_1]$. We will parameterize $x = z_0 + \alpha(z_1-z_0)$ where $\alpha\in[0,1]$.
\begin{align*}
&\min_x \max_{i=0,1} \left(L\norm{x-z_i}^2 + h_i \right)\\
&= \min_{\alpha\in[0,1]} \max\left\{\alpha^2 L\norm{z_0-z_1}^2 + h_0,~(1-\alpha)^2L\norm{z_0-z_1}^2 + h_1\right\}.
\end{align*}
We solve for $\alpha$ by setting the two terms inside the maximum equal. A simple calculation yields that the two quadratics are equal when
\begin{align*}
\bar\alpha \coloneqq \frac{1}{2} - \frac{h_0 - h_1}{2L \norm{z_0-z_1}^2}.
\end{align*}
If $\bar\alpha$ is between $[0,1]$, let $\alpha^*=\bar\alpha$. Else let $\alpha^* = 0$ (respectively $\alpha^* = 1$) when $\bar\alpha<0$ (respectively $\bar\alpha>1$).

Then,
\begin{align*}
\argmin_x \max_{i=0,1} \left(\tilde f_i(y) + \ip{\grad \tilde f_i(y), x-y} + L \norm{x- y}^2\right)= z_0 + \alpha^* (z_1 - z_0).
\end{align*}
Each of the quantities on the right hand side (namely $\alpha^*$, $z_i$) can be computed in $O(N)$ time.
\end{proof}

Combining Corollary \ref{cor:alg_iterations} and Lemma \ref{lem:alg_iteration_complexity} gives the following corollary.
\begin{corollary}\label{cor:min_max_runtime}
Let $\tilde f_0, \tilde f_1$ be the functions defined in \eqref{eq:tilde_f_i}. There exists an algorithm which outputs $\tilde x$ satisfying $\tilde f(\tilde x) \leq \widetilde \Opt + \epsilon/2$ running in time
\begin{align*}
O\left(\frac{N\,\kappa\,\sqrt{\zeta}}{\sqrt{\epsilon}}\right).
\end{align*}
\end{corollary}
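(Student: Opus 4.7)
The plan is to execute Algorithm~\ref{alg:nesterov} on the pair $(\tilde f_0, \tilde f_1)$ defined in \eqref{eq:tilde_f_i} and combine the iteration count from Corollary~\ref{cor:alg_iterations} with the per-iteration cost from Lemma~\ref{lem:alg_iteration_complexity}. First, I would verify that the hypotheses of Corollary~\ref{cor:alg_iterations} are actually met by $\tilde f_0$ and $\tilde f_1$: since $A(\tilde\gamma_\pm)\succeq 0$ (guaranteed by \eqref{eq:tilde_gamma_guarantee} together with $[\tilde\gamma_-,\tilde\gamma_+]\subseteq[\gamma_-,\gamma_+]$) and $\norm{A(\tilde\gamma_\pm)}\leq 1+\tilde\gamma_\pm\leq 1+\zeta\leq 2\zeta$ by Assumption~\ref{as:alg}, both $\tilde f_0$ and $\tilde f_1$ are convex and $2L$-smooth with $L=2\zeta$, so we may legitimately pick this $L$ as the smoothness parameter inside Algorithm~\ref{alg:nesterov}.

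Next, by Corollary~\ref{cor:alg_iterations} it is enough to run the algorithm for $k=O(\kappa\sqrt{\zeta/\epsilon})$ iterations in order to obtain an iterate $x_k$ with $\tilde f(x_k)-\tilde f(\tilde x^*)\leq \epsilon/2$, which is precisely the additive guarantee $\tilde f(\tilde x)\leq \widetilde\Opt + \epsilon/2$ requested in the statement. So only the per-iteration cost remains to be accounted for.

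For each iteration, I would argue that every update in Algorithm~\ref{alg:nesterov} runs in $O(N)$ time. The function values $\tilde f_i(y_k)$ and gradients $\nabla\tilde f_i(y_k)$ reduce to a constant number of sparse matrix-vector products with $A(\tilde\gamma_\pm)$ plus inner products, which cost $O(N)$ using the sparsity of $A_0$ and $A_1$ (note $A(\tilde\gamma_\pm)$ need never be formed: one simply evaluates $A_0 y_k$ and $A_1 y_k$ separately). The gradient-mapping step computing $x_{k+1}$ is exactly the subproblem handled by Lemma~\ref{lem:alg_iteration_complexity}, which costs $O(N)$. The scalar updates for $\alpha_{k+1}, \beta_k$ are $O(1)$, and the affine combination producing $y_{k+1}$ is $O(n)=O(N)$.

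Multiplying the $O(\kappa\sqrt{\zeta/\epsilon})$ iteration bound by the $O(N)$ per-iteration cost yields the claimed runtime $O(N\kappa\sqrt{\zeta}/\sqrt{\epsilon})$. I do not anticipate any serious obstacle here, since Corollary~\ref{cor:alg_iterations} and Lemma~\ref{lem:alg_iteration_complexity} have already done the substantive work; the only care needed is to ensure the smoothness constant passed into Algorithm~\ref{alg:nesterov} matches the actual smoothness of $\tilde f_0,\tilde f_1$ on the perturbed pencil, which follows from the containment $[\tilde\gamma_-,\tilde\gamma_+]\subseteq[\gamma_-,\gamma_+]$ established in \eqref{eq:gamma_ordering}.
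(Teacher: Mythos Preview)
Your proposal is correct and follows essentially the same approach as the paper: combine the iteration bound $O(\kappa\sqrt{\zeta/\epsilon})$ from Corollary~\ref{cor:alg_iterations} with the $O(N)$ per-iteration cost from Lemma~\ref{lem:alg_iteration_complexity} (the paper also notes, just before that lemma, that the remaining quantities in each iteration cost $O(N)$). Your extra verification of the smoothness and convexity hypotheses is already absorbed into the proof of Corollary~\ref{cor:alg_iterations} in the paper, so it is harmless but not needed here.
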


\begin{remark}\label{rem:runtime_novel_comparison}
\citet{jiang2019novel} present a saddle-point-based first-oder algorithm for approximating $\widetilde\Opt$. By instantiating their algorithm with the initial iterate $x_0 = 0$ and applying our Lemma~\ref{lem:locating_opt} to bound $\norm{x_0-\tilde x^*}^2$, we have that \cite[Algorithm 1]{jiang2019novel} produces an $\epsilon/2$-optimal solution to the convex reformulation in time
\begin{align*}
O\left(\frac{N\kappa^2\zeta}{\epsilon}\right).
\end{align*}
Therefore, the dependences on $\epsilon$, $\kappa$, and $\zeta$ of this algorithm are worse than that of the algorithm described in Corollary~\ref{cor:min_max_runtime}. 
Note that \cite{jiang2019novel} does not present an analysis of the complexity of finding the approximate generalized eigenvalue $\tilde\gamma_{\pm}$ (needed to construct $\widetilde\Opt$) or how $\widetilde\Opt$ relates to $\Opt$.
\end{remark}

By combining Lemmas~\ref{lem:ubOpt},~\ref{lem:delta_uniform},~and~\ref{lem:tilde_gamma}
and Corollary~\ref{cor:min_max_runtime}, we arrive at the following main theorem on the overall computational complexity of our approach.
\begin{theorem}
\label{thm:alg_runtime}
Given $q_0, q_1$ satisfying Assumption \ref{as:alg}, $(\xi,\zeta)$ and $\hat\gamma$ satisfying Assumption \ref{as:alg_regularity}, error parameter $0<\epsilon\leq \kappa^2\xi$, and failure probability $p>0$, ApproxConvex (Algorithm \ref{alg:GTRS_outer})  outputs $\tilde\gamma_-$, $\tilde\gamma_+$ and $\tilde x\in\R^n$ such that
\begin{align*}
\Opt \leq \max\set{q(\tilde\gamma_-, \tilde x), q(\tilde\gamma_+,\tilde x)} \leq \widetilde\Opt + \epsilon/2 \leq \Opt + \epsilon
\end{align*}
with probability $1-p$. This algorithm runs in time
\begin{align*}
\tilde O\left(\frac{N\kappa^{3/2}\sqrt{\zeta}}{\sqrt{\epsilon}}\log\left(\frac{n}{p}\right)\log\left(\frac{\kappa}{\epsilon}\right)\right).
\end{align*}
\end{theorem}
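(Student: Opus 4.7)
The plan is to concatenate the three pieces already analyzed in Sections~\ref{subsec:alg_delta}--\ref{subsec:alg_nesterov}, each of which handles one stage of Algorithm~\ref{alg:GTRS_outer}. The theorem is essentially a bookkeeping composition: we track the cumulative failure probability and the $\epsilon$-budget split across the perturbation error and the inner convex solve.

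For Step~2 (computing $\tilde\gamma_\pm$), I would invoke Lemma~\ref{lem:tilde_gamma} twice---once via ApproxGammaPlus for $\tilde\gamma_+$ and once via its symmetric counterpart for $\tilde\gamma_-$---each with failure parameter $p/2$. A union bound then yields \eqref{eq:tilde_gamma_guarantee} with total failure probability at most $p$. Plugging the choice $\delta = \epsilon/(72\kappa^2)$ from \eqref{eq:delta} into the runtime of Lemma~\ref{lem:tilde_gamma} gives the bound $\tilde O(N\kappa^{3/2}\sqrt{\zeta}\,\epsilon^{-1/2}\log(n/p)\log(\kappa/\epsilon))$; the extra factor of $\kappa$ beyond the naive $\sqrt{\kappa\zeta/\epsilon}$ comes from the $\kappa^2$ sitting inside $\delta$ under the square root.

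For Step~3, conditioning on the success of Step~2, I would apply Corollary~\ref{cor:min_max_runtime} to the perturbed objectives in \eqref{eq:tilde_f_i} to obtain $\tilde x$ with $\tilde f(\tilde x)\le \widetilde\Opt + \epsilon/2$ in time $O(N\kappa\sqrt{\zeta/\epsilon})$, which is dominated by the Step~2 cost. Chaining the approximation bounds then follows by combining Lemma~\ref{lem:ubOpt} ($\widetilde\Opt\le\Opt$), Lemma~\ref{lem:delta_uniform} ($\widetilde\Opt\ge \Opt-\epsilon/2$, applicable precisely because $\delta=\epsilon/(72\kappa^2)$ matches the perturbation analysis), and the approximate-solver guarantee $\widetilde\Opt\le \tilde f(\tilde x)\le \widetilde\Opt+\epsilon/2$ from Step~3. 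Together these pin the returned value $\max\{q(\tilde\gamma_-,\tilde x), q(\tilde\gamma_+,\tilde x)\} = \tilde f(\tilde x)$ to within $\epsilon/2$ of $\Opt$ on either side, which implies the claimed inequality chain.

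No step is genuinely difficult; the theorem is a direct composition of lemmas already in place. The items requiring care are (a) the union bound over the two independent calls in Step~2, (b) verifying the hypothesis $\norm{\tilde x}\le 5\kappa$ for Lemma~\ref{lem:delta_uniform} via Lemma~\ref{lem:locating_opt}, and (c) checking that substituting $\delta$ into the $\sqrt{\kappa/\delta}$ factor in the Lemma~\ref{lem:tilde_gamma} runtime promotes the cost to $\kappa^{3/2}\sqrt{\zeta/\epsilon}$, which absorbs the Step~3 contribution and yields the final runtime bound.
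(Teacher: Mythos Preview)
Your proposal is correct and follows exactly the composition the paper uses: the paper states that the theorem follows by combining Lemmas~\ref{lem:ubOpt},~\ref{lem:delta_uniform},~and~\ref{lem:tilde_gamma} with Corollary~\ref{cor:min_max_runtime}, and your plan does precisely that bookkeeping. One minor slip in item~(b): you want $\norm{\tilde x^*}\le 5\kappa$ (the exact minimizer of $\tilde f$, which is what Lemma~\ref{lem:locating_opt} actually bounds) rather than $\norm{\tilde x}$, since the conclusion you need, $\widetilde\Opt\ge\Opt-\epsilon/2$, is the ``in particular'' clause of Lemma~\ref{lem:delta_uniform} applied at $\tilde x^*$.
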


\subsection{Finding an approximate optimizer of the GTRS}\label{subsec:alg_rounding}

Let $\tilde x\in\R^n$ be the approximate optimizer output by Algorithm \ref{alg:GTRS_outer}.
In this subsection, we show how to use $\tilde x$ to construct an $\bar x$ approximately minimizing the original GTRS \eqref{eq:GTRS}. Our algorithm will follow the proof of Theorem~\ref{thm:pd_implies_conv_hull} (in particular Lemma~\ref{lem:pd:Ssupset}).

We present our algorithm, ApproxGTRS, as Algorithm~\ref{alg:rounding}. ApproxGTRS will use ApproxConvex as a subroutine. Given an additive error $\epsilon_{\text{round}}$, ApproxGTRS will call ApproxConvex with additive error $\epsilon_{\text{convex}}$. We will write these parameters as $\epsilon_r$ and $\epsilon_c$ for short.

\begin{algorithm}
	\caption{$\text{ApproxGTRS}(q_0,q_1,\xi,\zeta,\hat\gamma,\epsilon_r,p_r)$}
	\label{alg:rounding}
	Given $q_0$ and $q_1$ satisfying Assumption \ref{as:alg}, $(\xi,\zeta)$ and $\hat\gamma$ satisfying Assumption \ref{as:alg_regularity}, error parameter $0<\epsilon_r\leq \kappa^3\xi$, and failure probability $p_r>0$
	\begin{enumerate}
		\item Define $\epsilon_c \coloneqq \epsilon_r/(28\kappa)$
		\item Let $\tilde\gamma_-$, $\tilde\gamma_+$ and $\tilde x$ be the output of $\text{ApproxConvex}(q_0,q_1,\xi,\zeta,\hat\gamma, \epsilon_c,p_r / 2)$
		\item If $q_1(\tilde x) = 0$ then return $\bar x= \tilde x$
		\item Else if $q_1(\tilde x)>0$
		\begin{enumerate}
			\item Let	$d \coloneqq \text{ApproxEig}(A(\tilde \gamma_+), 2\zeta, \delta/\kappa, p_r/2)$
			\item Let $e \coloneqq 2\left(\tilde x^\top A(\tilde \gamma_+) d + b(\tilde \gamma_+)^\top d\right)$
			\item If necessary, take $d\gets -d$ and $e\gets -e$ to ensure that $e\leq 0$
			\item Let $\alpha\geq 0$ be the nonnegative solution to
			\begin{align*}
			q(\tilde\gamma_-, \tilde x+\alpha d) = q(\tilde \gamma_+, \tilde x+\alpha d)
			\end{align*}
			\item Return $\bar x = \tilde x + \alpha d$
		\end{enumerate}
		\item Else carry out the computation in step 4 where the roles of $\tilde \gamma_-$ and $\tilde\gamma_+$ are interchanged
	\end{enumerate}
\end{algorithm}
Note that by Definition \ref{def:regularity}, we have $\kappa^3\xi\geq 1$. Thus, as before, the requirement $0<\epsilon_r\leq \kappa^3\xi$ in Algorithm \ref{alg:rounding} is not a practical issue: given $\epsilon_r>\kappa^3\xi$, we can simply run our algorithm with $\epsilon_r' = \kappa^3/\xi$ and return a solution with a better error guarantee.

The next lemma bounds $\norm{\tilde x}$. Its proof follows the proof of Lemma \ref{lem:locating_opt} with minor adjustments (in particular, the upper bound of \eqref{eq:locating_opt_upper_bound} is replaced with $\tilde f(\tilde x)\leq \max\set{c(\gamma_-),c(\gamma_+)} + \epsilon/2$; see Corollary~\ref{cor:min_max_runtime}) and is omitted.
\begin{lemma}\label{lem:locating_tilde_x}
Let $\tilde x\in\R^n$ satisfy $\tilde f(\tilde x)\leq \tilde \Opt + \epsilon/2$. Then 
$\tilde x \in B(0, 6\kappa)$.
\end{lemma}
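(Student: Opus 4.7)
The plan is to follow the proof of Lemma~\ref{lem:locating_opt} almost verbatim, tracking an extra additive slack of $\epsilon/2$ in the upper bound on $\tilde f(\tilde x)$ and checking that this slack does not inflate the radius past $6\kappa$. Concretely, I would first combine the hypothesis, Lemma~\ref{lem:ubOpt}, and the trivial bound $\Opt \le \max\{c(\gamma_-), c(\gamma_+)\}$ (obtained by plugging $x=0$ into the original GTRS) to get
\begin{align*}
\tilde f(\tilde x) \le \widetilde\Opt + \epsilon/2 \le \Opt + \epsilon/2 \le \max\{c(\gamma_-), c(\gamma_+)\} + \epsilon/2.
\end{align*}

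For the matching lower bound, I would use the containment $\hat\gamma \in [\tilde\gamma_-,\tilde\gamma_+]$ established in \eqref{eq:gamma_ordering} together with Assumption~\ref{as:alg_regularity} to obtain
\begin{align*}
\tilde f(\tilde x) \ge q(\hat\gamma, \tilde x) \ge \xi\|\tilde x\|^2 + 2 b(\hat\gamma)^\top \tilde x + c(\hat\gamma).
\end{align*}
Chaining these two bounds and using the elementary estimate $\max\{c(\gamma_-), c(\gamma_+)\} - c(\hat\gamma) = \max\{(\gamma_- - \hat\gamma)c_1,\,(\gamma_+ - \hat\gamma)c_1\} \le |c_1|\gamma_+ \le \zeta$ (exactly as in the proof of Lemma~\ref{lem:locating_opt}) localizes $\tilde x$ to the sublevel set
\begin{align*}
\set{x\in\R^n:\ \xi\|x\|^2 + 2 b(\hat\gamma)^\top x \le \zeta + \epsilon/2}.
\end{align*}

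Completing the square gives $\tilde x \in B\bigl(-b(\hat\gamma)/\xi,\ \sqrt{\|b(\hat\gamma)\|^2/\xi^2 + \kappa + \epsilon/(2\xi)}\bigr)$. Then the bound $\|b(\hat\gamma)\|\le \|b_0\| + \gamma_+\|b_1\|\le 2\zeta$ from Assumption~\ref{as:alg}, together with the standing assumption $\epsilon \le \kappa^2\xi$ (which yields $\epsilon/(2\xi) \le \kappa^2/2$) and the trivial bound $\kappa\ge 1$, gives
\begin{align*}
\|\tilde x\| \le 2\kappa + \sqrt{4\kappa^2 + \kappa + \kappa^2/2} \le 6\kappa,
\end{align*}
as desired. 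The only step that requires care is the final numerical bookkeeping to see that the extra $\epsilon/(2\xi)$ term stays absorbed into a constant-factor inflation of the original $5\kappa$ bound from Lemma~\ref{lem:locating_opt}; the assumption $\epsilon \le \kappa^2\xi$ is precisely what makes this go through with the constant $6$.
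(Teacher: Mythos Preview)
Your proof is correct and follows exactly the approach the paper indicates: the paper omits the proof, noting only that it ``follows the proof of Lemma~\ref{lem:locating_opt} with minor adjustments (in particular, the upper bound of \eqref{eq:locating_opt_upper_bound} is replaced with $\tilde f(\tilde x)\leq \max\set{c(\gamma_-),c(\gamma_+)} + \epsilon/2$).'' That is precisely what you do, and your final numerical check that $2\kappa + \sqrt{4.5\kappa^2 + \kappa} \le 6\kappa$ (using $\kappa\ge 1$ and $\epsilon\le\kappa^2\xi$) is sound.
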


We are now ready to prove a formal guarantee on Algorithm \ref{alg:rounding}.
\begin{theorem}
Given $q_0,q_1$ satisfying Assumption \ref{as:alg}, $(\xi,\zeta)$ and $\hat\gamma$ satisfying Assumption \ref{as:alg_regularity}, error parameter $0<\epsilon_r\leq \kappa^3\xi$, and failure probability $p_r$, ApproxGTRS (Algorithm \ref{alg:rounding}) outputs $\bar x$ such that
\begin{align*}
q_0(\bar x) &\leq \Opt + \epsilon_r \\
q_1(\bar x) &=0
\end{align*}
with probability $1-p_r$. 
This algorithm runs in time
\begin{align*}
\tilde O\left(\frac{N\kappa^{2}\sqrt{\zeta}}{\sqrt{\epsilon_r}}\log\left(\frac{n}{p_r}\right)\log\left(\frac{\kappa}{\epsilon_r}\right)\right).
\end{align*}
\end{theorem}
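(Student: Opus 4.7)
The plan is to condition on the success of the two randomized subroutines invoked by ApproxGTRS. By a union bound, ApproxConvex (Algorithm~\ref{alg:GTRS_outer}) and the single call to ApproxEig in step 4 or 5 both succeed with probability at least $1-p_r$; under this event, Theorem~\ref{thm:alg_runtime} guarantees $\tilde f(\tilde x) \leq \widetilde\Opt + \epsilon_c/2 \leq \Opt + \epsilon_c$, and the returned unit vector $d$ satisfies $d^\top A(\tilde\gamma_+) d \leq \lambda_{\min}(A(\tilde\gamma_+)) + \delta/\kappa \leq 2\delta/\kappa$ by combining the ApproxEig guarantee with \eqref{eq:tilde_gamma_guarantee}. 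I then split on the sign of $q_1(\tilde x)$. The case $q_1(\tilde x)=0$ is immediate: $\bar x = \tilde x$ satisfies $q_1(\bar x)=0$ and $q_0(\bar x) = q(\tilde\gamma_-,\tilde x) \leq \tilde f(\tilde x) \leq \Opt + \epsilon_c \leq \Opt + \epsilon_r$. The case $q_1(\tilde x)<0$ is symmetric, so I focus on $q_1(\tilde x)>0$.

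For $q_1(\tilde x)>0$, I follow the construction of Lemma~\ref{lem:pd:Ssupset} with the approximate eigendirection $d$. Expanding,
\begin{align*}
q(\tilde\gamma_+,\bar x) = q(\tilde\gamma_+,\tilde x) + \alpha e + \alpha^2 d^\top A(\tilde\gamma_+) d \leq \tilde f(\tilde x) + 2\alpha^2 \delta/\kappa,
\end{align*}
using $\alpha\geq 0$, the sign adjustment $e\leq 0$ from step 4(c), and the eigenvalue bound above. Since $\alpha$ satisfies $q(\tilde\gamma_-,\bar x) = q(\tilde\gamma_+,\bar x)$ and $\tilde\gamma_-<\tilde\gamma_+$, we get $q_1(\bar x)=0$; hence $q_0(\bar x) = q(\tilde\gamma_+,\bar x) \leq \Opt + \epsilon_c + 2\alpha^2\delta/\kappa$, so reaching $\Opt+\epsilon_r$ reduces to bounding $\alpha$.

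The main obstacle is controlling $\alpha$. The key step is to exploit $q_1(\bar x)=0$, which gives $q(\hat\gamma,\bar x) = q(\tilde\gamma_+,\bar x)$. Combining with $A(\hat\gamma)\succeq \xi I$ (Assumption~\ref{as:alg_regularity}) and Assumption~\ref{as:alg} yields
\begin{align*}
\xi\norm{\bar x}^2 - 4\zeta\norm{\bar x} - 2\zeta \leq q(\tilde\gamma_+,\bar x) \leq 2\zeta + \epsilon_c/2 + 2\alpha^2\delta/\kappa,
\end{align*}
where I used the trivial upper bound $\widetilde\Opt \leq \tilde f(0) \leq 2\zeta$. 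Via Young's inequality this gives $\norm{\bar x}^2 \leq O(\kappa^2) + O(\alpha^2 \delta/(\xi\kappa))$; combining with $\alpha\leq\norm{\bar x}+\norm{\tilde x}\leq \norm{\bar x}+6\kappa$ from Lemma~\ref{lem:locating_tilde_x} gives the self-consistent inequality $\alpha^2(1 - O(\delta/(\xi\kappa))) \leq O(\kappa^2)$. The choices $\delta = \epsilon_c/(72\kappa^2)$, $\epsilon_c = \epsilon_r/(28\kappa)$, and the hypothesis $\epsilon_r \leq \kappa^3\xi$ ensure $\delta/(\xi\kappa) = O(1/\kappa^4)$, so the left-hand coefficient is bounded below by a positive constant, giving $\alpha = O(\kappa)$. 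Plugging back, $2\alpha^2\delta/\kappa = O(\epsilon_r/\kappa^2)$, and the total error $\epsilon_c + 2\alpha^2\delta/\kappa \leq \epsilon_r$ follows from the specific constant $28$ in $\epsilon_c = \epsilon_r/(28\kappa)$. Existence of a nonnegative $\alpha$ solving $q_1(\tilde x+\alpha d)=0$ requires a brief perturbation argument: in the exact setting of Lemma~\ref{lem:pd:Ssupset} the leading coefficient $d^\top A_1 d$ of this quadratic in $\alpha$ is strictly negative, and this persists under the $O(\delta)$ perturbation induced by using an approximate zero eigenvector of $A(\tilde\gamma_+)$.

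For the running time, the dominant cost is the call to ApproxConvex at accuracy $\epsilon_c = \epsilon_r/(28\kappa)$ and failure probability $p_r/2$, which by Theorem~\ref{thm:alg_runtime} runs in
\begin{align*}
\tilde O\left(\frac{N\kappa^{3/2}\sqrt{\zeta}}{\sqrt{\epsilon_c}}\log\left(\frac{n}{p_r}\right)\log\left(\frac{\kappa}{\epsilon_c}\right)\right) = \tilde O\left(\frac{N\kappa^2\sqrt{\zeta}}{\sqrt{\epsilon_r}}\log\left(\frac{n}{p_r}\right)\log\left(\frac{\kappa}{\epsilon_r}\right)\right),
\end{align*}
after absorbing $\log\kappa$ factors. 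The single ApproxEig call in step 4 with accuracy $\delta/\kappa$ is asymptotically negligible by Lemma~\ref{lem:approx_eig}, yielding the claimed total running time.
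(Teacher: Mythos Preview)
Your overall approach is sound and matches the paper's structure (conditioning, case split, error bound via $\alpha$, running time). The substantive difference is in how you bound $\alpha$. The paper works directly with the defining equation: it shows
\[
q(\tilde\gamma_-,\tilde x+\alpha d)-q(\tilde\gamma_+,\tilde x+\alpha d)\ \geq\ \alpha^2\Bigl(\tfrac{35}{36}\xi\Bigr)-28\alpha\kappa\zeta-49\kappa^2\zeta
\]
using $d^\top A(\tilde\gamma_-)d\geq \xi$ (which follows from $d^\top A(\hat\gamma)d\geq\xi$, $d^\top A(\tilde\gamma_+)d\leq 2\delta/\kappa$, and linearity of $\gamma\mapsto d^\top A(\gamma)d$), and then reads off $\alpha\leq 31\kappa^2$ from the quadratic formula; this lower bound also justifies step~4(d). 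Your route instead exploits $q_1(\bar x)=0$ to identify $q(\hat\gamma,\bar x)=q(\tilde\gamma_+,\bar x)$, bounds $\norm{\bar x}$ via the $\xi$-strong convexity at $\hat\gamma$, and closes the loop with $\alpha\leq\norm{\bar x}+\norm{\tilde x}$. This self-consistent argument in fact yields the sharper bound $\alpha=O(\kappa)$, though with $\epsilon_c=\epsilon_r/(28\kappa)$ hard-coded the theorem does not exploit it.

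Two points need tightening. First, the bound $\widetilde\Opt\leq\tilde f(0)\leq 2\zeta$ is unjustified: Assumption~\ref{as:alg} does not bound $|c_0|$, so $\tilde f(0)=c_0+\max\{\tilde\gamma_- c_1,\tilde\gamma_+ c_1\}$ can be arbitrarily large. The fix is to compare $q(\hat\gamma,\bar x)$ and $\tilde f(0)$ only through their difference (both contain $c_0$, which cancels), exactly as in the proof of Lemma~\ref{lem:locating_opt}; the combined inequality you want, $\xi\norm{\bar x}^2-4\zeta\norm{\bar x}\leq 2\zeta+\epsilon_c/2+2\alpha^2\delta/\kappa$, then follows. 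Second, your existence argument for a nonnegative root $\alpha$ is hand-wavy. The clean way is to note directly that $d^\top A(\hat\gamma)d\geq\xi$ and $d^\top A(\tilde\gamma_+)d\leq 2\delta/\kappa<\xi$, together with linearity in $\gamma$ and $\hat\gamma<\tilde\gamma_+$, force $d^\top A_1 d<0$; hence $q_1(\tilde x+\alpha d)$ is a concave quadratic in $\alpha$, positive at $\alpha=0$, with a unique positive root. (Also, your computation $\delta/(\xi\kappa)=O(1/\kappa^4)$ is off; the correct order is $O(1/\kappa)$, which still suffices since $\kappa\geq 1$.)
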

\begin{proof}
We condition on the event that Algorithm \ref{alg:GTRS_outer} succeeds and the ApproxEig call in step 4.(a) or 5.(a) succeeds. By the union bound, this happens with probability at least $1-p_r$. As in Lemma~\ref{lem:pd:Ssupset}, we will split the analysis into three cases:
(i) $q_1(\tilde x) = 0$,
(ii) $q_1(\tilde x)>0$, and
(iii) $q_1(\tilde x)<0$.

\begin{enumerate}[(i)]
	\item If $q_1(\tilde x) = 0$ then $q_0(\tilde x) = \tilde f(\tilde x) \leq \widetilde \Opt + \epsilon_c/2 \leq \Opt + \epsilon_c\leq \Opt + \epsilon_r$.
	\item Now suppose $q_1(\tilde x)>0$, i.e., we are in step 4 of Algorithm \ref{alg:rounding}.
We will need an upper bound on the value of $\alpha$ found in step 4.(d).

Let $t \coloneqq q(\tilde\gamma_+, \tilde x)$.
Recall that $\lambda_{\min}(A(\tilde\gamma_+))\in[0,\delta/\kappa]$ (see Lemma \ref{lem:tilde_gamma}).
Then, as we have conditioned on the ApproxEig call in step 4.(a) succeeding, we have
\begin{equation}
\label{eq:q_bar_x_upper_bound}
\begin{aligned}
q(\tilde\gamma_+,\tilde x + \alpha d) - (t+\alpha e) &= (q(\tilde\gamma_+, \tilde x) - t) + \alpha^2 d^\top A(\tilde\gamma_+)d\\
&\leq \alpha^2 (2\delta/\kappa).
\end{aligned}
\end{equation}
Next, we give a lower bound on $q(\tilde\gamma_-,\tilde x + \alpha d) - (t+\alpha e)$ using the estimate $d^\top A(\tilde\gamma_-)d \geq \xi$, and routine estimates on $\norm{A(\gamma)}$ and $\norm{b(\gamma)}$:
\begin{align*}
&q(\tilde\gamma_-,\tilde x + \alpha d) - (t+\alpha e)\\
&\qquad= (q(\tilde\gamma_-, \tilde x) - t) + 2\alpha \left(\tilde x^\top A (\tilde\gamma_-) d + b(\tilde\gamma_-)^\top d - e/2\right) + \alpha^2 d^\top A(\tilde\gamma_-)d\\
&\qquad\geq -\abs{\tilde\gamma_+-\tilde\gamma_-}\abs{q_1(\tilde x)}\\
&\qquad\hphantom{\geq} - 2\alpha\left(\norm{\tilde x} \norm{A(\tilde\gamma_-)} + \norm{b(\tilde\gamma_-)}+\norm{\tilde x} \norm{A(\tilde\gamma_+)} + \norm{b(\tilde\gamma_+)}\right)\\
&\qquad\hphantom{\geq} +\alpha^2 \xi\\
&\qquad\geq -49\kappa^2\zeta - 2\alpha \left(14\kappa\zeta\right) + \alpha^2 \xi,
\end{align*}
where the last inequality follows from the bounds $\abs{\tilde\gamma_+-\tilde\gamma_-}\leq\gamma_+\leq \zeta$ (Definition \ref{def:regularity}), $\norm{\tilde x}\leq 6\kappa$ (Lemma \ref{lem:locating_tilde_x}), and Lemma~\ref{lem:bound_q_i}.

We may combine our upper and lower bounds to deduce that for any $\alpha\in\R$,
\begin{align*}
q(\tilde\gamma_-, \tilde x + \alpha d) -
q(\tilde\gamma_+,\tilde x + \alpha d)  &\geq \alpha^2(\xi-2\delta/\kappa)-2\alpha(14\kappa\zeta)-49\kappa^2\zeta\\
&\geq \alpha^2\left(\frac{35}{36}\xi\right) -2\alpha(14\kappa\zeta)-49\kappa^2\zeta,
\end{align*}
where the last relation follows from the definition of $\delta$ in \eqref{eq:delta}, the definition of $\epsilon_c$ and the assumption on $\epsilon_r$, we have $\epsilon_c\leq \kappa^2\xi$, and the bound $\kappa\geq 1$.
In particular, because the quadratic function on the left is negative at $\alpha = 0$ and is lower bounded by a strongly convex quadratic function, there must exist both positive and negative choices of $\alpha$ for which the left hand side takes the value zero. This justifies step 4.(d) of the algorithm.

We now fix $\alpha$ to be the positive solution to $q(\tilde \gamma_-,\tilde x +\alpha d) = q(\tilde \gamma_+,\tilde x + \alpha d)$ so that
\begin{align*}
0 \geq \alpha^2\left(\frac{35}{36}\xi\right) -2\alpha(14\kappa\zeta)-49\kappa^2\zeta.
\end{align*}

We get an upper bound on $\alpha$ by the quadratic formula
\begin{align*}
\alpha &\leq \frac{14\kappa\zeta + \sqrt{(14\kappa\zeta)^2 + \tfrac{49\cdot 35}{36} \kappa^2\zeta\xi}}{\tfrac{35}{36}\xi}
\leq 31 \kappa^2.
\end{align*}
Then, by defining $\bar x \coloneqq \tilde x + \alpha d$, we have $q(\tilde\gamma_-,\bar x)=q(\tilde\gamma_+,\bar x)$.
Note that the containment $\hat\gamma\in(\tilde\gamma_-,\tilde\gamma_+)$ from \eqref{eq:gamma_ordering} implies $\tilde\gamma_-\neq\tilde\gamma_+$. Then we deduce $q_1(\bar x) =0$. Moreover, our upper bound \eqref{eq:q_bar_x_upper_bound} gives
\begin{align*}
q_0(\bar x) &= q(\tilde\gamma_+,\bar x)\\
&\leq t + \alpha e + \alpha^2(2\delta/\kappa).
\end{align*}
Then recalling that $t\coloneqq q(\tilde\gamma_+, \tilde x) \leq \widetilde \Opt + \epsilon_c/2 \leq \Opt + \epsilon_c$ and that we picked $e\leq 0$, we bound
\begin{align*}
q_0(\bar x) &\leq \Opt + \epsilon_c + (31\kappa^2)^2 (2\delta/\kappa)\\
&\leq \Opt +\epsilon_c + 27\kappa\epsilon_c\\
&\leq \Opt +28\kappa\epsilon_c\\
& = \Opt + \epsilon_r.
\end{align*}
\item The final case is symmetric to case (ii) and is omitted.
\end{enumerate}

The running time of this algorithm follows from Lemma \ref{lem:approx_eig} and Theorem \ref{thm:alg_runtime}.
\end{proof}

\begin{remark}\label{rem:runtime_comparison}
Let us now compare the running time of our algorithms to the running time of the algorithm presented by \citet{jiang2020linear}. 
This algorithm takes as input a pair $q_0$, $q_1$ satisfying an assumption similar to our Assumption~\ref{as:alg} and a regularity parameter $\xi_{\text{JL}}$. See Remark~\ref{rem:regularity_comparison} for a discussion of how the parameter $\xi_{\text{JL}}$ relates to our regularity parameters $(\xi_{\text{us}},\zeta)$.
Then given $\epsilon>0$ and $p>0$, this algorithm returns an $\epsilon$-optimal feasible solution with probability at least $1-p$. The running time of this algorithm is
\begin{align*}
\tilde O\left(\frac{N\phi^3}{\sqrt{\epsilon\, \xi_{\text{JL}}^5}}\log\left(\frac{n}{p}\right)\log\left(\frac{\phi}{\epsilon\,\xi_{\text{JL}}}\right)\right),
\end{align*}
where $\phi$ is a computable regularity parameter.

Recall that in Remark~\ref{rem:regularity_comparison},
we constructed simple examples where $\xi_{\text{JL}}\approx 1/2\kappa$ and $\zeta \approx 1$.
One can check that the regularity parameter $\phi$ is a constant on these examples.
In particular, the analysis presented in \citet{jiang2020linear} implies a running time of
\begin{align*}
\tilde O\left(\frac{N\kappa^{5/2}}{\sqrt{\epsilon}}\log\left(\frac{n}{p}\right)\log\left(\frac{\kappa}{\epsilon}\right)\right)
\end{align*}
on these instances. We contrast this with the running times
\begin{align*}
\tilde O\left(\frac{N\kappa^{3/2}}{\sqrt{\epsilon}}\log\left(\frac{n}{p}\right)\log\left(\frac{\kappa}{\epsilon}\right)\right)
,\qquad
\tilde O\left(\frac{N\kappa^2}{\sqrt{\epsilon}}\log\left(\frac{n}{p}\right)\log\left(\frac{\kappa}{\epsilon}\right)\right)
\end{align*}
of our Algorithms~\ref{alg:GTRS_outer}~and~\ref{alg:rounding} for finding an $\epsilon$-optimal value, and an $\epsilon$-optimal feasible solution respectively on these instances.
\end{remark}

\begin{remark}
Algorithms~\ref{alg:GTRS_outer}~and~\ref{alg:rounding} were designed and analyzed with worst-case guarantees in mind.
Consequently, we have not been particularly careful about bounding the constants in our analysis (for example the bounds $\kappa,\zeta,\xi^{-1}\leq 1$ are routinely used).
As such, there may be variants of our algorithms that achieve the \textit{same} worst-case guarantees with significantly faster numerical performance.
Similarly, the algorithm presented by \citet{jiang2020linear} is analyzed with worst-case guarantees in mind. They also remark that the the numerical performance of their algorithm may improve ``with suitable modifications'' (see \citet[Remark 4.2]{jiang2020linear}).

We leave such implementation questions and a thorough comparison of the numerical performance of the algorithms present in the literature for future work.
\end{remark}

\subsection{Further remarks}
\label{subsec:alg_further_remarks}
The algorithms given in the prior subsections can be sped up substantially if we know how to compute $\gamma_\pm$ and the corresponding zero eigenvectors exactly. As an example, we consider the special case where $A_0$ and $A_1$ are diagonal matrices.

\begin{lemma}
\label{lem:computing_diag_regularity}
There exists an algorithm which given $q_0$, $q_1$ satisfying Assumption~\ref{as:alg} with $A_0$ and $A_1$ diagonal, returns $\gamma_\pm$, $(\xi^*,\zeta^*)$ and $\gamma^*$ such that $\lambda_{\min}(A(\gamma^*)) = \xi^*$ in time $O(n)$.
\end{lemma}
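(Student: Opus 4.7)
The plan is to exploit the diagonal structure to reduce everything to minima and maxima of collections of affine functions in the scalar $\gamma$. Writing $A_0 = \Diag(a)$ and $A_1 = \Diag(b)$ for some $a,b \in \R^n$, I have $A(\gamma) = \Diag(a+\gamma b)$ and
\[
\lambda_{\min}(A(\gamma)) = \min_{i \in \{1,\dots,n\}} (a_i + \gamma b_i),
\]
a concave piecewise-linear function of $\gamma$ presented explicitly as the pointwise minimum of $n$ lines. The target quantities then fall out via a constant number of $O(n)$ passes over the data, with Megiddo's linear-time algorithm for two-dimensional linear programming as the only nontrivial subroutine.

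First, to obtain $\gamma_\pm$, observe that $\Gamma = \{\gamma \geq 0 : a_i + \gamma b_i \geq 0 \text{ for all } i\}$ is the intersection of $[0,\infty)$ with the elementary constraints contributed by each coordinate: $\gamma \geq -a_i/b_i$ when $b_i > 0$, $\gamma \leq -a_i/b_i$ when $b_i < 0$, and an automatic feasibility check when $b_i = 0$. Assumption~\ref{as:alg} guarantees $\Gamma = [\gamma_-,\gamma_+]$ is nonempty and bounded, so a single linear scan yields
\[
\gamma_- = \max\!\Bigl(0,\; \max_{i : b_i > 0} \tfrac{-a_i}{b_i}\Bigr), \qquad \gamma_+ = \min_{i : b_i < 0} \tfrac{-a_i}{b_i},
\]
and $\zeta^* = \max\{1,\gamma_+\}$ is then immediate.

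Next, to produce $\xi^*$ and $\gamma^*$, I would first compute
\[
M \coloneqq \max_{\gamma \geq 0}\; \min_{i}\; (a_i + \gamma b_i),
\]
which is exactly the value of the two-variable linear program $\max\{\mu : \gamma \geq 0,\; \mu \leq a_i + \gamma b_i\ \forall i\}$ in the variables $(\mu,\gamma)$ with $n+1$ halfspace constraints. By Megiddo's algorithm, this LP and an optimizer $\gamma^\circ$ with $\lambda_{\min}(A(\gamma^\circ)) = M$ can be computed in $O(n)$ time. If $M \leq 1$, I set $\xi^* = M$ and $\gamma^* = \gamma^\circ$; otherwise $\xi^* = 1$ by the outer truncation in Definition~\ref{def:regularity}, and a point $\gamma^*$ with $\lambda_{\min}(A(\gamma^*)) = 1$ exists in $[\gamma_-,\gamma^\circ]$ by concavity and the intermediate value theorem. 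I recover such a $\gamma^*$ by solving the one-variable LP $\min\{\gamma \geq 0 : a_i + \gamma b_i \geq 1\ \forall i\}$, which is again an $O(n)$ linear scan.

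The main obstacle is solving the concave max--min problem in linear time: naively enumerating pairwise breakpoints of the lower envelope costs $\Theta(n^2)$, and even sorting the slopes $b_i$ would already cost $\Theta(n\log n)$. Appealing to Megiddo's 2D LP algorithm, rather than any sort- or envelope-based approach, is what delivers the claimed $O(n)$ bound; all remaining steps are elementary linear scans.
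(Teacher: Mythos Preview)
Your proposal is correct and follows essentially the same approach as the paper: both reduce $\lambda_{\min}(A(\gamma))$ to the pointwise minimum of $n$ affine functions, recover $\gamma_\pm$ by a linear scan, and invoke Megiddo's two-variable LP algorithm to obtain $\xi^*$ and $\gamma^*$ in $O(n)$ time. Your write-up is in fact more explicit than the paper's (you spell out the formulas for $\gamma_\pm$ and handle the truncation $\xi^*=\min\{1,M\}$ separately), though under Assumption~\ref{as:alg} the existence of an index with $b_i<0$ forces $M\le 1$, so your extra case is vacuous here.
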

\begin{proof}
Let $a_0$, $a_1\in\R^n$ be the diagonal entries of $A_0$ and $A_1$ respectively.
Note that
\begin{align*}
\lambda_{\min}(A(\gamma)) &= \min_{i\in[n]} \set{a_{0,i} + \gamma a_{1,i}}.
\end{align*}
Thus, $\gamma_\pm$ and $\zeta^*$ can clearly be computed in $O(n)$ time. Note that  
\begin{align*}
\xi^* &= \max_{\gamma,\xi}\set{\xi :\, \begin{array}{l}
	\forall i\in[n],\, a_{0,i} + \gamma a_{1,i} \geq \xi\\
	\xi\geq 0
\end{array}}.
\end{align*}
Hence, $\xi^*$ and $\gamma^*$ are, respectively, the optimal value and solution to a two-variable linear program with $n$ constraints.
Applying the algorithm by \citet{megiddo1983linear} for two-variable linear programming allows us to solve for $\xi^*$ and $\gamma^*$ in $O(n)$ time.
\end{proof}

\begin{corollary}
\label{cor:diagonal_opt}
There exists an algorithm which given $q_0$, $q_1$ satisfying Assumption \ref{as:alg} with $A_0$ and $A_1$ diagonal and error parameter $\epsilon>0$, outputs $\bar x\in\R^n$ such that
\begin{align*}
q_0(\bar x) &\leq \Opt + \epsilon\\
q_1(\bar x) &= 0.
\end{align*}
This algorithm runs in time
\begin{align*}
O\left(\frac{n\kappa^*\sqrt{\zeta^*}}{\sqrt{\epsilon}}\right).
\end{align*}
\end{corollary}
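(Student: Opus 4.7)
The plan is to specialize Algorithms~\ref{alg:GTRS_outer} and~\ref{alg:rounding} to exploit the fact that when $A_0$ and $A_1$ are diagonal we can compute $\gamma_\pm$ and zero eigenvectors of $A(\gamma_\pm)$ exactly and cheaply, thereby eliminating the two approximate subroutines (ApproxGammaPlus and ApproxEig) that dominate the general running time. First I would invoke Lemma~\ref{lem:computing_diag_regularity} to compute $\gamma_-, \gamma_+, \xi^*, \zeta^*$, and $\gamma^*$ exactly in $O(n)$ time, and simultaneously record an index $i_\pm \in [n]$ achieving the minimum in $\lambda_{\min}(A(\gamma_\pm)) = \min_i (a_{0,i} + \gamma_\pm a_{1,i}) = 0$; the corresponding standard basis vector $e_{i_\pm}$ is then an exact zero eigenvector of $A(\gamma_\pm)$.

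With $\tilde\gamma_\pm \coloneqq \gamma_\pm$ taken exactly, the perturbation analysis of Section~\ref{subsec:alg_delta} becomes vacuous: $\widetilde\Opt = \Opt$ and there is no $\delta$-penalty to absorb. Next I would run the Nesterov scheme (Algorithm~\ref{alg:nesterov}) on
\[
\min_{x\in\R^n} \max\{q(\gamma_-, x), q(\gamma_+, x)\}
\]
to accuracy $\epsilon/2$; by Corollary~\ref{cor:min_max_runtime} this costs $O(N\kappa^*\sqrt{\zeta^*}/\sqrt{\epsilon})$, and since $A_0, A_1$ are diagonal we have $N \leq 2n$, giving $O(n\kappa^*\sqrt{\zeta^*}/\sqrt{\epsilon})$. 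Let $\tilde x$ be the returned iterate.

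For rounding, I would follow the scheme of Algorithm~\ref{alg:rounding}, but using the exact zero eigenvector $d \coloneqq e_{i_\pm}$ (chosen according to the sign of $q_1(\tilde x)$) in place of the output of ApproxEig. Because $d^\top A(\gamma_\pm) d = 0$ exactly, the upper bound \eqref{eq:q_bar_x_upper_bound} in the proof of the rounding guarantee improves to equality $q(\gamma_\pm, \tilde x + \alpha d) = q(\gamma_\pm, \tilde x) + \alpha e$, so the quadratic equation in step~4.(d) of Algorithm~\ref{alg:rounding} reduces to a one-dimensional quadratic in $\alpha$ that I can solve in closed form. The resulting $\bar x = \tilde x + \alpha d$ then satisfies $q_1(\bar x) = 0$ exactly and $q_0(\bar x) \leq \widetilde\Opt + \epsilon/2 = \Opt + \epsilon/2 \leq \Opt + \epsilon$, mimicking case~(ii) of the proof of the main rounding theorem but with the error terms proportional to $\delta/\kappa$ dropped. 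Each step of the rounding is $O(n)$, so the overall running time remains $O(n\kappa^*\sqrt{\zeta^*}/\sqrt{\epsilon})$.

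The only mildly subtle point, and arguably the main obstacle, is verifying that the rounding step genuinely succeeds with the exact eigenvector (i.e., that the quadratic equation defining $\alpha$ has a nonnegative real root when $e \leq 0$). This reduces to re-running the case analysis of Lemma~\ref{lem:pd:Ssupset} with $\hat x = \tilde x$: one checks that $q(\gamma_-, \tilde x_\alpha) - q(\gamma_+, \tilde x_\alpha)$ is strongly convex in $\alpha$ (the strong convexity constant is $\xi^* > 0$ since $d^\top A(\gamma_\mp) d > 0$ under Assumption~\ref{as:pd}, cf.\ Remark~\ref{rem:assumptions}) and that at $\alpha = 0$ it has the correct sign determined by the sign of $q_1(\tilde x)$. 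The intermediate value theorem then produces the required $\alpha \geq 0$, and the existing bounds on $\norm{\tilde x}$ from Lemma~\ref{lem:locating_tilde_x} give a polynomial upper bound on $\alpha$, ensuring the closed-form computation is numerically benign.
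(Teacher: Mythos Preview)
Your proposal is correct and follows essentially the same approach as the paper: compute $\gamma_\pm$ and the zero eigenvectors exactly via Lemma~\ref{lem:computing_diag_regularity}, solve the exact convex reformulation with Algorithm~\ref{alg:nesterov} (invoking Corollary~\ref{cor:min_max_runtime} and $N\le 2n$), and then round exactly via the procedure of Lemma~\ref{lem:pd:Ssupset}. The only minor difference is that since the rounding is exact you may solve the convex problem to accuracy $\epsilon$ rather than $\epsilon/2$; this does not affect the asymptotic bound.
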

\begin{proof}
When $A_0$ and $A_1$ are diagonal we have $N \leq 2n$.
By Lemma~\ref{lem:computing_diag_regularity}, we can compute all of the quantities needed for the exact convex reformulation in $O(n)$ time.
Algorithm \ref{alg:nesterov} can then be applied to the exact convex reformulation to find $\tilde x\in\R^n$ with
\begin{align*}
\max\set{q(\gamma_-, \tilde x),q(\gamma_+,\tilde x)} \leq \Opt + \epsilon.
\end{align*}
We can further carry out the modification procedure of Lemma~\ref{lem:pd:Ssupset} exactly in $O(n)$ time.

The running time of this algorithm follows from Corollary~\ref{cor:min_max_runtime}.
\end{proof}

\section*{Acknowledgments}
This research is supported in part by NSF grant CMMI 1454548.

{
\bibliographystyle{plainnat}

\begin{thebibliography}{37}
\providecommand{\natexlab}[1]{#1}
\providecommand{\url}[1]{\texttt{#1}}
\expandafter\ifx\csname urlstyle\endcsname\relax
  \providecommand{\doi}[1]{doi: #1}\else
  \providecommand{\doi}{doi: \begingroup \urlstyle{rm}\Url}\fi

\bibitem[Adachi and Nakatsukasa(2019)]{adachi2019eigenvalue}
Satoru Adachi and Yuji Nakatsukasa.
\newblock Eigenvalue-based algorithm and analysis for nonconvex {QCQP} with one
  constraint.
\newblock \emph{Mathematical Programming}, 173\penalty0 (1):\penalty0 79--116,
  2019.

\bibitem[Ben-Tal and den Hertog(2014)]{BenTalDenHertog2014}
Aharon Ben-Tal and Dick den Hertog.
\newblock Hidden conic quadratic representation of some nonconvex quadratic
  optimization problems.
\newblock \emph{Mathematical Programming}, 143\penalty0 (1):\penalty0 1--29,
  2014.

\bibitem[Ben-Tal and Teboulle(1996)]{BenTalTeboulle1996}
Aharon Ben-Tal and Marc Teboulle.
\newblock Hidden convexity in some nonconvex quadratically constrained
  quadratic programming.
\newblock \emph{Mathematical Programming}, 72\penalty0 (1):\penalty0 51–63,
  1996.

\bibitem[Buchheim et~al.(2013)Buchheim, De~Santis, Palagi, and
  Piacentini]{BuchheimDSPP2013}
Christoph Buchheim, Marianna De~Santis, Laura Palagi, and Mauro Piacentini.
\newblock An exact algorithm for nonconvex quadratic integer minimization using
  ellipsoidal relaxations.
\newblock \emph{SIAM Journal on Optimization}, 23\penalty0 (3):\penalty0
  1867--1889, 2013.

\bibitem[Burer and K{\i}l{\i}n\c{c}-Karzan(2017)]{BKK14}
Samuel Burer and Fatma K{\i}l{\i}n\c{c}-Karzan.
\newblock How to convexify the intersection of a second order cone and a
  nonconvex quadratic.
\newblock \emph{Mathematical Programming}, 162\penalty0 (1):\penalty0 393--429,
  2017.

\bibitem[Conn et~al.(2000)Conn, Gould, and Toint]{Conn.et.al.2000}
Andrew~R. Conn, Nicholas I.~M. Gould, and Phillippe~L. Toint.
\newblock \emph{Trust Region Methods}.
\newblock MOS-SIAM Series on Optimization. SIAM, Philadelphia, PA, USA, 2000.

\bibitem[Fallahi et~al.(2018)Fallahi, Salahi, and
  Terlaky]{fallahi2018minimizing}
S.~Fallahi, M.~Salahi, and T.~Terlaky.
\newblock Minimizing an indefinite quadratic function subject to a single
  indefinite quadratic constraint.
\newblock \emph{Optimization}, 67\penalty0 (1):\penalty0 55--65, 2018.

\bibitem[Feng et~al.(2012)Feng, Lin, Sheu, and Xia]{feng2012duality}
Joe-Mei Feng, Gang-Xuan Lin, Reuy-Lin Sheu, and Yong Xia.
\newblock Duality and solutions for quadratic programming over single
  non-homogeneous quadratic constraint.
\newblock \emph{Journal of Global Optimization}, 54\penalty0 (2):\penalty0
  275--293, 2012.

\bibitem[Fortin and Wolkowicz(2004)]{FortinWolkowicz2004}
Charles Fortin and Henry Wolkowicz.
\newblock The {Trust Region Subproblem} and semidefinite programming.
\newblock \emph{Optimization Methods and Software}, 19\penalty0 (1):\penalty0
  41--67, 2004.

\bibitem[Fradkov and Yakubovich(1979)]{FradkovYakubovich1979}
Alexander~L. Fradkov and Vladimir~A. Yakubovich.
\newblock The {S}-procedure and duality relations in nonconvex problems of
  quadratic programming.
\newblock \emph{Vestn. LGU, Ser. Mat., Mekh., Astron}, 6\penalty0 (1):\penalty0
  101--109, 1979.

\bibitem[Gould et~al.(1999)Gould, Lucidi, Roma, and Toint]{Gould_LRT_99}
Nicholas I.~M. Gould, Stefano Lucidi, Massimo Roma, and Philippe~L. Toint.
\newblock Solving the {Trust-Region Subproblem} using the {L}anczos method.
\newblock \emph{SIAM Journal on Optimization}, 9\penalty0 (2):\penalty0
  504--525, 1999.

\bibitem[Guo et~al.(2009)Guo, Higham, and Tisseur]{guo2009improved}
Chun-Hua Guo, Nicholas~J Higham, and Fran{\c{c}}oise Tisseur.
\newblock An improved arc algorithm for detecting definite hermitian pairs.
\newblock \emph{SIAM Journal on Matrix Analysis and Applications}, 31\penalty0
  (3):\penalty0 1131--1151, 2009.

\bibitem[Hazan and Koren(2016)]{HazanKoren2016}
Elad Hazan and Tomer Koren.
\newblock A linear-time algorithm for trust region problems.
\newblock \emph{Mathematical Programming}, 158\penalty0 (1):\penalty0 363--381,
  2016.

\bibitem[Ho-Nguyen and K{\i}l{\i}n{\c{c}}-Karzan(2018)]{Ho-NguyenKK2016RO}
Nam Ho-Nguyen and Fatma K{\i}l{\i}n{\c{c}}-Karzan.
\newblock Online first-order framework for robust convex optimization.
\newblock \emph{Operations Research}, 66\penalty0 (6):\penalty0 1670--1692,
  2018.

\bibitem[Ho-Nguyen and K{\i}l{\i}n\c{c}-Karzan(2017)]{Ho-NguyenKK2017}
Nam Ho-Nguyen and Fatma K{\i}l{\i}n\c{c}-Karzan.
\newblock A second-order cone based approach for solving the {Trust Region
  Subproblem} and its variants.
\newblock \emph{SIAM Journal on Optimization}, 27\penalty0 (3):\penalty0
  1485--1512, 2017.

\bibitem[Jiang and Li(2016)]{jiang2016simultaneous}
Rujun Jiang and Duan Li.
\newblock Simultaneous diagonalization of matrices and its applications in
  quadratically constrained quadratic programming.
\newblock \emph{SIAM Journal on Optimization}, 26\penalty0 (3):\penalty0
  1649--1668, 2016.

\bibitem[Jiang and Li(2019)]{jiang2019novel}
Rujun Jiang and Duan Li.
\newblock Novel reformulations and efficient algorithms for the {Generalized
  Trust Region Subproblem}.
\newblock \emph{SIAM Journal on Optimization}, 29\penalty0 (2):\penalty0
  1603--1633, 2019.

\bibitem[Jiang and Li(2020)]{jiang2020linear}
Rujun Jiang and Duan Li.
\newblock A linear-time algorithm for generalized trust region subproblems.
\newblock \emph{SIAM Journal on Optimization}, 30\penalty0 (1):\penalty0
  915--932, 2020.

\bibitem[Jiang et~al.(2018)Jiang, Li, and Wu]{jiang2018socp}
Rujun Jiang, Duan Li, and Baiyi Wu.
\newblock {SOCP} reformulation for the {Generalized Trust Region Subproblem}
  via a canonical form of two symmetric matrices.
\newblock \emph{Mathematical Programming}, 169\penalty0 (2):\penalty0 531--563,
  2018.

\bibitem[K{\i}l{\i}n\c{c}-Karzan and Y{\i}ld{\i}z(2015)]{KKY15}
Fatma K{\i}l{\i}n\c{c}-Karzan and Sercan Y{\i}ld{\i}z.
\newblock Two-term disjunctions on the second-order cone.
\newblock \emph{Mathematical Programming}, 154\penalty0 (1):\penalty0 463--491,
  2015.

\bibitem[Kuczynski and Wozniakowski(1992)]{KuczynskiWozniakowski1992estimating}
J.~Kuczynski and H.~Wozniakowski.
\newblock Estimating the largest eigenvalue by the power and lanczos algorithms
  with a random start.
\newblock \emph{SIAM Journal on Matrix Analysis and Applications}, 13\penalty0
  (4):\penalty0 1094--1122, 1992.

\bibitem[Locatelli(2015)]{locatelli2015some}
Marco Locatelli.
\newblock Some results for quadratic problems with one or two quadratic
  constraints.
\newblock \emph{Operations Research Letters}, 43\penalty0 (2):\penalty0
  126--131, 2015.

\bibitem[Megiddo(1983)]{megiddo1983linear}
Nimrod Megiddo.
\newblock Linear-time algorithms for linear programming in {$\R^3$} and related
  problems.
\newblock \emph{SIAM Journal on Computing}, 12\penalty0 (4):\penalty0 759--776,
  1983.

\bibitem[Modaresi and Vielma(2017)]{modaresi2017convex}
Sina Modaresi and Juan~Pablo Vielma.
\newblock Convex hull of two quadratic or a conic quadratic and a quadratic
  inequality.
\newblock \emph{Mathematical Programming}, 164\penalty0 (1-2):\penalty0
  383--409, 2017.

\bibitem[Mor{\'e}(1993)]{more1993generalizations}
Jorge~J. Mor{\'e}.
\newblock Generalizations of the {Trust Region Problem}.
\newblock \emph{Optimization methods and Software}, 2\penalty0 (3-4):\penalty0
  189--209, 1993.

\bibitem[Mor{\'e} and Sorensen(1983)]{More_Sorensen_83}
Jorge~J. Mor{\'e} and Danny~C. Sorensen.
\newblock Computing a trust region step.
\newblock \emph{SIAM Journal on Scientific and Statistical Computing},
  4\penalty0 (3):\penalty0 553--572, 1983.

\bibitem[Nesterov(2018)]{nesterov2018lectures}
Yurii Nesterov.
\newblock \emph{Lectures on convex optimization (2nd Ed.)}.
\newblock Springer Optimization and Its Applications. Springer International
  Publishing, Basel, Switzerland, 2018.

\bibitem[Nocedal and Wright(2006)]{NocedalWright2000numerical}
Jorge Nocedal and Stephen~J. Wright.
\newblock \emph{Numerical Optimization}.
\newblock Springer Series in Operations Research and Financial Engineering.
  Springer, New York, NY, USA, 2006.

\bibitem[P{\'o}lik and Terlaky(2007)]{PolikTerlaky2007}
Imre P{\'o}lik and Tam{\'a}s Terlaky.
\newblock A survey of the {S}-lemma.
\newblock \emph{SIAM Review}, 49\penalty0 (3):\penalty0 371--418, 2007.

\bibitem[Pong and Wolkowicz(2014)]{PongWolkowicz2014}
Ting~Kei Pong and Henry Wolkowicz.
\newblock The {Generalized Trust Region Subproblem}.
\newblock \emph{Computational Optimization and Applications}, 58\penalty0
  (2):\penalty0 273--322, 2014.

\bibitem[Rendl and Wolkowicz(1997)]{Rendl_Wolkowicz_97}
Franz Rendl and Henry Wolkowicz.
\newblock A semidefinite framework for trust region subproblems with
  applications to large scale minimization.
\newblock \emph{Mathematical Programming}, 77\penalty0 (2):\penalty0 273--299,
  1997.

\bibitem[Salahi and Taati(2018)]{salahi2018efficient}
Maziar Salahi and Akram Taati.
\newblock An efficient algorithm for solving the {Generalized Trust Region
  Subproblem}.
\newblock \emph{Computational and Applied Mathematics}, 37\penalty0
  (1):\penalty0 395--413, 2018.

\bibitem[Stern and Wolkowicz(1995)]{SternWolkowicz1995}
Ronald~J. Stern and Henry Wolkowicz.
\newblock Indefinite trust region subproblems and nonsymmetric eigenvalue
  perturbations.
\newblock \emph{SIAM Journal on Optimization}, 5\penalty0 (2):\penalty0
  286--313, 1995.

\bibitem[Wang and K{\i}l{\i}n\c{c}-Karzan(2019)]{wang2019tightness}
Alex~L. Wang and Fatma K{\i}l{\i}n\c{c}-Karzan.
\newblock {On the tightness of SDP relaxations of QCQPs}.
\newblock Technical Report arXiv:1911.09195, ArXiV, 2019.
\newblock URL \url{https://arxiv.org/abs/1911.09195}.

\bibitem[Wang and Xia(2017)]{WangXia2016}
Jiulin Wang and Yong Xia.
\newblock A linear-time algorithm for the {Trust Region Subproblem} based on
  hidden convexity.
\newblock \emph{Optimization Letters}, 11\penalty0 (8):\penalty0 1639--1646,
  2017.

\bibitem[Yang et~al.(2018)Yang, Anstreicher, and Burer]{yang2018quadratic}
Boshi Yang, Kurt Anstreicher, and Samuel Burer.
\newblock Quadratic programs with hollows.
\newblock \emph{Mathematical Programming}, 170\penalty0 (2):\penalty0 541--553,
  2018.

\bibitem[Y{\i}ld{\i}ran(2009)]{yildiran2009convex}
U{\u{g}}ur Y{\i}ld{\i}ran.
\newblock Convex hull of two quadratic constraints is an {LMI} set.
\newblock \emph{IMA Journal of Mathematical Control and Information},
  26\penalty0 (4):\penalty0 417--450, 2009.

\end{thebibliography}

}

\appendix
\section{Proofs of Theorems \ref{thm:pd_implies_conv_hull_general} and \ref{thm:psd_implies_conv_hull_general}}
\label{app:general_conv_hull_proofs}

In this appendix, we outline how to modify the proofs of Theorems~\ref{thm:pd_implies_conv_hull}~and~\ref{thm:psd_implies_conv_hull} to prove Theorems~\ref{thm:pd_implies_conv_hull_general}~and~\ref{thm:psd_implies_conv_hull_general}.

\pdGeneralConvHull*
\begin{proof}
The ``$\subseteq$'' inclusions follow from a trivial modification of Lemma~\ref{lem:pd:Ssubset}. It suffices to prove the ``$\supseteq$'' inclusions.
The case where $A_0$ and $A_1$ are both nonconvex is covered by Theorem~\ref{thm:pd_implies_conv_hull}. We consider the four remaining cases:
\begin{itemize}
	\item Suppose $A_0$ and $A_1$ are both convex.
	In this case, $\Gamma = [0,\infty)$ and it suffices to show that $\conv(\cS) = \set{(x,t):\, q_0(x)\leq t,\, q_1(x) \leq 0} = \cS$. This holds as $\cS$ is convex.

	\item Suppose $A_0$ is nonconvex and $A_1$ is convex. In this case, $\Gamma = [\gamma_-, \infty)$ is unbounded above. Furthermore, $\gamma_-$ is positive and $A(\gamma_-)$ has a zero eigenvalue. Suppose $(\hat x,\hat t)$ satisfies $q(\gamma_-, \hat x)\leq \hat t$ and $q_1(\hat x)\leq 0$. If $q_1(\hat x) = 0$, then we also have $q_0(\hat x) = q(\gamma_-,\hat x) \leq \hat t$, whence $(\hat x,\hat t)\in\cS$. On the other hand, if $q_1(\hat x) < 0$, we may apply the argument in case (iii) in the proof of Lemma~\ref{lem:pd:Ssupset} verbatim (after replacing all occurrences of $\gamma_+$ by $\gamma^*$) to conclude that $(\hat x,\hat t)\in\conv(\cS)$.

	\item Suppose $A_0$ is convex and $A_1$ is nonconvex. In this case, $\Gamma = [0,\gamma_+]$ is bounded above and $\gamma_-$ is defined to be $\gamma_-=0$. Furthermore, $A(\gamma_+)$ has a zero eigenvalue. Suppose $(\hat x,\hat t)\in\cS(\gamma_-)\cap\cS(\gamma_+)$. If $q_1(\hat x) \leq 0$, then we also have $q_0(\hat x) = q(\gamma_-,\hat x) \leq \hat t$, whence $(\hat x,\hat t)\in\cS$. On the other hand, if $q_1(\hat x) > 0$, we may apply the argument in case (ii) in the proof of Lemma~\ref{lem:pd:Ssupset} verbatim to conclude that $(\hat x,\hat t)\in\conv(\cS)$.\qedhere
\end{itemize}
\end{proof}

We will prove Theorem~\ref{thm:psd_implies_conv_hull_general} using a limiting argument and reducing it to Theorem~\ref{thm:pd_implies_conv_hull_general}. The proof follows that of Lemma~\ref{lem:psd:Ssupset} almost verbatim.

\psdGeneralConvHull*
\begin{proof}
The ``$\subseteq$'' inclusions follow from a trivial modification of Lemma~\ref{lem:psd:Ssubset}. It suffices to prove the ``$\supseteq$'' inclusions.

Denote the set on the right hand side by $\cR$, i.e., $\cR \coloneqq \cS(\gamma_-) \cap\cS(\gamma_+)$ when $\Gamma$ is bounded and $\cR\coloneqq \cS(\gamma_-)\cap\set{(x,t):\, q_1(x)\leq 0}$ when $\Gamma$ is unbounded.

Let $(\hat x,\hat t)\in\cR$. It suffices to show that $(\hat x,\hat t+\epsilon)\in\conv(\cS)$ for all $\epsilon>0$.

We will perturb $A_0$ slightly to create a new instance of the problem. Let $\delta>0$ to be picked later. Define $A_0' = A_0 +\delta I_n$ and let all remaining data be unchanged, i.e.,
\begin{align*}
q_0'(x) &\coloneqq x^\top A_0' x + 2b_0'^\top x + c_0' \coloneqq x^\top (A_0+\delta I_n) x + 2b_0^\top x + c_0\\
q_1'(x) &\coloneqq x^\top A_1' x + 2b_1'^\top x + c_1' \coloneqq x^\top A_1 x + 2b_1^\top x + c_1.
\end{align*}
We will denote all quantities related to the perturbed system with an apostrophe.

We claim it suffices to show that there exists $\delta>0$ small enough such that $(\hat x,\hat t+\epsilon)\in\cR'$. Indeed, suppose this is the case. Note that for any $x\in\R^n$, we have $q_1(x) = q_1'(x)$ and $q_0(x) \leq q_0'(x)$. Hence, $\conv(\cS')\subseteq\conv(\cS)$.
Then, noting that $A'(\gamma^*) = A(\gamma^*) + \delta I_n\succ 0$, we may apply Theorem~\ref{thm:pd_implies_conv_hull_general} to the perturbed system to get $(\hat x,\hat t+\epsilon)\in\cR' = \conv(\cS')\subseteq\conv(\cS)$ as desired.

First note that $A_1 = A_1'$ so that $\Gamma$ is bounded if and only if $\Gamma'$ is bounded. We will then pick $\delta>0$ small enough such that
\begin{align*}
\delta\norm{\hat x}^2 \leq \frac{\epsilon}{2},\qquad \abs{\gamma_-' -\gamma_-}\abs{q_1(\hat x)}\leq \frac{\epsilon}{2},\qquad \abs{\gamma_+' -\gamma_+}\abs{q_1(\hat x)}\leq \frac{\epsilon}{2},
\end{align*}
where the last condition is only required when $\gamma_+$ and $\gamma_+'$ both exist. This is possible as the expression on the left of each inequality is continuous in $\delta$ and is strictly satisfied when $\delta =0$.

The following computation shows that $q'(\gamma_-', \hat x) \leq \hat t + \epsilon$.
\begin{align*}
q'(\gamma_-', \hat x) -(\hat t+ \epsilon)&= q'(\gamma_-, \hat x) -(\hat t+ \epsilon) + (\gamma_-'  - \gamma_-) q_1(\hat x)\\
&\leq q(\gamma_-, \hat x) + \delta \norm{\hat x}^2 -(\hat t+ \epsilon) + \abs{\gamma_-' - \gamma_-}\abs{q_1(\hat x)}\\
&\leq q(\gamma_-,\hat x) - \hat t\\
&\leq 0
\end{align*}
The first inequality follows by noting $q'(\gamma,x) = q(\gamma,x) + \delta\norm{x}^2$, the second inequality follows from our assumptions on $\delta$, and the third inequality follows from the assumption that $(\hat x, \hat t)\in \cS(\gamma_-)$. Thus $(\hat x,\hat t+\epsilon)\in\cS'(\gamma_-')$.
When $\Gamma$ is bounded (or equivalently, when $\gamma_+'$ and $\gamma_+$ exist), a similar calculation shows that $q'(\gamma_+',\hat x) - (\hat t + \epsilon)\leq 0$ so that $(\hat x,\hat t+\epsilon)\in\cS'(\gamma'_+)$.
Finally, when $\Gamma$ is unbounded we have $q_1'(\hat x) = q_1(\hat x)\leq 0$ so that $(\hat x,\hat t+\epsilon)\in\set{(x,t):\, q_1'(x)\leq 0}$. Thus, $(\hat x,\hat t+\epsilon)$ is in $\cR'$, concluding the proof.
\end{proof} \section{Estimation of the regularity parameters}
\label{app:estimating_regularity}

In Section \ref{sec:alg} we gave algorithms to solve the GTRS assuming that we had access to $(\xi,\zeta)$ and $\hat\gamma$ satisfying Assumption \ref{as:alg_regularity}.
In this appendix, we show how to compute these quantities.

Let $q_0,q_1$ satisfy Assumption \ref{as:alg}. Recall the definitions
\begin{align*}
\xi^* &\coloneqq \min\set{1,\max_{\gamma\geq 0} \lambda_{\min}(A(\gamma))},\hspace{2em}
\zeta^* \coloneqq \max\set{1,\gamma_+}.
\end{align*}
We will find $(\xi,\zeta)$ satisfying
\begin{align*}
\xi^*/4\leq \xi\leq \xi^*,\hspace{2em}
\zeta^* \leq \zeta \leq 4\zeta^*
\end{align*}
and a $\hat\gamma$ such that $\lambda_{\min}(A(\hat\gamma))\geq \xi$.

We will accomplish this in two stages. We begin by estimating $\xi^*$ using only an upper bound $\bar\zeta$ of $\zeta^*$. Then using our estimate $\xi$ we will compute $\zeta$.

\subsection{Computing $\xi$ and $\hat\gamma$}
\begin{algorithm}
	\caption{$\text{TestXi}(q_0,q_1,\xi,\bar\zeta, p_{\xi})$}
	\label{alg:test_xi}
	Given $q_0,q_1$ satisfying Assumption \ref{as:alg}, a guess $\xi$, an upper bound $\bar\zeta\geq\zeta^*$, and a failure probability $p_{\xi}>0$
	\begin{enumerate}
		\item Let $s_0 = 0$ and $t_0 = \bar\zeta$
		\item Let $T = \ceil{\log\kappa}+2$ where $\kappa={\bar\zeta / \xi}$
		\item For $k = 0,\dots,T-1$
		\begin{enumerate}
			\item Let $x = \text{ApproxEig}(A(s_k),2\bar\zeta, \xi/4, \frac{p_\xi}{3T})$. If $x^\top A(s_k)x \geq 3\xi/4$, then return $\hat\gamma = s_k$.
			\item  Let $x = \text{ApproxEig}(A(t_k),2\bar\zeta, \xi/4, \frac{p_\xi}{3T})$. If $x^\top A(t_k)x \geq 3\xi/4$, then return $\hat\gamma =t_k$.
			\item Let $\bar\gamma = (s_k+t_k)/2$
			\item Let $x = \text{ApproxEig}(A(\bar\gamma),2\bar\zeta,\xi/4, \frac{p_\xi}{3T})$. If $x^\top A(\bar\gamma)x\geq 3\xi/4$, then return $\hat\gamma =\bar\gamma$.
			\item If $x^\top A_1 x\geq 0$, let $s_{k+1} =\bar \gamma$ and $t_{k+1} = t_k$. Else, let $s_{k+1} = s_k$ and $t_{k+1} = \bar\gamma$.
		\end{enumerate}
		\item Return ``Fail''
	\end{enumerate}
\end{algorithm}
We start with the following guarantee for the algorithm TestXi (Algorithm~\ref{alg:test_xi}).
\begin{lemma}
\label{lem:test_xi}
Given $q_0,q_1$ satisfying Assumption \ref{as:alg}, an arbitrary $0<\xi\leq 1$, an upper bound $\bar\zeta\geq\zeta^*$, and a failure probability $p_{\xi}>0$, TestXi (Algorithm \ref{alg:test_xi}) will output
\begin{align*}
\begin{cases}
	\hat\gamma \text{ such that } \lambda_{\min}(A(\hat\gamma))\geq \xi/2 & \text{if } \xi\leq \xi^*\\
	\hat\gamma \text{ such that } \lambda_{\min}(A(\hat\gamma))\geq \xi/2 \text{ or ``Fail''}  & \text{if } \xi^*<\xi\leq 2\xi^*\\
	\text{``Fail''} & \text{if }2\xi^*<\xi
\end{cases}
\end{align*}
with probability $1-p_{\xi}$. This algorithm runs in time 
\begin{align*}
\tilde O\left(N\sqrt{\frac{\bar\zeta}{\xi}} \log\left(\frac{n}{p_\xi}\right)\log\left(\frac{\bar\zeta}{\xi}\right)\right).
\end{align*}
\end{lemma}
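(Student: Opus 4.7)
My plan is to split the analysis into three cases depending on how $\xi$ compares to $\xi^*$, plus a separate running time bound. As a preliminary step, I would union bound over the at most $3T$ calls to $\mathrm{ApproxEig}$ (each failing with probability $\leq p_\xi/(3T)$) and condition on every call succeeding, which occurs with probability at least $1-p_\xi$. Under this conditioning, each unit vector $x$ returned by $\mathrm{ApproxEig}(A(\gamma),2\bar\zeta,\xi/4,\cdot)$ satisfies $\lambda_{\min}(A(\gamma)) \leq x^\top A(\gamma)x \leq \lambda_{\min}(A(\gamma))+\xi/4$; the norm input is valid since $\norm{A(\gamma)}\leq 1+\gamma\leq 2\bar\zeta$ for $\gamma\in[0,\bar\zeta]$ (using $\bar\zeta\geq\zeta^*\geq 1$). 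An immediate consequence is that whenever a return is triggered by $x^\top A(\hat\gamma)x \geq 3\xi/4$, one has $\lambda_{\min}(A(\hat\gamma)) \geq 3\xi/4 - \xi/4 = \xi/2$, so every returned $\hat\gamma$ is automatically valid, which already handles the intermediate regime $\xi^*<\xi\leq 2\xi^*$. For the case $2\xi^*<\xi$, the conditioning forces $x^\top A(\gamma)x \leq \xi^*+\xi/4 < \xi/2+\xi/4 = 3\xi/4$ at every tested $\gamma$, so no iteration returns and the algorithm outputs ``Fail''.

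The heart of the argument is the $\xi\leq\xi^*$ case. Let $\gamma_{\max}$ be any maximizer of $\lambda_{\min}(A(\cdot))$ on $\R_+$; then $\lambda_{\min}(A(\gamma_{\max}))\geq \xi^*\geq\xi$, and since $\lambda_{\min}(A(\gamma))< 0$ outside $[\gamma_-,\gamma_+]$ while $\gamma_+\leq \zeta^*\leq\bar\zeta$, we have $\gamma_{\max}\in[0,\bar\zeta]=[s_0,t_0]$. I would then establish by induction the invariant that $\gamma_{\max}\in[s_k,t_k]$ at the start of every iteration $k$ prior to termination. The inductive step exploits the fact that $x^\top A(\gamma)x = x^\top A_0 x + \gamma\,x^\top A_1 x$ is linear in $\gamma$ for a fixed vector $x$: if step~(d) does not return, its certificate $x$ obeys $x^\top A(\bar\gamma)x < 3\xi/4 \leq \lambda_{\min}(A(\gamma_{\max}))$. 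When $x^\top A_1 x\geq 0$, this function is nondecreasing, so $x^\top A(\gamma)x$ remains strictly below $\lambda_{\min}(A(\gamma_{\max}))$ on $[s_k,\bar\gamma]$, ruling out that half-interval from containing $\gamma_{\max}$; the other sign is symmetric. To conclude termination within the $T$-iteration loop, I note that at iteration $k=T-1$ the invariant together with the choice $T=\lceil\log\kappa\rceil+2$ (where $\kappa=\bar\zeta/\xi$) yields $t_{T-1}-s_{T-1}\leq \bar\zeta/2^{T-1}\leq \xi/2$, so the midpoint satisfies $|\bar\gamma-\gamma_{\max}|\leq\xi/4$. By $1$-Lipschitzness of $\gamma\mapsto\lambda_{\min}(A(\gamma))$ (Weyl's inequality plus $\norm{A_1}\leq 1$), we obtain $\lambda_{\min}(A(\bar\gamma))\geq \xi-\xi/4 = 3\xi/4$, which forces the step~(d) test to pass and the algorithm to return. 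The running time bound is then immediate from Lemma~\ref{lem:approx_eig}: $O(T)$ iterations each costing $\tilde O\bigl(N\sqrt{\bar\zeta/\xi}\,\log(nT/p_\xi)\bigr)$, with $T=O(\log(\bar\zeta/\xi))$.

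The main subtlety I would be careful to highlight is that the correctness of the binary-search cut in step~(e) must not rely on the vector returned by $\mathrm{ApproxEig}$ being an accurate minimum eigenvector---only the soft bound $x^\top A(\bar\gamma)x \leq \lambda_{\min}(A(\bar\gamma)) + \xi/4$ is available. The trick is that $x^\top A(\gamma)x$ is simultaneously a pointwise upper bound on $\lambda_{\min}(A(\gamma))$ and an affine function of $\gamma$ for the fixed certificate $x$. Thus the very crude inequality $x^\top A(\bar\gamma)x<3\xi/4$ combined with the sign of the slope $x^\top A_1 x$ certifiably excludes half of the current interval from containing $\gamma_{\max}$, cleanly decoupling the correctness of the binary search from the approximation error in the eigenvalue routine.
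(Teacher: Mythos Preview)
Your proof is correct and follows essentially the same approach as the paper: condition on all $\mathrm{ApproxEig}$ calls succeeding, observe that any returned $\hat\gamma$ automatically satisfies $\lambda_{\min}(A(\hat\gamma))\geq\xi/2$, and for the $\xi\leq\xi^*$ case establish a binary-search invariant using the linearity of $\gamma\mapsto x^\top A(\gamma)x$ for the certificate vector $x$. The only cosmetic difference is that the paper tracks the interval $P=\{\gamma:\lambda_{\min}(A(\gamma))\geq 3\xi^*/4\}$ and derives a length contradiction, whereas you track the single point $\gamma_{\max}$ and invoke $1$-Lipschitzness at the final iteration; both are equivalent realizations of the same idea.
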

\begin{proof}
We condition on the event that ApproxEig succeeds every time it is called. By the union bound, this happens with probability at least $1-p_\xi$.

As we have conditioned on ApproxEig succeeding, any $\hat\gamma$ which is output by TestXi will satisfy
\begin{align*}
\lambda_{\min}(A(\hat\gamma))\geq 3\xi/4 - \xi/4 =\xi/2.
\end{align*}
It is clear that TestXi will output ``Fail'' if $\xi>2\xi^*$ as there does not exist any $\hat\gamma$ such that $\lambda_{\min}(A(\hat\gamma))\geq\xi^*$.
It remains to show that, given $\xi\leq \xi^*$, TestXi will output some $\hat\gamma$.

For the sake of contradiction, assume that the algorithm fails to output in each of the $T$ rounds.
Let $P \coloneqq \set{\gamma :\, \lambda_{\min}(A(\gamma))\geq 3\xi^*/4}$.
Recall that $\lambda_{\min}(A(\gamma))$ is $1$-Lipschitz in $\gamma$.
As there exists some $\gamma$ such that $\lambda_{\min}(A(\gamma))\geq \xi^*$ (see Definition~\ref{def:regularity}), we conclude that $P$ is an interval of length at least $\xi^*/2$.

Note that $P\subseteq[s_0,t_0]$. We will inductively show that $P\subseteq[s_k,t_k]$ for each $k\in\set{1,\dots,T}$. Let $k\in\set{0,\dots,T-1}$ and let $s_k,\bar\gamma,t_k$ be defined as in the algorithm and let $x$ be the unit vector found in step 3.(d).
We claim that $x^\top A_1x\neq 0$. Indeed suppose $x^\top A_1x = 0$, then $x^\top A(\gamma)x  = x^\top A(\bar\gamma)x \leq 3\xi/4$ for all $\gamma$. This contradicts the assumption that there exists some $\gamma$ such that $\lambda_{\min}(A(\gamma))\geq \xi$.
Now suppose $\gamma\in P$, then
\begin{align*}
\frac{3\xi^*}{4} &\leq x^\top A(\gamma)x
= x^\top A(\bar\gamma)x + (\gamma-\bar\gamma)x^\top A_1 x 
\leq \frac{3\xi^*}{4} +(\gamma-\bar\gamma)x^\top A_1x,
\end{align*}
where the first inequality follows from $\gamma\in P$, and the last one from the fact that the algorithm did not output in iteration $k$ (and thus the if statement in step 3.(d) did not hold).
Thus, if $x^\top A_1x > 0$, then we have the implication $\gamma\in P \implies \gamma\geq \bar\gamma$. Similarly, if $x^\top A_1x<0$, then we have the implication $\gamma\in P\implies \gamma\leq \bar\gamma$.
Then by induction, we have $P\subseteq [s_{k+1},t_{k+1}]$.

We conclude that $P$, an interval of length at least $\xi^*/2$, is contained in $[s_T,t_T]$ an interval of length
\begin{align*}
t_T - s_T = \frac{t_0 - s_0}{2^T}
\leq \xi/4.
\end{align*}
Noting that $\xi\leq\xi^*$ gives us the desired contradiction.

The running time of this algorithm follows from Lemma \ref{lem:approx_eig}.
\end{proof}

Given a lower bound $\xi\leq \xi^*$, Lemma \ref{lem:test_xi} guarantees that TestXi will find a $\hat\gamma$ satisfying $\lambda_{\min}(A(\hat\gamma))\geq \xi/2$ with high probability. In order to make use of this lemma \textit{without} a lower bound on $\xi^*$, we will simply repeatedly call TestXi with decreasing guesses for $\xi$. Consider Algorithm \ref{alg:approx_xi}.

\begin{algorithm}
	\caption{$\text{ApproxXi}(q_0,q_1,\bar\zeta,p)$}
	\label{alg:approx_xi}
	Given $q_0,q_1$ satisfying Assumption \ref{as:alg}, an upper bound $\bar\zeta\geq\zeta^*$, and failure probability $p>0$
	\begin{enumerate}
		\item For $k = 1,2,\dots$
		\begin{enumerate}
			\item Run $\text{TestXi}(q_0, q_1,  2^{-(k-1)},  \bar\zeta, 2^{-k}p)$.
			\item If TestXi outputs ``Fail'' then continue.
			\item Else, let $\hat\gamma$ be the output of TestXi and let $\xi = 2^{-k}$; return $\xi$ and $\hat\gamma$.
		\end{enumerate}
	\end{enumerate}
\end{algorithm}

\begin{theorem}
Given $q_0,q_1$ satisfying Assumption \ref{as:alg}, an upper bound $\bar\zeta\geq\zeta^*$, and a failure probability $p>0$, ApproxXi (Algorithm \ref{alg:approx_xi}) will output $\xi$ and $\hat\gamma$ such that
\begin{align*}
\xi^*/4\leq \xi\leq\xi^*,\hspace{2em} \lambda_{\min}(A(\hat\gamma))\geq \xi
\end{align*}
and run in time
\begin{align*}
\tilde O\left(N\sqrt{\frac{\bar\zeta}{\xi^*}}\log\left(\frac{n}{p}\right)\log\left(\bar\zeta\right)\log\left(\frac{1}{\xi^*}\right)^3\right)
\end{align*}
with probability $1-p$.
\end{theorem}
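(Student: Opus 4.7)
The plan is to show that ApproxXi terminates (with high probability) in one of exactly two iterations surrounding the ``true'' scale $\xi^*$, and then to sum the per-iteration costs guaranteed by Lemma~\ref{lem:test_xi} as a geometric series.

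First, I would condition on the event $E$ that every invocation of TestXi inside the loop produces its guaranteed output. The $k$-th call is invoked with failure probability $2^{-k}p$, so the union bound gives $\Pr[E^c] \leq \sum_{k\geq 1} 2^{-k}p = p$. All subsequent analysis proceeds deterministically under $E$. Define $k^*$ to be the smallest positive integer with $2^{-(k^*-1)} \leq \xi^*$; by Assumption~\ref{as:alg} we have $\xi^*\leq 1$, so $k^*$ is well-defined and $k^* \leq \lceil \log_2(1/\xi^*)\rceil + 1$. By minimality of $k^*$, either $k^* = 1$ (in which case $\xi^*=1$), or $k^*\geq 2$ and $\xi^*/2 < 2^{-(k^*-1)} \leq \xi^*$.

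For correctness, apply the trichotomy of Lemma~\ref{lem:test_xi} to the guess $\xi_k = 2^{-(k-1)}$. If $k \leq k^*-2$, then $\xi_k \geq 2^{-(k^*-3)} = 4\cdot 2^{-(k^*-1)} > 2\xi^*$, so TestXi returns ``Fail''; if $k \geq k^*$, then $\xi_k \leq \xi^*$ and TestXi returns $\hat\gamma$ with $\lambda_{\min}(A(\hat\gamma)) \geq \xi_k/2 = 2^{-k} = \xi$. Thus the loop terminates at some $k_{\text{out}} \in \{k^*-1, k^*\}$ (and $k_{\text{out}} = k^*$ when $k^* = 1$). In either case, using $\xi^*/2 < 2^{-(k^*-1)} \leq \xi^*$, we directly check that $\xi = 2^{-k_{\text{out}}} \in (\xi^*/4, \xi^*]$, which gives $\xi^*/4 \leq \xi \leq \xi^*$ as required, and the output $\hat\gamma$ satisfies $\lambda_{\min}(A(\hat\gamma))\geq \xi$ by construction of TestXi.

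For the running time, Lemma~\ref{lem:test_xi} gives that the $k$-th iteration costs
\[
\tilde O\!\left(N\sqrt{\bar\zeta \cdot 2^{k-1}}\;\log\!\bigl(n 2^k / p\bigr)\;\log\!\bigl(\bar\zeta \cdot 2^{k-1}\bigr)\right).
\]
Summing $k=1,\dots, k^*$, the factor $\sqrt{2^{k-1}}$ makes this a geometric series dominated by its last term. Using $2^{k^*-1} < 2/\xi^*$ from minimality, the last term is bounded by $\tilde O(N\sqrt{\bar\zeta/\xi^*}\,(\log(n/p) + k^*)(\log\bar\zeta + k^*))$, and since $k^* = O(\log(1/\xi^*))$, the product of the two logarithmic factors is bounded by $O(\log(n/p)\log(\bar\zeta)\log(1/\xi^*)^2)$ after crude majorizations (each mixed term is absorbed into the stated product up to one extra factor of $\log(1/\xi^*)$). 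This yields the advertised bound $\tilde O\!\left(N\sqrt{\bar\zeta/\xi^*}\log(n/p)\log(\bar\zeta)\log(1/\xi^*)^3\right)$. The only mildly delicate step is handling the ``gray zone'' iteration $k = k^*-1$, where Lemma~\ref{lem:test_xi} permits either outcome; the argument above sidesteps this by noting that \emph{either} outcome at $k=k^*-1$ is acceptable, so no additional analysis is needed there.
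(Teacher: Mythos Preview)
Your proposal is correct and follows essentially the same approach as the paper's proof: condition on all TestXi calls succeeding via a union bound, identify the (at most two) iterations at which termination can occur using the trichotomy of Lemma~\ref{lem:test_xi}, verify the output guarantees in each case, and sum the per-iteration costs. The only cosmetic differences are that your $k^*$ is defined one larger than the paper's, and you bound the time sum as a geometric series (saving a factor of $k^*$) whereas the paper more crudely bounds it by $k^*$ times the largest term---both arrive at the stated $\log(1/\xi^*)^3$ bound.
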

\begin{proof}
We condition on the event that TestXi succeeds every time it is called. By the union bound, this happens with probability at least $1-p$.

Let $k^*\in\set{1,2,\dots}$ be such that $\xi^*/2\leq 2^{-k^*}<\xi^*$.
Then, as we have conditioned on TestXi succeeding, Lemma \ref{lem:test_xi} guarantees that TestXi$(q_0, q_1, 2^{-k},\bar\zeta, 2^{-(k+1)}p)$ outputs
\begin{align*}
\begin{cases}
	\hat\gamma \text{ such that } \lambda_{\min}(A(\hat\gamma))\geq 2^{-k} & \text{if } 2^{-k}\leq \xi^*/2\\
	\hat\gamma \text{ such that } \lambda_{\min}(A(\hat\gamma))\geq 2^{-k} \text{ or ``Fail''}  & \text{if } \xi^*/2<2^{-k}\leq \xi^*\\
	\text{``Fail''} & \text{if }\xi^*<2^{-k}.
\end{cases}
\end{align*}
Thus, TestXi will output ``Fail'' for every $k<k^*$ and will output $\hat\gamma$ either on round $k^*$ or $k^*+1$. We can then bound
\begin{align*}
\lambda_{\min}(A(\hat\gamma)) \geq 2^{-(k^* +1)}\geq \frac{\xi^*}{4}.
\end{align*}

We bound the run time of the algorithm as follows.
\begin{align*}
&\sum_{k=1}^{k^* +1} \tilde O\left(N\sqrt{\frac{\bar\zeta}{2^{-(k-1)}}} \log\left(\frac{n}{2^{-k}p}\right)\log\left(\frac{\bar\zeta}{2^{-(k-1)}}\right)\right)\\
&= \tilde O\left(k^{*3} N\sqrt{\frac{\bar\zeta}{2^{-k^*}}} \log\left(\frac{n}{p}\right)\log\left(\bar\zeta\right) \right)\\
&= \tilde O\left(N\sqrt{\frac{\bar\zeta}{\xi^*}} \log\left(\frac{n}{p}\right)\log\left(\bar\zeta\right) \log\left(\frac{1}{\xi^*}\right)^3\right).\qedhere
\end{align*}
\end{proof}

\subsection{Computing $\zeta$}

Recall the guarantee of the algorithm ApproxGammaPlus.
\ApproxGammaPlusLemma*

We will repeatedly call ApproxGammaPlus with different choices of $\delta$. Consider the algorithm ApproxZeta.

\begin{algorithm}
	\caption{$\text{ApproxZeta}(q_0, q_1, \xi,\bar\zeta,\hat\gamma, p)$}
	\label{alg:approx_zeta}
	Given $q_0,q_1$ satisfying Assumption \ref{as:alg}, $(\xi,\bar\zeta)$ and $\hat\gamma$ satisfying Assumption \ref{as:alg_regularity}, and failure probability $p>0$
	\begin{enumerate}
		\item For $k = 1,2,\dots$
		\begin{enumerate}
			\item Let $\hat\zeta_k$ be the output of $\text{ApproxGammaPlus}(q_0,q_1,\xi, 2^{-(k-1)}\bar\zeta,\hat\gamma, 2^{-(k+1)}\bar\zeta ,2^{-k}p)$
			\item If $\hat\zeta_k \leq 2^{-(k+1)}\bar\zeta $ then continue
			\item Else set $\zeta \coloneqq  2^{-(k-1)}\bar\zeta$; return $\zeta$
		\end{enumerate}
	\end{enumerate}
\end{algorithm}

\begin{theorem}
Given $q_0,q_1$ satisfying Assumption \ref{as:alg}, $(\xi,\bar\zeta)$ and $\hat\gamma$ satisfying Assumption \ref{as:alg_regularity}, and failure probability $p>0$, ApproxZeta (Algorithm \ref{alg:approx_zeta}) will output $\zeta$ such that
\begin{align*}
\zeta^*\leq \zeta\leq 4\zeta^*
\end{align*}
and run in time
\begin{align*}
\tilde O\left(\frac{N\sqrt{\zeta^*}}{\sqrt{\xi}}\log\left(\frac{n}{p}\right)\log\left(\frac{1}{\xi}\right)\log\left(\frac{\bar\zeta}{\zeta^*}\right)^2\right)
\end{align*}
with probability $1-p$.
\end{theorem}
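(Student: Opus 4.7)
The plan is to follow the same strategy as the proof of the ApproxXi theorem just above: condition on every call to the inner routine succeeding via a union bound, characterize the halting iteration, verify the output bounds, and then sum a geometric series of per-iteration costs.

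First, I would condition on the event that every invocation of ApproxGammaPlus in step~1.(a) succeeds. Iteration $k$ is allotted failure probability $2^{-k}p$, so by the union bound this event has probability at least $1-\sum_{k\geq 1}2^{-k}p = 1-p$. Throughout the analysis I write $\zeta_k := 2^{-(k-1)}\bar\zeta$ and $\delta_k := 2^{-(k+1)}\bar\zeta = \zeta_k/4$, so that at any iteration $k$ for which $\zeta_k \geq \zeta^*$ (and thus the call to ApproxGammaPlus satisfies Assumption~\ref{as:alg_regularity}), Lemma~\ref{lem:tilde_gamma} guarantees $\hat\zeta_k \in [\gamma_+ - \delta_k,\;\gamma_+]$.

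Next, I would identify the halting iteration by a two-sided threshold argument. If $\gamma_+ > 2\delta_k = 2^{-k}\bar\zeta$, then $\hat\zeta_k \geq \gamma_+ - \delta_k > \delta_k = 2^{-(k+1)}\bar\zeta$, so step~1.(b)'s test fails and the algorithm returns. Conversely, if $\gamma_+ \leq \delta_k$, then $\hat\zeta_k \leq \gamma_+ \leq \delta_k$ and the algorithm continues. Let $K$ denote the smallest $k$ with $\gamma_+ > 2^{-k}\bar\zeta$; then $K = O(\log(\bar\zeta/\zeta^*))$ and the algorithm halts by iteration $K$, with every call made up to and including iteration $K$ using a valid $\zeta_k \geq \gamma_+$. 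The returned value $\zeta = 2^{-(K-1)}\bar\zeta = 4\cdot 2^{-(K+1)}\bar\zeta < 4\gamma_+$, while validity of the call at iteration $K$ gives $\zeta \geq \gamma_+$. Combined with the $\max\{1,\gamma_+\}$ definition of $\zeta^*$ (with a small case analysis when $\gamma_+ < 1$), this yields $\zeta^* \leq \zeta \leq 4\zeta^*$.

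Finally, I would bound the running time by summing the per-iteration costs from Lemma~\ref{lem:tilde_gamma}. Using $\kappa_k = \zeta_k/\xi$, $\delta_k = \zeta_k/4$, and $p_k = 2^{-k}p$, the cost of call $k$ simplifies to
\[
\tilde O\!\left(\frac{N\sqrt{\zeta_k}}{\sqrt{\xi}}\log\!\left(\frac{n}{p}\right)\log\!\left(\frac{1}{\xi}\right)\right),
\]
and summing this geometric series over $k = 1,\ldots,K$, with $K = O(\log(\bar\zeta/\zeta^*))$, yields the stated bound. The main obstacle is the careful bookkeeping needed to (i) confirm the output satisfies $\zeta\geq\zeta^*$ uniformly over the two regimes $\gamma_+\geq 1$ and $\gamma_+ < 1$, since $\zeta^*=\max\{1,\gamma_+\}$ is not directly controlled by the halting criterion, and (ii) convert the per-iteration costs (which individually already contain a $\log(\kappa_k/\delta_k)$ factor) together with the $O(\log(\bar\zeta/\zeta^*))$ iteration count into the squared-logarithm factor $\log(\bar\zeta/\zeta^*)^2$ appearing in the final running time.
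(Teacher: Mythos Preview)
Your proposal is correct and follows essentially the same approach as the paper: condition on all calls to ApproxGammaPlus succeeding via a union bound, argue inductively (equivalently, via your threshold on $\gamma_+$ versus $\delta_k$) that every call reached is made with a valid regularity parameter, extract the output bounds from the halting/non-halting conditions of step~1.(b), define the worst-case halting index $K$ (the paper calls it $k^*$) in terms of $\log(\bar\zeta/\zeta^*)$, and sum the geometric series of per-iteration costs from Lemma~\ref{lem:tilde_gamma}. The paper phrases the analysis in terms of $\zeta^*$ directly rather than $\gamma_+$, but the two arguments are interchangeable; in particular, the paper does not perform the $\gamma_+\gtrless 1$ case analysis you flag in obstacle~(i) either and instead writes the chain of inequalities directly in terms of $\zeta^*$.
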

\begin{proof}
We condition on the event that ApproxGammaPlus succeeds every time it is called. By the union bound, this happens with probability at least $1-p$.

We first check that the assumptions of Lemma \ref{lem:tilde_gamma} hold. For $k = 1$, we have $2^{-(k-1)}\bar\zeta = \bar\zeta\geq \zeta^*$. Then by induction, and conditioning on ApproxGammaPlus succeeding, Lemma \ref{lem:tilde_gamma} guarantees
\begin{align*}
\zeta^* \leq \hat\zeta_k + 2^{-(k+1)}\bar\zeta.
\end{align*}
If ApproxZeta fails to terminate in round $k$, then 1.(b) ensures $\hat\zeta_k\leq 2^{-(k+1)}\bar\zeta$. This in turn implies that $\zeta^* \leq 2^{-((k+1)-1)}\bar\zeta$ and, by induction, the assumptions of Lemma~\ref{lem:tilde_gamma} hold in every round that ApproxGammaPlus is called.

Let $k$ be the round in which the algorithm terminates. If $k = 1$, then the guarantee of Lemma \ref{lem:tilde_gamma} implies $\zeta^*\geq\hat\zeta_1$, whence
\begin{align*}
\bar\zeta\geq\zeta^*\geq \hat\zeta_1 > \frac{1}{4}\bar\zeta.
\end{align*}
Thus, we may assume $k\geq 2$.
The condition of step 1.(b) then guarantees the two inequalities
\begin{align}
\label{eq:hat_zeta_guarantees}
\hat\zeta_{k-1} \leq 2^{-k}\bar\zeta,
\,\text{ and }\,
\hat\zeta_k > 2^{-(k+1)}\bar \zeta.
\end{align}
Then, we have
\begin{align*}
\zeta^* &\geq \hat\zeta_k
> 2^{-(k+1)}\bar\zeta 
= \frac{1}{4}\left(2^{-k}\bar\zeta + 2^{-k}\bar\zeta\right)
\geq \frac{1}{4} \left(\hat\zeta_{k-1} +2^{-((k-1)+1)}\bar\zeta\right)
\geq\zeta^*/4
\end{align*}
where the first and fifth relations follow from Lemma \ref{lem:tilde_gamma} and the second and fourth relations follow from \eqref{eq:hat_zeta_guarantees} above.

It remains to bound the run time of ApproxZeta.
Let $k^*\in\set{1,2,\dots}$ be such that $\zeta^* \leq 2^{-(k^*-1)}\bar\zeta < 2\zeta^*$. We show that ApproxZeta terminates within $k^*$ rounds. Suppose ApproxZeta reaches the $k^*$th round. Then, we have 
\begin{align*}
\hat\zeta_{k^*} &\geq \zeta^* - 2^{-(k^*+1)}\bar\zeta
> 2^{-k^*}\bar\zeta - 2^{-(k^*+1)}\bar\zeta
= 2^{-(k^* +1)}\bar\zeta,
\end{align*}
where we used Lemma \ref{lem:tilde_gamma} in the first relation, and the definition of $k^*$ in the second relation. Therefore, ApproxZeta terminates in round $k^*$  at the latest and we can bound the run time of this algorithm as
\begin{align*}
&\sum_{k=1}^{k^*} \tilde O\left(\frac{2^{-(k-1)}N\bar\zeta}{\sqrt{2^{-(k+1)}\xi  \bar\zeta}}\log\left(\frac{n}{2^{-k}p}\right)\log\left(\frac{2^{-(k-1)}\bar\zeta}{2^{-(k+1)}\xi\bar\zeta}\right)\right)\\
&=\tilde O\left(k^{*2}\frac{N\sqrt{2^{-k^*}\bar\zeta}}{\sqrt{\xi}}\log\left(\frac{n}{p}\right)\log\left(\frac{1}{\xi}\right)\right)\\
&=\tilde O\left(\frac{N\sqrt{\zeta^*}}{\sqrt{\xi}}\log\left(\frac{n}{p}\right)\log\left(\frac{1}{\xi}\right)\log\left(\frac{\bar\zeta}{\zeta^*}\right)^2\right).\qedhere
\end{align*}
\end{proof}

\end{document}